\DeclareMathOperator{\Shift}{Shift}
\title{Conjugate queries can help}
\author{
    Ewin Tang\thanks{UC Berkeley. \texttt{\{ewin,jswright\}@berkeley.edu}} 
    \and John Wright\footnotemark[1]
    \and Mark Zhandry\thanks{Stanford University \& NTT Research. \texttt{mzhandry@stanford.edu}}
}
\date{}
\begin{document}

\maketitle

\begin{abstract}
We give a natural problem over input quantum oracles $U$ which cannot be solved with exponentially many black-box queries to $U$ and $U^\dagger$, but which can be solved with constant many queries to $U$ and $U^*$, or $U$ and $U^{\mathrm{T}}$.
We also demonstrate a quantum commitment scheme that is secure against adversaries that query only $U$ and $U^\dagger$, but is insecure if the adversary can query $U^*$. These results show that conjugate and transpose queries do give more power to quantum algorithms, lending credence to the idea put forth by Zhandry that cryptographic primitives should prove security against these forms of queries.

Our key lemma is that any circuit using $q$ forward and inverse queries to a state preparation unitary for a state $\sigma$ can be simulated to $\varepsilon$ error with $n = \mathcal{O}(q^2/\varepsilon)$ copies of $\sigma$.
Consequently, for decision tasks, algorithms using (forward and inverse) state preparation queries only ever perform quadratically better than algorithms with sample access.
We also identify a motif, which we call the ``acorn trick'', where generically strengthening a quantum resource can be possible if the output is allowed to be random, bypassing no-go theorems for deterministic algorithms.
We demonstrate this idea for several settings, including controlization and purification.
\end{abstract}
%
\hypersetup{linktocpage}
\tableofcontents
%

\section{Introduction}

What does it mean to have access to a quantum procedure?
A routine on a quantum computer is typically specified by a sequence of gates from a specified base set of gates, which we think of as implementing the unitary matrix $U: \ket{\psi} \mapsto U\ket{\psi}$.
So, the most straightforward way to give you access to a routine $U$ is to give you a quantum circuit implementing $U$.

With this quantum circuit, you can implement the unitary $U$, which we call a \emph{forward query} to $U$.
But you can also do more: for example, you can also implement an \emph{inverse query}, $U^\dagger$, by inverting the (reversed) circuit gate-by-gate.
In a similar way, you can implement controlled queries, $\controlled U$, by replacing every gate with its controlled version.

Access to $U$, $U^\dagger$, and their controlled versions are standard in quantum algorithms.
These can be implemented efficiently when $U$ can, and so we often see algorithms using these alternative types of queries.
Interestingly, these queries enable procedures which cannot be done with forward queries alone.\footnote{
    For more detail, we refer the readers to prior work of a subset of the authors which investigate the power of inverse and controlled queries~\cite{tw25a,tw25b}.
}
So, we can ask: is there any more power we can squeeze out of a generic circuit for $U$?

This question was recently considered by Zhandry, where evidence was given that there are only two more kinds of oracle queries to consider: queries to the complex conjugate of $U$, $U^*$, and to its inverse, the transpose of $U$, $U^\trans$~\cite{zhandry25}.
A circuit for $U$ can be converted into a circuit for both $U^*$ and $U^\trans$ in the same gate-by-gate manner as for $U^\dagger$ and $\controlled U$.
However, these queries are rarely used in quantum algorithms, and thus are not well-understood.
We know that queries to $U^*$ and $U^\trans$ cannot be simulated from queries to $U$~\cite{ehmmqs23,qdssm19b}, so these queries do \emph{something}, but we would like to understand what they enable operationally.
What algorithms are we missing out on by excluding these queries from our space of quantum oracles?
Is there anything we can do with $U^*$ and $U^\trans$ that we could not do without them?

\subsection{Results}

\subsubsection{In algorithms}
We give a simple instance of a problem which requires conjugate queries or transpose queries to solve.
For simplicity, we state our results for conjugate queries, and note the modifications to handle transpose queries in \cref{subsec:trans}.
The problem is as follows.

\begin{problem}[Reality testing]
    \label{prob:real-test}
    For a pure state $\ket{\psi} \in \C^d$, consider the fidelity between it and its complex conjugate:
    \begin{align*}
        \real(\ket{\psi}) \coloneqq \abs{\braket{\psi^*}{\psi}}^2 = \abs[\Big]{\sum_{i=1}^d \psi_i^2}^2.
    \end{align*}
    Given access to $\ket{\psi}$, the goal of \emph{reality testing} is to decide whether $\real(\ket{\psi})$ is $1$ or smaller than $1/10$ with probability $\geq 2/3$.
\end{problem}
When $\ket{\psi}$ only has real amplitudes, then this quantity $\real(\ket{\psi})$ is $1$.
In fact, $\real(\ket{\psi}) = 1$ if and only if $e^{\ii \theta} \ket{\psi}$ has real amplitudes for some phase $e^{\ii \theta}$; further, we can relate $\real(\ket{\psi})$ to the fidelity of $\ket{\psi}$ to the closest state with real amplitudes (\cref{fact:closeness}).
So, we can think about $\real(\ket{\psi})$ as describing how close $\ket{\psi}$ is to being real, up to a global phase.

We have not yet described what kind of access we have to $\ket{\psi}$ in \cref{prob:real-test}.
If we are only given copies of $\ket{\psi}$, then this problem is hard.
In fact, the hardness of testing whether a quantum state has real amplitudes from states is well-known~\cite{chm21,hbcclmnbkpm22}.
However, we will typically not only have access to copies of a quantum state; we will also know the quantum circuit preparing the state, since often we are the ones preparing the states in the first place.
This circuit is known as a state preparation unitary.

\begin{definition}[State preparation unitary]
    \label{def:state-prep}
    For a state with density matrix $\rho$, we say that $U_{\reg{AB}} \in (\C^{\wh{d} \times \wh{d}})_{\reg{A}} \otimes (\C^{d \times d})_{\reg{B}}$ is a \emph{state preparation unitary} for $\rho$ if $\tr_\reg{A}(U (\proj{0}_{\reg{AB}}) U^\dagger) = \rho_{\reg{B}}$.
\end{definition}

We show that reality testing can be solved with a constant number of queries to $U$ and $U^*$, but cannot be solved with polynomially many queries to $U$ and $U^\dagger$.
This gives our desired separation between oracle access to forward and conjugate queries, versus access to just forward queries.

\begin{theorem}
    \label{thm:useful}
    Let $U \in \C^{2d^2 \times 2d^2}$ be an arbitrary state preparation unitary of $\proj{\psi}$.
    Then reality testing with $d > 1000$ requires $\bigOmega{d^{1/4}}$ queries if we are only given access to $U$ and $U^\dagger$.
    On the other hand, it can be solved with $2$ queries to $U$ and $U^*$.
\end{theorem}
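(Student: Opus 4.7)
First, I would observe that because $\proj{\psi}$ is a pure state, $U\ket{0}_{\reg{AB}}$ is itself pure and its $\reg{B}$-marginal is pure, so Schmidt decomposition forces $U\ket{0}_{\reg{AB}} = \ket{\phi}_{\reg{A}} \otimes \ket{\psi}_{\reg{B}}$ for some garbage purification $\ket{\phi}$. Since complex-conjugating the circuit gate-by-gate sends the real initial state $\ket{0}$ to the entry-wise conjugate of $U\ket{0}$, a single query to $U^*$ on a fresh pair of registers produces $\ket{\phi^*}_{\reg{A}'} \otimes \ket{\psi^*}_{\reg{B}'}$. A SWAP test restricted to the $\reg{B}$ and $\reg{B}'$ subregisters accepts with probability
\[
    \tfrac{1}{2}\bigl(1 + \abs{\braket{\psi^*}{\psi}}^2\bigr) \;=\; \tfrac{1}{2}\bigl(1 + \real(\ket{\psi})\bigr),
\]
that is, $1$ when $\real(\ket{\psi}) = 1$ and at most $11/20$ when $\real(\ket{\psi}) < 1/10$. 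This gives the constant distinguishing advantage claimed by the two-query upper bound; if one insists on matching the $2/3$ threshold on the nose, a constant number of additional queries boosts it there.

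\textbf{Lower bound.} For the lower bound, I would invoke the key lemma highlighted in the abstract: any circuit making $q$ forward and inverse queries to a state preparation unitary for $\sigma$ can be simulated to error $\varepsilon$ using $O(q^2/\varepsilon)$ copies of $\sigma$. Specializing to $\sigma = \proj{\psi}$ and fixing $\varepsilon$ to be a small constant, any hypothetical $q$-query reality tester using $U$ and $U^\dagger$ would translate into a sample-based reality tester that uses $O(q^2)$ copies of $\ket{\psi}$ and still succeeds with constant advantage. It then suffices to cite the known $\bigOmega{\sqrt{d}}$ sample-complexity lower bound for testing whether a pure state has real amplitudes (essentially the hardness implicit in~\cite{chm21,hbcclmnbkpm22}), which forces $q^2 = \bigOmega{\sqrt{d}}$, i.e., $q = \bigOmega{d^{1/4}}$.

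\textbf{Main obstacle.} The technically substantive step is the simulation lemma itself, whose proof I expect to be the genuinely nontrivial ingredient; once it is quoted, the rest of the argument is essentially bookkeeping. A subtler compatibility point is that the sample-complexity lower bound must be realized by a family of pure states admitting state preparation unitaries matching the $2d^2$-dimensional setup of the theorem and hitting the precise $\real = 1$ versus $\real < 1/10$ gap; contrasting the uniform superposition $\tfrac{1}{\sqrt{d}}\sum_i \ket{i}$ against a Haar-random unit vector (or something with i.i.d.\ random phases) is a natural candidate family for which both conditions are easy to verify.
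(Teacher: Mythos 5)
Your overall architecture matches the paper's: swap test for the upper bound, then reduce the query lower bound to a sample-complexity lower bound via the simulation lemma and deduce $q = \bigOmega{d^{1/4}}$. The upper-bound calculation is correct (the product-form observation is true but unnecessary---the swap test on registers $\reg{B},\reg{B}'$ measures $\tr(\rho_\reg{B}\rho_{\reg{B}'})$ even when they are entangled with ancillas---and your acknowledged boost to $2/3$ is just the paper's two swap tests).

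The genuine gap is your proposed hard instance for the lower bound. Contrasting the single fixed state $\frac{1}{\sqrt{d}}\sum_i\ket{i}$ with a Haar-random vector does not give a $\bigOmega{\sqrt{d}}$ sample lower bound: with one copy you can simply measure the projector onto $\ket{+}^{\otimes\log d}$ and distinguish with constant advantage. For the reduction to work, the ``$\real = 1$'' side must be a rich \emph{ensemble} of real-amplitude states that is statistically indistinguishable from Haar after $\sqrt{d}/5$ copies. Your parenthetical ``i.i.d.\ random phases'' also does not work as stated, since generic complex phases destroy the $\real(\ket{\psi}) = 1$ property. The paper's fix is to take the ensemble of $\pm 1$ phase states $\frac{1}{\sqrt d}\sum_i (-1)^{f(i)}\ket{i}$: every such state has real amplitudes (hence $\real = 1$), while the Brakerski--Shmueli result shows this ensemble is a $\frac{4t^2}{d}$-approximate $t$-design, so $\bigOmega{\sqrt{d}}$ copies are needed to distinguish it from Haar. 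The paper also checks that a Haar-random $\ket{\psi}$ has $\E[\real(\ket{\psi})] = 1/(d+1)$, so Markov gives $\real < 1/10$ whp and the promise gap is satisfied. Without a correct hard ensemble and a citation pinning down its indistinguishability, the lower bound is incomplete; it is not enough to gesture at the sample-complexity hardness in~\cite{chm21,hbcclmnbkpm22} without verifying the ensemble realizes both the reality gap and the $\bigOmega{\sqrt d}$ bound.
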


In other words, reality testing is easy when we are given $\ket{\psi}$ via a polynomial-time state preparation circuit, since in this case both $U$ and $U^*$ can be performed efficiently.
(The algorithm is simple: prepare copies of $\ket{\psi}$ and $\ket{\psi^*}$, and then perform a swap test.)
Folk intuition might suggest that this task is hard: reality testing given copies of states is hard, and algorithms which use state preparation unitaries are typically only quadratically more efficient than algorithms which use copies of the state.
If we do not factor in our ability to make conjugate queries when given a state preparation oracle, we are led to incorrect beliefs about reality testing.

\begin{remark}[Separations using conjugate states]
    Prior work has proven a similar separation, except with conjugate states instead of conjugate queries.
    The papers of Haug, Bharti, and Koh~\cite[Corollary 1]{hbk25} and Somma, King, Kothari, O'Brien, and Babbush~\cite[Theorem 4]{skkob25} note that reality testing can be solved with $\bigO{1}$, copies of $\ket{\psi}$ and $\ket{\psi^*}$, but require $\bigOmega{\sqrt{d}}$ copies of $\ket{\psi}$.
    We note also the work of King, Wan, and McClean~\cite{kwm24} which investigates other kinds of advantages conjugate states bring in the regime where quantum space is limited.
    Our contribution is showing that an exponential lower bound still holds in the stronger input model where we are given forward and inverse access to a state preparation unitary for $\ket{\psi}$.
\end{remark}

Our main technical contribution formalizes the intuition that state preparation unitaries only help quadratically over copies of the state.

\begin{theorem}[Simulating queries to a state preparation unitary given copies of the state]
    \label{thm:useless}
    Consider a circuit which uses $q$ calls to $U$ and $U^\dagger$, where $U \in \C^{\wh{d} \times \wh{d}} \otimes \C^{d \times d}$ is a state preparation unitary of the mixed state $\sigma \in \C^{d \times d}$ with ancilla register size $\wh{d} \geq 2d$.
    Let $\rho(U)$ denote the output of this circuit when run on $U$.
    Then, using $n = \bigO{q^2 / \eps}$ copies of $\sigma$ and $\poly(q, \log(\wh{d}), 1/\eps)$ gates, we can simulate the output of this circuit on a distribution\footnote{
        This distribution is described more explicitly in \cref{sec:useless}.
        It is not the uniform distribution over state preparation unitaries; the distribution is supported entirely on reflections $V$.
    } over state preparation unitaries of $\sigma$, $\E_{V} \rho(V)$, to $\eps$ error in trace distance.
    If we are told in advance that $\sigma$ has rank $r$, then the ancilla register size can be reduced to $\wh{d} \geq 2r$.
\end{theorem}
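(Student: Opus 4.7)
The plan is to combine a hybrid argument with a ``random purification'' simulation scheme. The freedom to average over a distribution of state preparation unitaries is what makes the construction go through.

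First, I would select the distribution over $V$ as follows. Fix any state preparation unitary $V_0$ of $\sigma$, which exists since $\wh{d} \geq 2d$ accommodates a full purification. Let $V = (W \otimes I_B) V_0$, where $W$ is drawn from the Haar measure on the ancilla register $A$. Since conjugation by $W \otimes I_B$ preserves the reduced density on $B$, every such $V$ is a valid state preparation unitary for $\sigma$, and $V\ket{0}_{AB}$ ranges over the full Haar orbit of purifications of $\sigma$. This is the most symmetric possible distribution, and its symmetry is what buys the simulation.

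Next, I would rewrite the circuit's queries so that each $V$ factors as $V_0$ followed by $W \otimes I_B$ (and dually for $V^\dagger$). Since $V_0$ is fixed and can be implemented by the simulator without using any copies of $\sigma$, simulating $\E_V \rho(V)$ reduces to simulating a circuit that makes $q$ queries to a Haar-random $W$ on the ancilla, in the presence of the known $V_0$ applications. The copies of $\sigma$ are our resource for doing this.

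Third, I would construct the simulation query-by-query, using a ``lazy purification'' scheme: each time the circuit would apply $W$, consume a fresh copy of $\sigma$ and use it to instantiate the corresponding purification of $\sigma$ coherently with the circuit's internal state. Intuitively, the ancilla's role, from the circuit's viewpoint, is only to label ``which purification was selected,'' and a single copy of $\sigma$ supplies exactly that information once we symmetrize. The quadratic blow-up $\bigO{q^2/\varepsilon}$ arises from a BBBV-style hybrid argument: even against a Haar-random $V$, the cumulative query weight that can interact with the $\ket{0}$-component of $V$ over $q$ queries is at most $\bigO{q^2}$, so replacing those interactions with ones backed by independent copies of $\sigma$ incurs error $\bigO{q^2/n} = \varepsilon$ when $n = \bigO{q^2/\varepsilon}$.

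The hardest step is making the lazy purification precise: one must exhibit a single-query gadget that, given a fresh copy of $\sigma$, reproduces the Haar-averaged effect of $(W \otimes I_B)$ up to error $\varepsilon/q$ in an appropriate operator sense. I would expect to handle this by applying a Schur-Weyl / quantum de Finetti symmetrization to $\sigma^{\otimes n}$, effectively trading the pool of copies for a state close to $\int dW\,(W \otimes I)\ket{\sigma}\!\bra{\sigma}(W^\dagger \otimes I)$ restricted to the at-most-$q$-dimensional subspace that the circuit can actually probe. The ancilla-size bound $\wh{d} \geq 2d$ (respectively $\wh{d} \geq 2r$ when the rank is known) is exactly what ensures that the ancilla is large enough to host the relevant Haar-random purifications, and the $\poly(q, \log \wh{d}, 1/\varepsilon)$ gate count follows from the fact that symmetric-subspace projections and controlled-swap chains can be implemented with this overhead.
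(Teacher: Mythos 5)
Your proposal has a fundamental gap in its second step, where you claim to factor each query to $V$ as ``$V_0$ followed by $W \otimes I_B$'' and then assert that ``$V_0$ is fixed and can be implemented by the simulator without using any copies of $\sigma$.'' This is false: the simulator is not given $\sigma$ explicitly, only copies of it, and a state preparation unitary $V_0$ for $\sigma$ is itself an unknown $\sigma$-dependent object. If the simulator could implement $V_0$, it could run the original circuit verbatim with zero error and zero copies of $\sigma$, making the entire theorem vacuous. Moreover, once $V_0$ is (hypothetically) factored out, the remaining circuit only queries a Haar-random $W$ on the ancilla, which has nothing to do with $\sigma$ at all; the copies of $\sigma$ then play no role, contradicting your own later claim that ``a single copy of $\sigma$ supplies exactly that information.'' The reduction is internally inconsistent.

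The paper's proof avoids this trap precisely because it never assumes access to any pre-existing state preparation unitary. Instead, it \emph{constructs} one from copies of $\sigma$: first it uses \cref{lem:purify} to turn $\sigma^{\otimes n}$ into copies of a random purification $\ket{\varrho}$, then \cref{lem:reference} to attach a random reference phase and produce $\ket{\varsigma} = \frac{1}{\sqrt{2}}(e^{\ii\theta}\ket{0}\ket{0} + \ket{1}\ket{\varrho})$, and finally it observes that the reflection $2\proj{\varsigma} - I$ \emph{is} a state preparation unitary for $\sigma$. Approximate reflections are implementable from copies of $\ket{\varsigma}$ via QPCA (\cref{lem:qpca}), which is where the $\bigO{q^2/\eps}$ bound comes from --- each of the $q$ queries is simulated to $\eps/q$ error at a cost of $\bigO{q/\eps}$ copies. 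Your ``BBBV-style hybrid argument'' is not the mechanism here and doesn't by itself describe how to construct the simulating channel. Also, the factor of $2$ in $\wh{d} \geq 2r$ does not come from ``hosting Haar-random purifications'' (a rank-$r$ state needs only $r$-dimensional ancilla to purify); it comes from the extra control qubit introduced in the conditional-sample step of \cref{lem:reference}.
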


\cref{thm:useless} shows that access to state preparation unitaries can only help quadratically for decision problems, if we only use forward and inverse queries to them.
Concretely, let $H_0$ and $H_1$ be disjoint classes of states; then this defines a decision problem: given access to a state $\sigma$, decide whether $\sigma \in H_0$ or $\sigma \in H_1$ with success probability $\geq 2/3$.
If $\sigma$ is in neither class, the algorithm can output anything.
Let $n_{\textup{s}}$ be the minimum number of copies of $\sigma$ needed to solve the decision problem, and let $n_{\textup{q}}$ be the minimum number of forward and inverse queries to a state preparation unitary of $\sigma$ needed to solve the decision problem.
Then by applying \cref{thm:useless} with $\eps \gets 0.1$ and using standard success amplification techniques, the circuit which solves the decision problem with $n_{\textup{q}}$ queries can be converted into a circuit which solves it with $\bigO{n_{\textup{q}}^2}$ copies of $\sigma$.
Together with the observation that state preparation unitaries can be used to generate copies of $\sigma$, we get that $n_{\textup{q}} \leq n_{\textup{s}} \leq \bigO{n_{\textup{q}}^2}$.
This gives the powerful consequence that sample complexity lower bounds lift to lower bounds against state preparation unitaries, though with the caveat that these lower bounds only hold against forward and inverse queries, and not conjugate queries (as \cref{thm:useful} shows).\footnote{
    A plausible conjecture is that sample complexity lower bounds against algorithms with copies of $\sigma \otimes \sigma^*$ lift to lower bounds against forward, inverse, conjugate, and transpose queries (along with their controlled variants~\cite{tw25b}) to state preparation unitaries.
    This does not follow from \cref{thm:useless}, as if we apply it twice, it gives conjugate and transpose queries to a \emph{different} state preparation unitary to the forward and inverse queries.
    This inconsistency may be fixable via the acorn trick.
}

Note that here, we need that the query circuit works for \emph{any} state preparation unitary of $U$.
This is a typical property for circuits which use state preparation unitaries, but it can be the case that a \emph{particular} choice of state preparation unitary may be more helpful than another.
An artificial way to engineer this is to embed the problem solution into the ancilla register of the state preparation unitary; a more realistic situation is when the ancilla register is size $\littleO{d}$, since then we can conclude that $\sigma$ is not full rank~\cite{lgdc24}.

Various weaker versions of \cref{thm:useless} are well-known, but as far as we know, this strongest form of the implication is new---in particular, the observation that this statement holds for mixed states as well as pure states.
See \cref{rmk:prior-useless-results} for more explanation.

\begin{remark}[Is \cref{thm:useless} tight?]
    There are two main complexity quantities in \cref{thm:useless}.
    First, there is $\bigO{n^2/\eps}$, the number of copies of $\sigma$ needed to simulate $n$ forward or inverse queries to a state preparation unitary for $\sigma$ to $\eps$ total error.
    This is at least tight in the $n$ parameter, because of the optimality of Grover's algorithm.
    Further, the main cost comes from QPCA, which is tight~\cite{klloy17}, suggesting that this may be tight.

    Second, there is the size of the ancilla register, $\wh{d}$, which can be as small as $2r$.
    We need that $\wh{d} \geq r$, since the maximally mixed state can only be purified with a register of dimension at least $r$.
    So, we are off by a factor of $2$.
    We did not try to improve this constant factor.
    However, we suspect that it is not necessary: this factor comes from \cref{lem:reference}, but here the ancilla is only used to find an explicit state which is guaranteed to be orthogonal to the initial state.
    This could also be done approximately by choosing a random explicit state, like a computational basis vector, or perhaps exactly using the ``acorn trick'', which we describe below.
\end{remark}

\subsubsection{In cryptography}
Turning to cryptography, we consider the case of quantum (bit) commitments. These allow a quantum sender to commit to a bit $b\in\{0,1\}$ by sending a quantum state $\rho_R$ to a receiver, while keeping a potentially entangled quantum state $\rho_S$ private. Commitments should be \emph{hiding}, meaning the receiver learns nothing about the bit $b$ from $\rho_R$. Later, the sender ``opens'' the commitment by revealing $b$ and $\rho_S$, at which point the receiver verifies $b$ against the joint system $\rho_{S,R}$, and either accepts or rejects. Commitments should be \emph{binding} in the sense that the sender, post-commitment, should be unable to change the bit $b$ and still cause the verifier to accept. Either hiding or binding in quantum commitments must be computational~\cite{Mayers97,LoChau98}, in the sense that either hiding only holds for computationally-bounded receivers, or binding only holds for computationally bounded senders.

Cryptographers often reason about cryptosystems in oracle models. Sometimes, this is to give a heuristic argument for security where otherwise a standard-model security proof is difficult or impossible. Other times, cryptographers use oracle models to prove black-box separations---namely to show that one cryptographic primitive cannot imply another primitive, under relativizing techniques (which capture most techniques in the field). In order to show that $A$ cannot be used to build $B$, an oracle is provided relative to which $A$ exists but $B$ does not. Proving such a result in particular requires proving the security of $A$ relative to the oracle. In an oracle model, a computationally bounded adversary can only make polynomially-many queries.

We demonstrate how oracle security proofs can go awry if the oracle is a unitary oracle allowing for only access to $U$ and $U^\dagger$, but not $U^*$ or $U^\trans$. Concretely, we show:
\begin{theorem}\label{thm:commitmentmain} Relative to a Haar random unitary $U$, there exists a commitment scheme that is statistically hiding and computationally binding against adversaries that can only query $U$ and $U^\dagger$, but is \emph{not} computationally binding if the adversary can query $U^*$ (or alternatively $U$ and $U^\trans$).
\end{theorem}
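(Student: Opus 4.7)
The plan is to let the honest sender commit to $b\in\{0,1\}$ by preparing the bipartite state
\[
|\phi_b\rangle_{SR} \;=\; \frac{1}{\sqrt{d}}\sum_{x=1}^d |x\rangle_S \otimes U^{(b)}|x\rangle_R, \qquad U^{(0)}=U,\ U^{(1)}=U^\dagger,
\]
using a single query to $U$ or $U^\dagger$ and sending $R$ to the receiver. To open, the sender sends $b$ and the $S$-register; the receiver applies $(U^{(b)})^\dagger$ to $R$ and projects $SR$ onto the maximally entangled state $|\Phi^+\rangle=\frac{1}{\sqrt{d}}\sum_x|xx\rangle$. Honest parties never query $U^*$ or $U^\trans$. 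Statistical hiding is in fact perfect: $\mathrm{Tr}_S|\phi_b\rangle\langle\phi_b|=I_R/d$ irrespective of $b$, so the receiver's view carries zero information about the bit.

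\textbf{Attack using $U^*$.} By the transpose identity $(I\otimes M)|\Phi^+\rangle=(M^T\otimes I)|\Phi^+\rangle$, the commitment states can be rewritten as $|\phi_0\rangle=(U^T\otimes I)|\Phi^+\rangle$ and $|\phi_1\rangle=(U^*\otimes I)|\Phi^+\rangle$. Hence the Uhlmann unitary on $S$ relating them is $V = U^*(U^T)^{-1}=(U^*)^2$. A malicious sender with $U^*$ access commits honestly with $|\phi_0\rangle$; if asked to open $b=1$, it applies $V=(U^*)^2$ to its $S$-register using two $U^*$ queries, and verification then accepts with probability $1$. The attack with $U^\trans$ is identical via $U^\trans=(U^*)^\dagger$.

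\textbf{Binding against $U,U^\dagger$ adversaries.} Since Uhlmann's theorem is tight, any successful binding-breaker must, in effect, implement the action of $V=(U^*)^2$ on its private register. The plan is therefore to reduce binding-breaking to the following lower bound: for a Haar-random $U$, any algorithm making $\mathrm{poly}(\log d)$ queries to $U$ and $U^\dagger$ applies the unitary $(U^*)^2$ to an arbitrary input with only negligible fidelity. With this lower bound in place, a standard success-amplification argument plus the reduction yields computational binding.

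\textbf{Main obstacle.} The crux is the lower bound itself. Intuitively, conditioned on what a $(U,U^\dagger)$-only algorithm has queried, the restriction of $U^*$ to the \emph{un}queried subspace remains a Haar-random isometry, so the algorithm cannot act coherently as $(U^*)^2$ on generic superpositions --- but the algorithm's queries to $U$ do reveal $U^*$ on specific basis states, and this partial information has to be handled carefully. I expect the formalization to proceed either via a compressed-Haar-oracle hybrid in which $U^*$ is swapped for a freshly drawn, independent Haar-random unitary and shown indistinguishable from $U^*$ to any polynomial-query $(U,U^\dagger)$ algorithm, or via a reduction from reality testing (\cref{thm:useful}) in which an appropriate reality-testing input state is manufactured from the adversary's commitment and opening. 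Either route requires care with the isotropy of the Haar measure and with how the adversary's private ancilla is handled in the hybrid; this is where essentially all the work lies.
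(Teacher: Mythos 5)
Your construction, hiding argument, and $U^*$-attack are all essentially the paper's (the paper commits to $0$ with $|\mathrm{EPR}\rangle$ and to $1$ with $(I\otimes U)|\mathrm{EPR}\rangle$, so its Uhlmann unitary is a single $U^*$ rather than your $(U^*)^2$; this is a cosmetic difference and your version of the attack is correct). The problem is the binding proof against $(U,U^\dagger)$-adversaries, which the paper explicitly identifies as the core technical difficulty of the theorem. You have reduced it to the right statement --- no polynomial-query $(U,U^\dagger)$ algorithm can approximately implement the conjugate action on the sender's register --- but you then only gesture at two possible routes (a compressed-Haar-oracle hybrid, or a reduction from reality testing) and concede that "this is where essentially all the work lies." That is a genuine gap: the statement you need is precisely the hard part, and neither of your proposed routes is developed or obviously workable. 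A compressed/path-recording treatment of Haar-random unitaries that handles conjugate-simulation is not off-the-shelf, and the reality-testing separation (\cref{thm:useful}) concerns state-preparation oracles for a fixed state, not the ability of a $(U,U^\dagger)$-query algorithm to act as $U^*$ coherently on half of a maximally entangled state, so it does not directly transfer.

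For comparison, the paper's argument proceeds by a chain of reductions that avoids both of your proposed routes: it first uses Haar invariance to replace $U$ by any convenient distribution, then specializes to reflections about random half-dimensional subspaces, so that breaking binding forces the adversary to map a random state $|\phi_{i,b}\rangle$ to the conjugate $|\phi^*_{i,1-b}\rangle$ of its partner. It then strips the oracle down to a two-dimensional swap $\mathrm{SWAP}_{|\phi_1\rangle}$, simulates that swap using $\mathcal{O}(1/\delta)$ copies of a related pure state via QPCA (\cref{lem:reflect}), and finally invokes the fact that $(|\phi\rangle^{\otimes \ell}, |\phi^*\rangle^{\otimes \ell})$ is statistically close to $(|\phi\rangle^{\otimes \ell}, |\phi'\rangle^{\otimes \ell})$ for independent Haar-random $|\phi'\rangle$. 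If you want to complete your proof, you would need to supply an argument of comparable substance; as written, the theorem's main claim is asserted rather than proven.
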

Our commitment scheme is simple: to commit to 0, the sender just sends one-half of each of $n$ EPR pairs, keeping the other half of each pair for itself. To commit to 1, the sender does the same, but applies $U$ to the receiver's halves. The scheme is hiding, since in either case $\rho_R$ is just the totally-mixed state.

We also show that the commitment scheme is (computationally) binding, if we restrict to efficient adversaries that can only query $U$ and $U^\dagger$. Proving this is the core technical difficulty in proving \cref{thm:commitmentmain}, and involves showing, roughly, that any algorithm which is able to implement $U^*$, even approximately, by making queries to $U,U^\dagger$ must make exponentially-many queries.

On the other hand, the scheme is not binding if the adversary can make even a single query to $U^*$. The adversary can apply $U^*$ to $\rho_S$ to change from a commitment to 0 to a commitment to 1. Such conjugate access would be possible in any ``real world'' commitment scheme, showing that the security proof relative to $U,U^\dagger$ does not reflect real-world security.

\subsection{The acorn trick}

Our main technique is what we call the \emph{acorn trick}.
Let's start by seeing the trick on a simple example.

Suppose we have $n$ copies of a pure state $\ket{\psi}$, but we want copies of $\ket{\psi(\theta)} \coloneqq \frac{1}{\sqrt{2}}(e^{\ii \theta}\ket{0}\ket{0} + \ket{1}\ket{\psi})$ instead.
One reason we might want to do this is for tomography, where the latter state is called a ``conditional sample''~\cite{vAcgn22}.
Given only access to $\ket{\psi}$, if we want to estimate some of its entries, naive algorithms will only give us access to entry magnitudes, $\abs{\braket{i}{\psi}}$.
This is to be expected, since observables of $\ket{\psi}$ are invariant under global phase, but if every amplitude is normalized, we cannot even get a handle on relative phases between entries.
Conditional samples allow for easy estimation of $e^{-\ii \theta}\braket{i}{\psi}$ efficiently, where the $e^{-\ii \theta}$ is consistent across entries, fixing the relative phase problem across estimates.

We might then hope to design a circuit to deterministically convert any state $\ket{\psi}$ into a conditional version $\ket{\psi(\theta)}$.
However, there are simple barriers to constructing such a ``universal conditioner''.
This operation is unphysical, as it maps the undetectable global phase of $\ket{\psi}$ to a detectable relative phase of a conditional state.
Concretely, there is no way to map every $\ket{\psi}$ to a corresponding conditional state in a way which is consistent with linearity of a unitary circuit.
So, we instead ask for a protocol with a randomized output.

It is trivial to convert one copy of $\ket{\psi}$ to one random conditional sample, $\ket{\psi(\theta)}$, where $\theta$ is a uniformly random phase.
This is because the random phase decouples the $\ket{0}\ket{0}$ and $\ket{1}\ket{\psi}$ parts of the superposition, and so the random $\ket{\psi(\theta)}$ is simply either $\ket{0}\ket{0}$ or $\ket{1}\ket{\psi}$, each with $1/2$ probability.
However, this does not suffice for us: if we convert every sample into a conditional sample in this manner, then the phases $\theta$ will not be consistent across samples, ruining the phases on our estimates.
Nevertheless, it turns out that there is a way to convert $n$ copies of $\ket{\psi}$ to $n$ copies of $\ket{\psi(\theta)}$, where the random $\theta$ is consistent across all the conditional samples (\cref{lem:reference}).

This is what we call the acorn trick.
The general form of the trick is as follows: sometimes, we have a weak form of copies of a resource in quantum information, where we would instead rather have a stronger, ``lifted'' form of it, which may furnish greater quantum control, or enable the use of stronger subroutines.
The acorn trick applies in the setting where (1) there are many valid lifts of a resource, and (2) while it may be intractable to \emph{deterministically} map a resource to a lift, it is possible to map one to a \emph{random} lift.
Then, the trick states that it is possible to convert \emph{all copies} of that resource to a random lift of the resource, such that the lift is \emph{consistent across all copies}.\footnote{
    We choose this name with analogy to the behavior of oak trees.
    Some oaks have ``mast years'' where they produce many more acorns than normal.
    These mast years occur on a semi-regular cycle, and \emph{the cycle's phase is consistent across copies of the tree in a region}, e.g.\ all the oaks in a forest will mast on the same year~\cite{pkk16}.
}
In a sense, the acorn trick is a reframing of a symmetrization argument: objects with symmetry can be reduced to lower-dimensional objects without symmetry; by the same argument, we can take objects without symmetry and, with an algorithm which is agnostic to the specific identity of the object, lift them to a higher dimension, provided we do so with sufficient symmetry.

We have seen this trick play a key role in our investigations into analyzing unitary oracles and modeling quantum computational problems more broadly.
We use the acorn trick to:
\begin{enumerate}
    \item Convert samples into conditional samples~(\cite{gz25}, \cref{lem:reference});
    \item Convert unitaries into controlled unitaries~(\cite{tw25b});
    \item Convert mixed states into purifications of mixed states~(\cref{lem:purify}).\footnote{
        The independent and concurrent work of Ananth and Goldin~\cite{ag25} gives a different algorithm to perform this task. 
        We compare our result with theirs in \cref{sec:useless}.
    }
\end{enumerate}
These consequences can be surprising: the second and third results listed above bypass no-go theorems which state that universal controlization and purification are unphysical and, therefore, impossible~\cite{afcb14,kkmb06,dp13,fl20,ldcl25}.
With the acorn trick, we can perform both if we are permitted to output a random controlization and purification, respectively.
We anticipate more interesting applications of the acorn trick.

\paragraph{Acorn tricks for channels.}
One test of the acorn trick is the following.

\begin{conjecture}[Channel dilations don't help] \label{conj:dilation}
    For a channel $\Phi$ over $d$-dimensional states, call the unitary $U \in \C^{\wh{d} \times \wh{d}}_{\reg{A}} \otimes \C_{\reg{B}}^{d \times d}$ a \emph{dilation} of $\Phi$ if $\tr_{\reg{A}}(U (\proj{0}_\reg{A} \otimes \rho_{\reg{B}}) U^\dagger) = \Phi(\rho)_{\reg{B}}$.
    
    Consider a circuit which applies $U$, a dilation of $\Phi$ with $\wh{d} \gg d$, $q$ times and outputs the state $\rho(U)$.
    Then this can be converted into a circuit which applies $\Phi$ $\poly(q)$ times and outputs $\E_V \rho(V)$, where the expectation is over a random distribution of dilations $V$.
\end{conjecture}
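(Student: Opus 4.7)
The plan is to lift the three-step proof structure of \cref{thm:useless} (which is exactly this conjecture in the special case where $\Phi$ is the constant channel $\rho \mapsto \sigma$) to arbitrary channels. That proof proceeds by (i) using the acorn trick to replace the ancilla's $\ket{0}_{\reg{A}}$ input with a random reference state, so that queries to $U$ only need to be simulated on a random orthogonal ancilla input; (ii) reducing a query to $U$ to queries to a phase oracle for the purification of $\sigma$; and (iii) implementing the phase oracle via QPCA on copies of $\sigma$. For the general conjecture the corresponding ingredients are: (i$'$) an acorn-trick randomization of the dilation itself over the set of valid dilations of $\Phi$; (ii$'$) a reduction from a query to $U$ to a channel-simulation subroutine; and (iii$'$) an implementation of that subroutine using copies of the Choi state $J(\Phi) \coloneqq (\Phi \otimes I)(\proj{\Omega})$, each of which costs a single application of $\Phi$ on half of a fresh EPR pair.

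For step (i$'$), the key observation is that since the dilation condition only constrains $U$ on inputs of the form $\ket{0}_{\reg{A}} \otimes \rho_{\reg{B}}$, any two dilations with ancilla dimension $\wh{d}$ are related by ancilla-local unitaries that are otherwise unconstrained. When $\wh{d} \gg d$ there is enough room for the acorn trick to sample such unitaries uniformly and consistently across all $q$ queries, in direct analogy with \cref{lem:reference}. For step (iii$'$), applying QPCA~\cite{klloy17} to $\bigO{q^2/\eps}$ copies of $J(\Phi)$ implements the Hamiltonian $e^{-\ii J(\Phi) t}$, which via a Choi--Jamio\l{}kowski-type reduction should be convertible into an approximate implementation of the isometric action $\ket{\phi}_{\reg{B}} \mapsto U_0 (\ket{0}_{\reg{A}} \otimes \ket{\phi}_{\reg{B}})$ of a canonical dilation $U_0$, on arbitrary $\reg{B}$-inputs. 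Composing $q$ such simulated queries with a per-query error budget of $\bigO{\eps/q}$ gives the $\poly(q, 1/\eps)$ application bound.

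The main obstacle is step (iii$'$). In the state-preparation case QPCA on a known purification cleanly produces a phase oracle, but there is no off-the-shelf primitive that turns copies of $J(\Phi)$ into coherent queries to a dilation of $\Phi$ on arbitrary inputs. The critical point is that $J(\Phi)$ only encodes the marginal $\reg{B}$-behavior of the channel, so any simulation must introduce ancilla-side degrees of freedom that are only defined up to a random isometry---but this is precisely what the acorn-trick randomization of step (i$'$) is designed to absorb into the output dilation $V$. Making this absorption formal, and verifying that the simulation is stable when intermediate circuit states have arbitrary entanglement with the Choi-state reservoir (rather than being a canonical reference state), is where the bulk of the technical work will go; a secondary subtlety is ruling out subtle correlations between the randomization $V$ and the per-query simulation errors when one composes $q$ calls.
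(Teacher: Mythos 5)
The statement you are attempting to prove is labeled a \emph{conjecture} in the paper, not a theorem: the authors explicitly offer it as ``one test of the acorn trick'' and do not supply a proof, so there is no paper argument to compare your proposal against. Your sketch is a sensible roadmap by analogy with \cref{thm:useless}, and you correctly identify that the constant-channel case is (essentially) already handled and that step (iii$'$) is the crux. That said, the specific mechanism you propose for (iii$'$) does not seem to go through, and you should not treat it as merely ``the bulk of the technical work'' remaining.

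In the state-preparation case, the reason QPCA works is a small miracle: the Hamiltonian evolution $e^{\pm\ii\pi\proj{\varsigma}}$ generated by the (pure, lifted) state \emph{is literally itself} the reflection $2\proj{\varsigma}-I$, which after \cref{lem:reference} is a state preparation unitary. The object produced by QPCA and the object you want to implement coincide, and this coincidence relies entirely on purity. For a general channel, QPCA on copies of the Choi state $J(\Phi)$ hands you $e^{-\ii J(\Phi) t}$, but this unitary has no reason to be (or be usefully related to) a Stinespring dilation of $\Phi$: $J(\Phi)$ is a density operator on $\reg{B}\otimes\reg{B'}$, whereas a dilation acts on $\reg{A}\otimes\reg{B}$ with a distinguished ancilla input, and the standard Choi--Jamio\l{}kowski correspondence translates the \emph{matrix} $J(\Phi)$ to the \emph{action} of $\Phi$ only through the teleportation-like identity $\Phi(\rho) = d\,\tr_{\reg{B'}}\bigl[(I\otimes\rho^{\trans}) J(\Phi)\bigr]$, not through exponentiation. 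Nothing analogous to the reflection coincidence is available, and you should not expect one to exist, because simulating a dilation on arbitrary coherent inputs is strictly harder than reproducing the channel's input--output marginals (it must get the ancilla coherences right too), whereas $J(\Phi)$ only encodes the marginal behavior. So (iii$'$) as written is not a small gap; it is the place where a genuinely new idea would be needed. A second, more minor point: \cref{thm:useless} is not ``exactly'' the conjecture with constant $\Phi$, because the dilation constraint $\tr_\reg{A}(U(\proj{0}_\reg{A}\otimes\rho_\reg{B})U^\dagger)=\sigma$ for \emph{all} $\rho$ is strictly stronger than the state-preparation constraint, which fixes only the single input $\ket{0}_{\reg{AB}}$; the two are related but the reduction between them deserves a sentence rather than an ``exactly.''
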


This is a channel version of \cref{lem:purify}: instead of purifying copies of mixed states, we ask to convert applications of $\Phi$ to applications of a random Stinespring dilation of $\Phi$.\footnote{
    We note that, in general it is not possible to use $q$ queries to $\Phi$ to simulate $q$ queries to a dilation $U$, so the polynomial loss in \cref{conj:dilation} is inherent.
    To see this, consider the qubit channel $\Phi$ which discards its input and outputs a $p$-biased coin $p\cdot \proj{0} + (1-p) \cdot \proj{1}$. Then $\Omega(1/\epsilon^2)$ applications of $\Phi$ are required to estimate $p$ up to accuracy $\epsilon$. A dilation of $\Phi$ will be any unitary $U$ of dimension, say, $\C_{\reg{A}}^{2 \times 2} \otimes \C_{\reg{B}}^{2 \times 2}$
    such that 
    \begin{equation*}
        U \cdot \ket{0}_{\reg{A}}\ket{0}_{\reg{B}} = \sqrt{p} \cdot \ket{0}_{\reg{A}}\ket{u_0}_{\reg{B}} + \sqrt{1-p}\cdot \ket{1}_{\reg{A}}\ket{u_1}_{\reg{B}},
        \text{ and }
        U \cdot \ket{0}_{\reg{A}}\ket{1}_{\reg{B}} = \sqrt{p} \cdot \ket{0}_{\reg{A}}\ket{u_1}_{\reg{B}} + \sqrt{1-p} \cdot \ket{0}_{\reg{A}}\ket{u_0}_{\reg{B}},
    \end{equation*}
    where $\ket{u_0}$ and $\ket{u_1}$ are orthogonal vectors.
    But using amplitude estimation, we can output a number $\widehat{a}$ such that $\sqrt{p} - \epsilon < \widehat{a} < \sqrt{p} + \epsilon$, and setting $\widehat{p} = \widehat{a}^2$, we see that $\widehat{p}$ is $O(\epsilon)$-close to $p$.
    As the dimension of $U$ is $4$, amplitude estimation can be solved in $O(1/\epsilon)$ queries to $U$.
    This means that there must be at least a quadratic loss when simulating queries to a random $U$ with queries to the channel $\Phi$.
}

Subsequent to the publication of the arXiv version of this paper, several works have shown that certain variants of this conjecture are in fact true.
Because of these results, we realize now that our \cref{conj:dilation} is actually a composition of two conjectures.
The first states that a random \emph{dilation isometry} of a channel can be simulated with only queries to the channel.
The second states that a random \emph{completion} of an isometry can be simulated with only queries to the isometry.
A dilation, in the sense described in \Cref{conj:dilation}, is a completion of a dilation isometry, so that conjecture follows from the two individual conjectures.

\begin{conjecture}[Channel dilation isometries don't help] \label{conj:dilation-isometry}
    For a channel $\Phi$ over $d$-dimensional states, call the isometry $V: \C^d \to \C_{\reg{A}}^{\wh{d}} \otimes \C_{\reg{B}}^d $ a \emph{dilation (isometry)} of $\Phi$ if $\tr_{\reg{A}}(V \rho V^\dagger) = \Phi(\rho)_{\reg{B}}$.

    Consider a circuit which applies $V$, a dilation isometry of $\Phi$ with $\wh{d} \gg d$, $q$ times and outputs the state $\rho(U)$.
    Then this can be converted into a circuit which applies $\Phi$ $q$ times and outputs $\E_W \rho(W)$, where the expectation is over a random distribution of dilation isometries $W$.
\end{conjecture}

\begin{conjecture}[Isometry completions don't help] \label{conj:completion}
    For an isometry $V: \C_{\reg{B}}^d \to \C_{\reg{A}}^{\wh{d}} \otimes \C_{\reg{B}}^d$, call $U: \C_{\reg{A}}^{\wh{d}} \otimes \C_{\reg{B}}^d \to \C_{\reg{A}}^{\wh{d}} \otimes \C_{\reg{B}}^d$ a \emph{completion} of $V$ if $U\ket{0}_{\reg{A}} = V$.

    Consider a circuit which applies $U$, a completion of $V$, $q$ times and outputs the state $\rho(U)$.
    Then this can be converted into a circuit which applies $V$ $\poly(q)$ times and outputs $\E_W \rho(W)$, where the expectation is over a random distribution of completions $W$.
\end{conjecture}

Progress has been made on \cref{conj:dilation-isometry}.
\cite{gmzfl25,ynm25} show that we can use $q$ nonadaptive queries to a channel $\Phi$ to simulate $q$ nonadaptive queries to a uniformly random dilation isometry $V$ of $\Phi$; here, ``nonadaptive'' means that the $q$ applications of $V$ are made in parallel, so that the circuit is applying $V^{\otimes q}$.
We believe that the restriction to nonadaptive queries is an artifact of their proof, and thus we conjecture in \cref{conj:dilation-isometry} that $q$ queries to $\Phi$ are sufficient to simulate $q$ completely general (and therefore possibly adaptive) queries to a random dilation isometry of $\Phi$.

As for \cref{conj:completion}, we prove in \cref{thm:useless} the special case where $d = 1$.
The general case is open.

\paragraph{Acorn tricks for continuous-time evolution?}
We can also ask if converting between resources is possible even when the acorn trick does not apply.
A natural extension to our work is to consider whether ``continuous'' access, access to $U^t$ for every $t \in \R$, can ever help over ``discrete'' access, access to only $U$ and $U^\dagger$.
A simple case of this question is to ask whether access to root queries, $\sqrt{U}$, ever helps to solve a problem.
Again, along the lines of this and prior work, the answer seems to be yes and no.
A \emph{particular} root query can help: distinguishing between $U = [\begin{smallmatrix} e^{\ii (\pi - \eps)} \\ & e^{\ii (\pi - \eps)} \end{smallmatrix}]$ and $U = [\begin{smallmatrix} e^{\ii (\pi - \eps)} \\ & e^{- \ii (\pi - \eps)} \end{smallmatrix}]$ requires $\poly(1/\eps)$ queries to $U$ and $U^\dagger$, since the channels are close in diamond distance.
However, one root query suffices to distinguish the two, provided that $\sqrt{U}$ is the square root which maps the eigenvalue $e^{\ii \theta}$ to $e^{\ii \theta / 2}$ for $\theta \in (-\pi, \pi)$.
However, there are many possible choices of $\sqrt{U}$: can root access help, even when we do not get to choose which root?

Simulating root queries from black box queries to $U$ and $U^\dagger$ has been studied previously, but is only possible in limited settings~\cite{smm09,bccks17,gslw18}.
For example, it is not known how to simulate root queries efficiently for a Haar-random $U$.
The acorn trick also does not apply here, since it is not even clear how to simulate one random root query.
For a matrix $U$ with distinct eigenvalues, the map which takes $\rho$ to $\sqrt{U} \rho \sqrt{U}^\dagger$ for a random choice of square root is a depolarizing channel in the eigenbasis of $U$.
It's not clear how to simulate this using queries to $U$ and $U^\dagger$.
Therefore, we posit that, if the acorn trick does not apply, then root queries might in fact be helpful.

\begin{conjecture}[Root queries can help]
    There is a problem on oracles $U \in \C^{d \times d}$ whose query complexity given access to $U$ and $U^\dagger$ is exponentially larger than its complexity when given access to $U$, $U^\dagger$, and \emph{any} $V$ such that $V^2 = U$.
\end{conjecture}

\section{Simulating state preparation unitaries with copies of the state} \label{sec:useless}

In this section, we prove \cref{thm:useless}, that copies of a state $\sigma$ can be used to simulate forward and inverse queries to a state preparation unitary for $\sigma$.
This result follows from a straightforward combination of known results.
We could not locate this precise consequence in the literature, so we describe it here.

Our proof combines three lemmas.
First, we show how to convert copies of a pure state $\ket{\psi}$ to conditional copies of that pure state, $\frac{1}{\sqrt{2}}(e^{\ii \theta}\ket{0}\ket{0} + \ket{1}\ket{\psi})$ (\cref{lem:reference}).
Here, $\theta$ is a uniformly random phase; this is necessary because we are converting $\ket{\psi}$, which is invariant under global phase, to a conditional state, which is not invariant under $\ket{\psi}$'s global phase.
This lemma has appeared in works by Goldin and Zhandry~\cite[Lemma 4.4]{gz25} and Kretschmer \cite[Lemma 25]{kretschmer21}.

Second, we show how to convert copies of a pure state $\ket{\psi}$ to a reflection about the state $2\proj{\psi} - I$ (\cref{lem:qpca}).
We do this with quantum principal component analysis~\cite{lmr14,klloy17}, commonly seen in the algorithms literature.
In the complexity and cryptography literature, a different algorithm for this task more common: reflect $\ket{\psi} \dots \ket{\psi} \ket{\varphi}$ about the symmetric subspace, and then trace out the first $n$ registers \cite[Lemma 42]{aru14}; \cite[Theorem 5]{jls18}; \cite[Proposition 5.1]{gz25}.
This is arguably cleaner, but is suboptimal, requiring a quadratically worse $\bigO{1/\delta^2}$ copies to implement a channel which is $\delta$-close to a reflection.

The first two lemmas are enough to prove \cref{thm:useless} for pure states.
To get it for mixed states, we show how to convert copies of a mixed state to copies of a random purification of the state (\cref{lem:purify}).
This parallels results of Soleimanifar and Wright~\cite{sw22} and Chen, Wang, and Zhang~\cite{cwz24}; the main difference is that the perspective of these works is different.
These works do not give a purification algorithm.
Instead, they prove lower bounds, showing that we can remove the dependence on purifications from algorithms for certain kinds of problems.
We prove an upper bound, so we identify precisely why we may not need purifications: there is an algorithm which produces purifications without any loss.
This change in perspective strengthens this claim and highlights its counter-intuitiveness.
We typically imagine a purification of a mixed state to be a stronger resource than the mixed state itself.
This results shows that purifications are only useful if they come with structure: a random purification is not helpful.

A similar result on random purification was also proved independently and concurrently by Ananth and Goldin~\cite{ag25}.
Their algorithm requires no representation theory to understand, but incurs an error of $n^2 / d$, where $n$ is the number of samples to purify and $d$ is the dimension of the sample; our algorithm is exact, and can be implemented to arbitrarily high precision on a quantum computer.

\begin{remark}[Comparison of \cref{thm:useless} to prior results] \label{rmk:prior-useless-results}
There are several similar results which appear prior in the literature.
Notably, \cref{thm:useless} for \emph{pure states} has appeared in the work of Goldin and Zhandry~\cite{gz25}, proceeding along the same lines as we do, and getting a complexity of $\bigO{q^2 / \eps^2}$ samples to simulate $q$ queries.
They use these results to understand the cryptographic model where parties are given a common Haar-random state; we recast it as a more general statement about state preparation unitaries.

In the quantum algorithms literature, it is well known that copies of a state $\sigma$ can be used to simulate forward and inverse queries to a block-encoding of $\sigma$, using the language of the QSVT framework~\cite{gslw18}.
This has been used to relate the strength of the block-encoding model to the state model~\cite{gp22,wz23,wz25}.
However, having a block-encoding to a state $\sigma$ is significantly weaker than having access to a state preparation unitary of $\sigma$: it's not even clear how to use block-encodings of $\sigma$ to prepare $\sigma$, even when it is pure.
It is surprising to us that this relatively weak statement can be strengthened so greatly with the two lemmas applying the acorn trick.
\end{remark}

\subsection{Converting samples into conditional samples}

\begin{definition}[Conditional samples with phase]
Given a state $\ket{\psi} \in \C^d$ and an angle $\theta \in [0, 2\pi)$, we define the corresponding \emph{conditional sample} as
\begin{equation*}
    \ket{\psi(\theta)} \coloneqq \frac{1}{\sqrt{2}}(e^{\ii \theta} \ket{0} \ket{0} + \ket{1} \ket{\psi}).
\end{equation*}
\end{definition}

The main result of this subsection is that given $n$ copies of a pure state $\ket{\psi}$, we can efficiently generate $n$ copies of a conditional sample $\ket{\psi(\theta)}$, where $\theta$ is a uniformly random angle.
As we mentioned before, this has been proven previously~\cite{gz25}; we prove it in a similar manner to prior ``acorn trick'' results~\cite{tw25b}.

\begin{lemma}[{Adding a reference phase to a pure state \includegraphics[height=2.0ex]{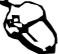}}]
    \label{lem:reference}
    There is a unitary circuit $\circuit^{(n)}$ such that, for all $\ket{\psi} \in \C^d$,
    \begin{align}\label{eq:goal}
        \circuit(\proj{\psi}^{\otimes n}) = \E_{\theta \sim [0, 2\pi)} \ketbra{\psi(\theta)}{\psi(\theta)}^{\otimes n}.
    \end{align}
    Moreover, this circuit is efficient and has gate complexity $\bigO{n^2 \log(d)}$.
\end{lemma}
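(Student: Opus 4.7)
The plan is to realize the average over $\theta$ by purifying it into a dedicated phase-reference register $E$ in uniform superposition $\frac{1}{\sqrt{N}}\sum_{\theta=0}^{N-1}\ket{\theta}$ (prepared via a QFT on $\ket{0}$, with $N\ge n+1$ so that $\frac{1}{N}\sum_{\theta}e^{\ii\theta k}=0$ for every $1\le\abs{k}\le n$). For each of the $n$ input copies I introduce a control qubit $a_i$ in state $\ket{+}$ and a $d$-dimensional scratch register $c_i=\ket{0}$. Controlled on $a_i=0$, I swap $b_i\leftrightarrow c_i$ (so that $b_i=\ket{0}$ and $c_i=\ket{\psi}$ on that branch) and apply the controlled phase $\sum_{\theta}e^{\ii\theta}\proj{\theta}$ on $E$. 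Copy-by-copy this produces $\frac{1}{\sqrt{2}}(e^{\ii\theta}\ket{0,0}_{a_ib_i}\ket{\psi}_{c_i}+\ket{1,\psi}_{a_ib_i}\ket{0}_{c_i})$, coherently entangled through the shared $E$-register.

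If I simply traced out $E$ and the $c_i$'s now, I would get $\E_{\theta}\proj{\psi(\theta)}$ for $n=1$, but the construction fails for $n\ge 2$: each $c_i$ records whether $a_i=0$ (storing $\ket{\psi}$) or $a_i=1$ (storing $\ket{0}$), leaking the individual bits of $x\in\{0,1\}^n$ and killing off-diagonal cross terms $\ket{x}\bra{y}$ with $x\ne y$. But the target $\E_{\theta}\proj{\psi(\theta)}^{\otimes n}=\frac{1}{2^n}\sum_{w(x)=w(y)}\ket{x,\tilde\psi_x}\bra{y,\tilde\psi_y}$ (where $\tilde\psi_0=\ket{0}$, $\tilde\psi_1=\ket{\psi}$) needs all such same-weight cross terms to survive. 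The fix is a coherent sort: apply a controlled permutation $\pi(a)\in S_n$ to $(c_1,\dots,c_n)$ that moves every $c_i$ with $a_i=0$ into the leftmost positions while leaving $a$ unchanged, so that the sorted $c$-contents become a function of $w(a)$ alone. A bubble sort with $\bigO{n^2}$ conditional swaps on $d$-dimensional registers implements this in $\bigO{n^2\log d}$ gates, with any bookkeeping scratch uncomputed from a Hamming-weight register.

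Verification is then a one-line expansion: write the post-sort state as $\frac{1}{\sqrt{N}\,2^{n/2}}\sum_{\theta,x}e^{\ii\theta m(x)}\ket{\theta}_E\ket{x,\tilde\psi_x}_{ab}\ket{r(w(x))}_c$, where $m(x)$ counts the zeros of $x$ and $\ket{r(k)}$ is the fixed sorted $c$-state consisting of $n-k$ copies of $\ket{\psi}$ followed by $k$ copies of $\ket{0}$. Tracing out $E$ forces $m(x)=m(y)$ on surviving cross terms, and tracing out $c$ contributes a factor of $1$ exactly because $\ket{r(w(x))}=\ket{r(w(y))}$ whenever $w(x)=w(y)$, yielding the target density matrix on $(a_1b_1\cdots a_nb_n)$. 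The gate count decomposes as $\bigO{n\log d}$ for the CSWAPs, $\bigO{n\log n}$ for the controlled phases onto $E$, and the dominant $\bigO{n^2\log d}$ for the sort, meeting the stated bound. The main obstacle is the sort step: without it, the ``which-copy'' information leaked into the environment destroys exactly the cross terms that the acorn trick is designed to preserve.
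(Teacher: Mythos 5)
Your proposal is correct and produces the same mixed state as the paper's circuit, but the engineering is genuinely different, so it is worth contrasting. The paper never introduces an explicit $\theta$-register: it Hadamards a control qubit per copy, conditionally swaps $\ket{\psi}$ into an output slot, and keeps a clock register $\reg{K}$ that counts the Hamming weight; a $\Shift$ gate after each swap keeps the leftover-$\ket{\psi}$ register sorted \emph{as the circuit proceeds}, so that once $\reg{S}$ and $\reg{K}$ are traced out, the residual state is $\frac{1}{2^n}\sum_k\proj{\psi(k)}$. You instead purify the random $\theta$ into a QFT-prepared register $E$ of size $N \ge n+1$, accumulate the phase $e^{\ii\theta m(x)}$ on it, and then do an explicit bubble sort of the scratch registers at the end, uncomputing the sort's bookkeeping from a Hamming-weight ancilla. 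Your $E$ register is playing precisely the role of the paper's clock $\reg{K}$: note that the sorted $c$-register alone does \emph{not} enforce $w(x)=w(y)$ on cross terms when $\braket{0}{\psi}\ne 0$ (you get $\braket{0}{\psi}^{\abs{w(x)-w(y)}}$ rather than a delta), and the paper's $\reg{S}$-register likewise fails when $\ket{\psi}=\ket{0}$; so in both proofs a separate weight-correlated ancilla is indispensable, and you have correctly identified this. Your version is arguably more transparent about \emph{why} the construction works (it literally purifies the randomness named in the statement), and it also makes the ``remove the sort and the proof breaks'' failure mode explicit, which is a useful sanity check; the paper's version avoids the coherent sort entirely by interleaving $\Shift$ gates with the $\psi$-gadgets, which is a slightly slicker circuit. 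Both give $\bigO{n^2\log d}$, dominated by the $\bigO{n^2}$ controlled $d$-dimensional swaps.

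One small caution to flag for yourself: the controlled permutation $\pi(a)$ must not mutate $a$, so the bubble sort has to be keyed on a \emph{copy} of $a$; that copy ends sorted as $\ket{0^{n-k}1^k}$ and is still entangled with $k=w(a)$. You can uncompute it from a Hamming-weight ancilla (computed from the pristine $a$) as you say, and since whatever garbage remains is a function of $k$ alone, it does not spoil the trace. This is handwaved but not a gap.
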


We will begin by writing down a convenient expression for the mixed state on the right-hand side of \Cref{eq:goal} in terms of the following family of states.
\begin{definition}
    For shorthand, write $\ket{\psi^0} = \ket{0}\ket{0}$ and $\ket{\psi^1} = \ket{1}\ket{\psi}$.
    More generally, for $x \in \{0, 1\}^n$, we write
    \begin{equation*}
        \ket{\psi^x} = \ket{\psi^{x_1}}  \ket{\psi^{x_2}} \cdots \ket{\psi^{x_n}}.
    \end{equation*}
    Then for $0 \leq k \leq n$, define
    \begin{align*}
        \ket{\psi_n(k)} \coloneqq \sum_{x \in \{0, 1\}^n, |x| = k} \ket{\psi^x}.
    \end{align*}
    We will often write this simply as $\ket{\psi(k)}$ when $n$ is clear from context.
\end{definition}

Using these states, we can write an element of our desired mixture as follows.

\begin{lemma}
    \label{lem:conditional-sample-defn}
    Given an angle $\theta \in [0, 2\pi)$,
    \begin{equation*}
        \ket{\psi(\theta)}^{\otimes n}
        =
        \frac{1}{\sqrt{2^n}} \sum_{k = 0}^n e^{\ii \theta \cdot (n - k)} \ket{\psi(k)}.
    \end{equation*}
\end{lemma}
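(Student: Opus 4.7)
The plan is a direct expansion of the $n$-fold tensor power (interpreting the left-hand side as $\ket{\psi(\theta)}^{\otimes n}$, which is the only reading that matches the right-hand side in dimension). Starting from the definition
\[
    \ket{\psi(\theta)}^{\otimes n} = \frac{1}{(\sqrt{2})^n}\bigotimes_{i=1}^n \bigl(e^{\ii\theta}\ket{\psi^0} + \ket{\psi^1}\bigr),
\]
I would distribute the product over the sum. Each term in the resulting sum is indexed by a bit string $x \in \{0,1\}^n$, where $x_i = 0$ records that the $i$-th factor contributed $e^{\ii\theta}\ket{\psi^0}$ and $x_i = 1$ records that it contributed $\ket{\psi^1}$. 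The resulting summand is thus $e^{\ii\theta\cdot \#\{i : x_i = 0\}}\,\ket{\psi^{x_1}}\cdots \ket{\psi^{x_n}} = e^{\ii\theta(n-|x|)}\,\ket{\psi^x}$, using that $\#\{i : x_i = 0\} = n - |x|$ under the paper's Hamming weight convention.

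From here I would simply collect terms according to the Hamming weight $k = |x|$:
\[
    \ket{\psi(\theta)}^{\otimes n} = \frac{1}{\sqrt{2^n}} \sum_{k=0}^{n} e^{\ii\theta(n-k)} \sum_{\substack{x \in \{0,1\}^n \\ |x| = k}} \ket{\psi^x}.
\]
By definition of $\ket{\psi(k)} = \ket{\psi_n(k)}$, the inner sum is precisely $\ket{\psi(k)}$, which gives the claimed identity.

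There is essentially no obstacle here: the entire argument is a single expansion of a product of binomials followed by a regrouping by Hamming weight. The only thing to be careful about is the sign convention for the exponent, namely that the phase $e^{\ii\theta}$ appears on the $\ket{\psi^0}$ branch (not the $\ket{\psi^1}$ branch), so that after expansion the coefficient of $\ket{\psi(k)}$ carries the exponent $n-k$ rather than $k$. This matches the statement exactly, so no further work is needed.
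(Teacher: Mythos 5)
Your proof is correct and is essentially the same expansion as the paper's: distribute the $n$-fold tensor product over the binomial $e^{\ii\theta}\ket{\psi^0} + \ket{\psi^1}$, index terms by bit strings $x$, note the phase accumulates as $e^{\ii\theta(n-|x|)}$, and regroup by Hamming weight. You also correctly resolved the notational overload in the lemma statement (the left-hand side must be read as $\ket{\psi(\theta)}^{\otimes n}$), which matches what the paper's own proof computes.
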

\begin{proof}
    This follows because
    \begin{align*}
        \Big(\frac{1}{\sqrt{2}} (e^{\ii \theta}\ket{0}\ket{0} + \ket{1}\ket{\psi})\Big)^{\otimes n}
        &= \frac{1}{\sqrt{2^n}} \sum_{x \in \{0, 1\}^n} (e^{\ii \theta \cdot (1 - x_1)}\cdot\ket{\psi^{x_1}}) \cdots (e^{\ii \theta \cdot (1 - x_n)}\cdot \ket{\psi^{x_n}}) \\
        &= \frac{1}{\sqrt{2^n}} \sum_{x \in \{0, 1\}^n} e^{\ii \theta \cdot (n - |x|)} \ket{\psi^x}
        = \frac{1}{\sqrt{2^n}} \sum_{k = 0}^n e^{\ii \theta \cdot (n - k)} \ket{\psi(k)}.\qedhere
    \end{align*}
\end{proof}

Now we average over $\theta$ to derive our expression for the right-hand side of \Cref{eq:goal}.

\begin{lemma}[Average conditional sample]
    \begin{equation*}
    \E_{\theta \sim [0, 2\pi)} \ketbra{\psi(\theta)}{\psi(\theta)}^{\otimes n}
    = \frac{1}{2^n} \sum_{k=0}^n\ketbra{\psi(k)}{\psi(k)}
    \end{equation*}
\end{lemma}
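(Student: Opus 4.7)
The plan is to apply the previous lemma (\cref{lem:conditional-sample-defn}) to expand $\ket{\psi(\theta)}^{\otimes n}$ in the orthogonal basis $\{\ket{\psi(k)}\}_{k=0}^n$, form the outer product with its conjugate, and then integrate the resulting Fourier series term-by-term in $\theta$. The key point is that different values of $k$ contribute different frequencies $e^{\ii\theta(n-k)}$, so after averaging over $\theta$ all cross terms vanish.

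Concretely, I would first write
\begin{equation*}
    \ketbra{\psi(\theta)}{\psi(\theta)}^{\otimes n}
    = \frac{1}{2^n} \sum_{j=0}^n \sum_{k=0}^n e^{\ii \theta(j-k)} \ketbra{\psi(k)}{\psi(j)},
\end{equation*}
using \cref{lem:conditional-sample-defn} on both the ket and the bra (the ket contributes $e^{\ii\theta(n-k)}$ and the bra contributes $e^{-\ii\theta(n-j)}$, and the factors of $n$ cancel). Then I would push the expectation over $\theta$ inside the double sum, using that the states $\ket{\psi(k)}$ do not depend on $\theta$.

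The last step is to evaluate $\E_{\theta \sim [0,2\pi)} e^{\ii \theta(j-k)}$, which equals $1$ when $j = k$ and $0$ otherwise (this is just orthogonality of characters on the circle). This collapses the double sum to the diagonal, giving
\begin{equation*}
    \E_{\theta} \ketbra{\psi(\theta)}{\psi(\theta)}^{\otimes n}
    = \frac{1}{2^n} \sum_{k=0}^n \ketbra{\psi(k)}{\psi(k)},
\end{equation*}
as desired. There is no real obstacle here; the only thing to be slightly careful about is that the identity $\E_{\theta} e^{\ii\theta m} = \delta_{m,0}$ holds precisely because the frequencies $n-k$ for $k \in \{0,\dots,n\}$ are all distinct integers, so the exponents $j-k$ range over integers and the orthogonality relation applies directly.
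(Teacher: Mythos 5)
Your proof is correct and follows essentially the same route as the paper's: expand both the ket and the bra via \cref{lem:conditional-sample-defn}, push the expectation through the double sum, and use the orthogonality relation $\E_{\theta}[e^{\ii\theta(j-k)}]=\delta_{j,k}$ to kill the off-diagonal terms. The only cosmetic difference is that the paper writes the expansion inline rather than invoking the lemma by name.
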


\begin{proof}
    Averaging over the random angle $\theta$, using \cref{lem:conditional-sample-defn},
    \begin{align*}
    \E_{\theta \sim [0, 2\pi)} \ketbra{\psi(\theta)}{\psi(\theta)}^{\otimes n}
    &{} = \E_{\theta \sim [0, 2\pi)} \Big(\frac{1}{\sqrt{2^n}} \sum_{k = 0}^n e^{\ii \theta \cdot (n - k)} \ket{\psi(k)}\Big) \cdot \Big(\frac{1}{\sqrt{2^n}} \sum_{\ell = 0}^n e^{-\ii \theta \cdot (n - \ell)} \bra{\psi(\ell)}\Big)\\
    &{} = \frac{1}{2^n} \sum_{k, \ell=0}^n \E_{\theta \sim [0, 2\pi)} [e^{\ii \theta \cdot (\ell - k)}] \cdot \ketbra{\psi(k)}{\psi(\ell)}\\
    &{} = \frac{1}{2^n} \sum_{k=0}^n\ketbra{\psi(k)}{\psi(k)},
    \end{align*}
    where in the last step we used the fact that $\E_{\theta \sim [0, 2\pi)} [e^{\ii \theta \cdot (\ell - k)}] = 1$ if $k = \ell$ and $0$ otherwise.    
\end{proof}

    It is this form of the mixture that we aim to simulate with $\circuit(\proj{\psi}^{\otimes n})$.
    The circuit $\circuit$ will consist of four registers named $\reg{C}$, $\reg{O}$, $\reg{S}$ and $\reg{K}$,
    which are structured as follows.
    \begin{itemize}
        \item The ``control'' register $\reg{C}$ contains $n$ qubits. These correspond to the $n$ control qubits in the state $\ket{\psi(k)}$.
        \item The ``output'' register $\reg{O}$ contains $n$ qudits in $\C^d$. These correspond to the $n$ qudits in the state $\ket{\psi(k)}$ which are either set to $\ket{0}$ or $\ket{\psi}$.
        \item The ``state'' register $\reg{S}$ contains $n$ qudits which are initialized to the $n$ copies of the input state $\ket{\psi}$.
        \item The ``clock'' register $\reg{K}$ contains a $\lceil \log_2(n+1)\rceil$ qubit counter. It keeps track of the number of $\ket{\psi}$'s which are currently in the output register.
    \end{itemize}
    The circuit $\circuit$ is depicted in \Cref{fig:circ-reference-phase}.
    It makes use of the $\Shift$ gate, defined as follows.
    \begin{definition}[The Shift gate]\label{def:shiftgate}
        The $\Shift$ gate acts on $(\C^d)^{\otimes n}$ and is defined as
    \begin{equation*}
        \Shift = \mathrm{SWAP}^d_{n-1, n} \cdots \mathrm{SWAP}^d_{2, 3} \cdot \mathrm{SWAP}^d_{1, 2},
    \end{equation*}
    where $\mathrm{SWAP}^d$ is the swap gate on $\C^d \otimes \C^d$.
    The  $\Shift$ gate has the property that for any qudit states $\ket{a_1}, \ldots, \ket{a_n} \in \C^d$,
    \begin{equation*}
        \Shift \cdot \ket{a_1, a_2, \ldots, a_{n-1}, a_n}
        = \ket{a_2, a_3, \ldots, a_n, a_1}.
    \end{equation*}
    \end{definition}
    The circuit $\circuit$ applies a sub-circuit known as the $\psi$-gadget $n$ times, one for each control register/output register pair.
    For intuition, let us consider the first application of the $\psi$-gadget, which acts as follows.
    \begin{enumerate}
    \item 
    It begins by applying a Hadamard gate to the first qubit, mapping it to $\frac{1}{\sqrt{2}} \ket{0} + \frac{1}{\sqrt{2}} \ket{1}$.
    \item
    It then checks if the first control bit is 1.
    If it is not, then the $\psi$-gadget does nothing more.
    \item 
    If it is, then the $\psi$-gadget swaps a copy of $\ket{\psi}$ into the first part of the output register $\reg{O}$,
    In doing so, it puts the $\ket{0}$ that used to be in that register into the first part of the state register $\reg{S}$.
    so that this register becomes $\ket{0} \otimes \ket{\psi} \otimes \cdots \otimes \ket{\psi}$.
    \item
    Next, it applies a $\Shift$ gate to the state register $\reg{S}$,
    so that it becomes $\ket{\psi} \otimes \cdots \otimes \ket{\psi} \otimes \ket{0}$.
    The point of this operation is to place a new $\ket{\psi}$ into the first part of the $\reg{S}$ register, so that it can be used in the next $\psi$-gadget.
    \item Finally, it increments the counter $\reg{K}$ from $\ket{0}$ to $\ket{1}$.
    \end{enumerate}
    In the $i$-th step, the $\psi$-gadget checks if the $i$-th control bit is 1 and, if so, swaps a copy of $\psi$ into the $i$-th part of the output register $\reg{O}$.

        \begin{figure}[ht]
\[
    \Qcircuit @R=1em @C=1em {
    \lstick{\ket{0}_{\reg{C},1}}
    & \gate{H}
    & \ctrl{6}
    & \ctrl{6}
    & \ctrl{9}
    & \qw
    & {\cdots}
    &
    & \qw
    & \qw
    & \qw
    & \qw
    & \qw\\
    \lstick{\makebox[\widthof{$\ket{0}$}]{\raisebox{0.6ex}{$\vdots$}}}
    & \qw
    & \qw
    & \qw
    & \qw
    & \qw
    & {\cdots}
    &
    & \qw
    & \qw
    & \qw
    & \qw
    & \qw\\
    \lstick{\ket{0}_{\reg{C},n}}
    & \qw
    & \qw 
    & \qw
    & \qw
    & \qw
    & {\cdots}
    &
    & \gate{H}
    & \ctrl{4}
    & \ctrl{4}
    & \ctrl{7}
    & \qw\\
    \lstick{\ket{0}_{\reg{O},1}}
    & \qw
    & \qswap
    & \qw
    & \qw
    & \qw
    & {\cdots}
    &
    & \qw
    & \qw
    & \qw
    & \qw
    & \qw \\
    \lstick{\makebox[\widthof{$\ket{0}$}]{\raisebox{0.6ex}{$\vdots$}}}
    & \qw
    & \qw
    & \qw
    & \qw
    & \qw
    & {\cdots}
    &
    & \qw
    & \qw
    & \qw
    & \qw
    & \qw\\
    \lstick{\ket{0}_{\reg{O},n}}
    & \qw
    & \qw
    & \qw
    & \qw
    & \qw
    & {\cdots}
    &
    & \qw
    & \qswap
    & \qw
    & \qw
    & \qw\\
    \lstick{\ket{\psi}_{\reg{S},1}}
    & \qw
    & \qswap
    & \multigate{2}{\Shift}
    & \qw
    & \qw
    & {\cdots}
    &
    & \qw
    & \qswap
    & \multigate{2}{\Shift}
    & \qw
    & \qw\\
    \lstick{\makebox[\widthof{$\ket{0}$}]{\raisebox{0.6ex}{$\vdots$}}}
    & \qw
    & \qw
    & \ghost{\Shift}
    & \qw
    & \qw
    & {\cdots}
    &
    & \qw
    & \qw
    & \ghost{\Shift}
    & \qw
    & \qw\\
    \lstick{\ket{\psi}_{\reg{S},n}}
    & \qw
    & \qw
    & \ghost{\Shift}
    & \qw
    & \qw
    & {\cdots}
    &
    & \qw
    & \qw
    & \ghost{\Shift}
    & \qw
    & \qw\\
    \lstick{\ket{0}_{\reg{K}}}
    & \qw
    & \qw
    & \qw
    & \gate{\Add(+1)}
    & \qw
    & {\cdots}
    &
    & \qw
    & \qw
    & \qw
    & \gate{\Add(+1)}
    & \qw \gategroup{1}{2}{10}{5}{.7em}{--} \gategroup{3}{9}{10}{12}{.7em}{--}
}
\]
\caption{
    The circuit $\circuit$.
    The sub-circuit in the dotted box is known as the \emph{$\psi$-gadget}.
}
\label{fig:circ-reference-phase}
\end{figure}
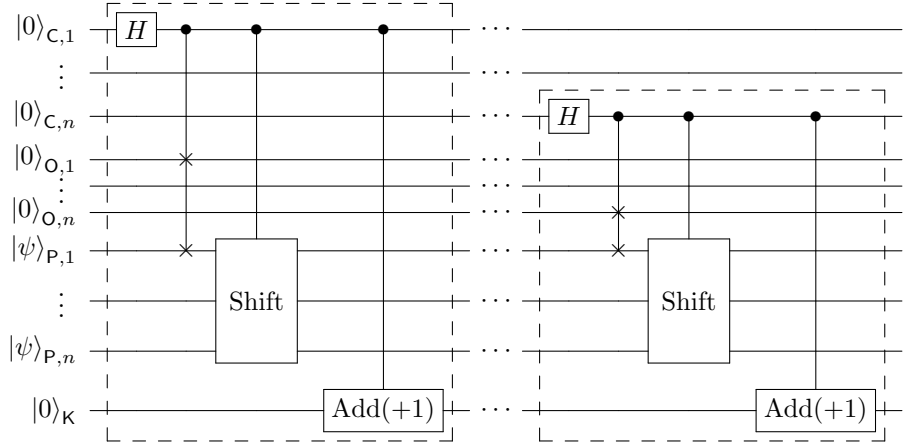

In general, the output of the circuit is given as follows.

\begin{lemma}
    On input $\ket{\psi}^{\otimes n}$, the output of the circuit is
    \begin{equation*}
        \ket{\circuit(\psi)}_{\reg{COSK}} =
        \frac{1}{\sqrt{2^n}} \sum_{k=0}^n \ket{\psi(k)}_{\reg{C}\reg{O}} \otimes (\ket{\psi}^{\otimes n - k} \otimes \ket{0}^{\otimes k})_{\reg{S}} \otimes \ket{k}_{\reg{K}}.
    \end{equation*}
\end{lemma}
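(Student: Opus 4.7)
The plan is to prove the identity by induction on $i \in \{0, 1, \ldots, n\}$, maintaining the invariant that after the first $i$ applications of the $\psi$-gadget the joint state is
\[
\frac{1}{\sqrt{2^{i}}} \sum_{k=0}^{i} \ket{\psi_{i}(k)}_{\reg{C}_{\le i}\reg{O}_{\le i}} \otimes \ket{0}^{\otimes (n-i)}_{\reg{C}_{>i}} \otimes \ket{0}^{\otimes (n-i)}_{\reg{O}_{>i}} \otimes (\ket{\psi}^{\otimes n-k} \otimes \ket{0}^{\otimes k})_{\reg{S}} \otimes \ket{k}_{\reg{K}}.
\]
Setting $i = n$ yields the lemma. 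The base case $i = 0$ is immediate from the initial configuration, with the convention that $\ket{\psi_{0}(0)}$ is the empty (scalar) tensor factor.

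For the inductive step, I would fix an arbitrary $k'$-summand of the hypothesis and apply the $(i+1)$-st $\psi$-gadget, which consists of a Hadamard on $\reg{C}_{i+1}$ followed by three operations controlled on $\reg{C}_{i+1}$: a swap of $\reg{O}_{i+1}$ with $\reg{S}_{1}$, a Shift on $\reg{S}$, and an increment of $\reg{K}$. The Hadamard creates two equal-amplitude sub-branches. In the $\ket{0}$-branch, all three controlled gates act as the identity, so the new pair $(\reg{C}_{i+1}, \reg{O}_{i+1})$ holds $\ket{0}\ket{0} = \ket{\psi^{0}}$ and both $\reg{S}$ and $\reg{K}$ are unchanged, giving a term with new counter value $k = k'$. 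In the $\ket{1}$-branch, the swap exchanges the leading $\ket{\psi}$ of $\reg{S}$ with the $\ket{0}$ in $\reg{O}_{i+1}$, the Shift cycles this fresh $\ket{0}$ to the last position of $\reg{S}$, and the counter increments, so the new pair holds $\ket{1}\ket{\psi} = \ket{\psi^{1}}$ with $\reg{S} = \ket{\psi}^{\otimes n-(k'+1)} \otimes \ket{0}^{\otimes k'+1}$ and $k = k'+1$.

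Summing over $k'$ and grouping contributions by the new counter value $k$, the coefficient of $(\ket{\psi}^{\otimes n-k} \otimes \ket{0}^{\otimes k})_{\reg{S}} \otimes \ket{k}_{\reg{K}}$ becomes $\ket{\psi_{i}(k)} \otimes \ket{\psi^{0}} + \ket{\psi_{i}(k-1)} \otimes \ket{\psi^{1}}$ on $\reg{C}_{\le i+1}\reg{O}_{\le i+1}$. The combinatorial identity that closes the induction is
\[
\ket{\psi_{i+1}(k)} = \ket{\psi_{i}(k)} \otimes \ket{\psi^{0}} + \ket{\psi_{i}(k-1)} \otimes \ket{\psi^{1}},
\]
which holds by splitting weight-$k$ strings in $\{0,1\}^{i+1}$ according to the value of their last bit. (Modulo a purely notational reshuffling between the interleaved layout $\reg{C}_{1}\reg{O}_{1}\reg{C}_{2}\reg{O}_{2}\cdots$ used in the definition of $\ket{\psi^{x}}$ and the block layout $\reg{C}\reg{O}$ used by the circuit.)

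The only place where care is required is verifying that the controlled operations behave as described: in the $\ket{1}$-branch the swap relies on $\reg{S}_{1} = \ket{\psi}$ beforehand, which holds because $k' \le i \le n-1$ throughout the induction always leaves at least one $\ket{\psi}$ at the front of $\reg{S}$. I would also emphasize that the Shift in the gadget is $\reg{C}_{i+1}$-controlled, as the circuit diagram indicates; an unconditional Shift would cycle a $\ket{\psi}$ to the end of $\reg{S}$ in the $\ket{0}$-branch and destroy the invariant.
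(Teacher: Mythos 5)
Your proof is correct and follows essentially the same route as the paper's: the same induction on the number of applied $\psi$-gadgets, the same invariant (the paper's Equation~\eqref{eq:ith-step}), and the same regrouping of the two Hadamard branches via the recurrence $\ket{\psi_{i+1}(k)} = \ket{\psi_i(k)}\otimes\ket{\psi^0} + \ket{\psi_i(k-1)}\otimes\ket{\psi^1}$, which the paper verifies ``by inspection.'' Your added remarks about the controlled Shift and the layout convention are accurate but do not change the argument.
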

\begin{proof}
    We will prove the more general statement that after the $i$-th $\psi$-gadget is applied, the state of the circuit looks like
    \begin{equation}\label{eq:ith-step}
        \frac{1}{\sqrt{2^i}} \sum_{k=0}^i \ket{\psi_i(k)}_{\reg{CO}_{[1, i]}} \otimes \ket{0}_{\reg{CO}_{[i+1,n]}} \otimes (\ket{\psi}^{\otimes n -k} \otimes \ket{0}^{\otimes k})_{\reg{S}} \otimes \ket{k}_{\reg{K}}.
    \end{equation}
    This implies the lemma statement because when $i = n$, $\ket{\psi_n(k)} = \ket{\psi(k)}$ by definition. We will prove this statement by induction. The $i = 0$ base case is true because the state is initialized to $\ket{0}_{\reg{C}} \ket{0}_{\reg{O}} \ket{\psi}^{\otimes n}_{reg{S}} \ket{0}_{\reg{K}}$ at the beginning of the circuit.
    Now let us assume that it is true for $i$ and show that it is true for $i+1$. 
    The $(i+1)$-st $\psi$-gadget maps the state in \Cref{eq:ith-step} to
    \begin{align*}
        &\frac{1}{\sqrt{2^{i+1}}} \sum_{k=0}^i \ket{\psi_i(k)}_{\reg{CO}_{[1, i]}} \otimes \ket{0}_{\reg{C}_{[i+1]}}\ket{0}_{\reg{O}_{[i+1]}} \otimes \ket{0}_{\reg{CO}_{[i+2,n]}} \otimes (\ket{\psi}^{\otimes n -k} \otimes \ket{0}^{\otimes k})_{\reg{S}} \otimes \ket{k}_{\reg{K}}\\
        +{}& \frac{1}{\sqrt{2^{i+1}}} \sum_{k=0}^i \ket{\psi_i(k)}_{\reg{CO}_{[1, i]}} \otimes \ket{1}_{\reg{C}_{[i+1]}}\ket{\psi}_{\reg{O}_{[i+1]}} \otimes \ket{0}_{\reg{CO}_{[i+2,n]}} \otimes (\ket{\psi}^{\otimes n -k-1} \otimes \ket{0}^{\otimes k+1})_{\reg{S}} \otimes \ket{k+1}_{\reg{K}}.
    \end{align*}
    Rearranging, this is equal to
    \begin{equation*}
    \frac{1}{\sqrt{2^{i+1}}} \sum_{k=0}^{i+1}\ket{a_k} \otimes \ket{0}_{\reg{CO}_{[i+2,n]}} \otimes (\ket{\psi}^{\otimes n -k} \otimes \ket{0}^{\otimes k})_{\reg{S}} \otimes \ket{k}_{\reg{K}},
    \end{equation*}
    where
    \begin{equation*}
        \ket{a_k} = \left\{\begin{array}{ll}
                            \ket{\psi_i(0)}_{\reg{CO}_{[1, i]}} \otimes \ket{0}_{\reg{C}_{[i+1]}}\ket{0}_{\reg{O}_{[i+1]}} & \text{if $k = 0$,}\\
                            \ket{\psi_i(i)}_{\reg{CO}_{[1, i]}} \otimes \ket{1}_{\reg{C}_{[i+1]}}\ket{\psi}_{\reg{O}_{[i+1]}} & \text{if $k = i+1$,}\\
                            \ket{\psi_i(k)}_{\reg{CO}_{[1, i]}} \otimes \ket{0}_{\reg{C}_{[i+1]}}\ket{0}_{\reg{O}_{[i+1]}} + \ket{\psi_i(k-1)}_{\reg{CO}_{[1, i]}} \otimes \ket{1}_{\reg{C}_{[i+1]}}\ket{\psi}_{\reg{O}_{[i+1]}}& \text{otherwise.}
                            \end{array}\right.
    \end{equation*}
    It suffices to check that $\ket{a_k} = \ket{\psi_{i+1}(k)}$, and this is true by inspection. This completes the inductive step.
\end{proof}

This completes the proof of \Cref{lem:reference}, as
\begin{equation*}
    \tr_{\reg{SK}}(\ketbra{\circuit(\psi)}{\circuit(\psi)}) = \frac{1}{2^n} \sum_{k=0}^n \ketbra{\psi(k)}{\psi(k)},
\end{equation*}
due to the fact that the contents of the clock register are orthogonal for different values of $k$.

The circuit $\circuit$ consists of $n$ $\psi$-gadgets.
Each $\psi$-gadget involves one Hadamard gate, $n$ controlled $d$-dimensional $\mathrm{SWAP}$ gates, and one  $\lceil \log_2(n+1)\rceil$-qubit add-1 gate, for a total complexity of $\bigO{n \log(d)}$ gates per $\psi$-gadget.
This gives a total complexity of $\bigO{n^2 \log(d)}$ gates to compute the circuit $\circuit$.
\null\hfill\includegraphics[height=2.0ex]{acorn.png}

\subsection{Converting states to reflections about states}

\begin{lemma}[Simulating reflections from copies]
    \label{lem:reflect}
    Given $n = \bigO{1/\delta}$ copies of the state $\ket{\psi}$ and a state $\ket{\varphi}$, there is a circuit which outputs the reflection about $\ket{\psi}$ on $\ket{\varphi}$ to $\delta$ error.
    Specifically, this circuit performs a channel which is $\delta$-close to the reflection channel $\proj{\varphi} \mapsto (2\proj{\psi} - I) \proj{\varphi} (2 \proj{\psi} - I)$ in diamond distance.
    Moreover, this circuit is efficient and has gate complexity $\bigO{\log(d) /\delta}$.
\end{lemma}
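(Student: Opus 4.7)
The plan is to reduce the reflection to Hamiltonian simulation of the rank-one Hamiltonian $\proj{\psi}$ and then invoke quantum principal component analysis (QPCA)~\cite{lmr14,klloy17} to perform that Hamiltonian simulation from copies. The key observation is the identity
\begin{equation*}
    e^{-\ii\pi\proj{\psi}} = I + (e^{-\ii\pi} - 1)\proj{\psi} = I - 2\proj{\psi} = -(2\proj{\psi} - I),
\end{equation*}
so up to an irrelevant global phase, the reflection channel $\proj{\varphi} \mapsto (2\proj{\psi} - I)\proj{\varphi}(2\proj{\psi} - I)$ is exactly the conjugation channel $\rho \mapsto e^{-\ii\pi\proj{\psi}}\rho e^{\ii\pi\proj{\psi}}$. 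It therefore suffices to simulate evolution under the Hamiltonian $\proj{\psi}$ for the constant time $t = \pi$.

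Next, I would implement this using the standard QPCA/LMR infinitesimal-swap primitive. Let $S$ be the swap operator on $\C^d \otimes \C^d$, and for a small angle $\tau > 0$ let
\begin{equation*}
    \Phi_\tau(\rho) \coloneqq \tr_2\bigl(e^{-\ii S \tau}(\rho \otimes \proj{\psi}) e^{\ii S \tau}\bigr).
\end{equation*}
A direct Taylor expansion using $\tr_2(S(\rho \otimes \proj{\psi})) = \proj{\psi}\rho$ and its adjoint gives
\begin{equation*}
    \Phi_\tau(\rho) = \rho - \ii\tau[\proj{\psi},\rho] + \bigO{\tau^2} = e^{-\ii\tau\proj{\psi}}\rho e^{\ii\tau\proj{\psi}} + \bigO{\tau^2},
\end{equation*}
with the $\bigO{\tau^2}$ error bounded in diamond norm (since the analysis holds uniformly on any purification of $\rho$; the ancilla $\proj{\psi}$ is a fixed state independent of the input register). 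Setting $\tau = \pi/n$ and composing $n$ such slices, each consuming a fresh copy of $\ket{\psi}$, produces a channel that is $\bigO{n \cdot \tau^2} = \bigO{1/n}$-close to $e^{-\ii\pi\proj{\psi}}(\cdot)e^{\ii\pi\proj{\psi}}$ in diamond distance, by subadditivity of the diamond norm under composition. Taking $n = \Theta(1/\delta)$ yields error at most $\delta$, which matches the claimed copy complexity.

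For the gate count, each slice requires implementing $e^{-\ii S \tau}$ on two $d$-dimensional registers. Since $S$ is a permutation that can be realized with $\log_2 d$ controlled SWAPs on qubit pairs, the operator $e^{-\ii S\tau}$ decomposes into $\bigO{\log d}$ elementary gates (using eigendecomposition $S = \Pi_+ - \Pi_-$ and applying the corresponding phase). Summing over $n = \bigO{1/\delta}$ slices gives total gate complexity $\bigO{\log(d)/\delta}$, as required.

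The main technical care point is ensuring the per-slice error is controlled in \emph{diamond} distance rather than merely trace distance on a fixed input; this is where the analysis has to be written out carefully. This follows because $\Phi_\tau$ acts as a channel on the input register alone, and the Taylor remainder $\bigO{\tau^2}$ comes from a bounded-norm operator-valued expansion of $e^{-\ii S\tau}$ whose coefficients do not depend on the input state, so the bound passes through to any extension of $\rho$. Everything else — the reduction to Hamiltonian simulation, the $1/\delta$ scaling via $n$-slice discretization, and the gate count — is routine.
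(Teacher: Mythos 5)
Your proposal is correct and follows essentially the same route as the paper: both reduce the reflection to simulating $e^{-\ii\pi\proj{\psi}}$ and implement it via the LMR/QPCA infinitesimal-swap primitive, yielding $\bigO{1/\delta}$ copies and $\bigO{\log(d)/\delta}$ gates. The only difference is stylistic: the paper invokes the Kimmel--Lin--Low--Ozols--Yoder statement of QPCA (\cref{lem:qpca}) as a black box, whereas you re-derive the underlying LMR analysis (Taylor expansion of $\tr_2(e^{-\ii S\tau}(\cdot\otimes\proj{\psi})e^{\ii S\tau})$ and $n$-fold composition) from scratch, including the diamond-norm care point that the paper delegates to the citation.
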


This result is a corollary of quantum principal component analysis~\cite{lmr14}; we state the version of it due to Kimmel, Lin, Low, Ozols, and Yoder~\cite{klloy17}.
As we mention at the start of the section, similar statements have also appeared in prior work~\cite{aru14,jls18,gz25}, except with $n = \bigO{1/\delta^2}$.

\begin{lemma}[{Quantum principal component analysis \cite{lmr14}, as stated in \cite[Theorem 1]{klloy17}}]
    \label{lem:qpca}
    Let $\rho$ and $\sigma$ be two unknown quantum states, with $\sigma$ of dimension at least that of $\rho$.
    Let $t \in \R$.
    Then there exists a quantum algorithm that transforms $\sigma \otimes \rho^{\otimes n}$ into $\wt{\sigma}$ such that
    \begin{align*}
        \frac12 \trnorm{e^{-\ii \rho t} \sigma e^{\ii \rho t} - \wt{\sigma}} \leq \delta,
    \end{align*}
    as long as the number of copies of $\rho$ is $n = \bigO{t^2/\delta}$.
    In other words, this quantum algorithm implements the unitary $e^{-\ii \rho t}$ up to error $\delta$ in diamond norm, using $\bigO{t^2/\delta}$ copies of $\rho$.
\end{lemma}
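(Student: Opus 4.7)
The plan is to follow the original density matrix exponentiation argument of Lloyd, Mohseni, and Rebentrost: realize $e^{-\ii\rho t}(\cdot)e^{\ii\rho t}$ as an $n$-fold Trotter-like composition of an elementary channel that implements the infinitesimal evolution $e^{-\ii\rho\Delta t}(\cdot)e^{\ii\rho\Delta t}$ to second order in $\Delta t = t/n$, consuming one fresh copy of $\rho$ per step.

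First I would define the elementary channel
\[
\mathcal{E}_{\Delta t}(\sigma) := \tr_2\bigl[e^{\ii S\Delta t}\,(\sigma\otimes\rho)\,e^{-\ii S\Delta t}\bigr],
\]
where $S$ is the swap operator between the $\sigma$-register and a fresh $\rho$-register. Using $S^2 = I$ to write $e^{\ii S\Delta t} = \cos(\Delta t)\,I + \ii\sin(\Delta t)\,S$, together with the partial-trace identities $\tr_2[S(A\otimes B)] = AB$ and $\tr_2[(A\otimes B)S] = BA$, I would expand and collect terms to get the closed form
\[
\mathcal{E}_{\Delta t}(\sigma) = \cos^2(\Delta t)\,\sigma + \sin^2(\Delta t)\,\rho - \ii\sin(\Delta t)\cos(\Delta t)\,[\rho,\sigma].
\]
Its Taylor expansion $\sigma - \ii\Delta t\,[\rho,\sigma] + \bigO{\Delta t^2}$ matches that of $e^{-\ii\rho\Delta t}\sigma e^{\ii\rho\Delta t}$ to first order, giving a per-input trace-distance discrepancy of order $\Delta t^2$.

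To lift this pointwise estimate to a diamond-norm bound, I would rerun the calculation with $\sigma$ replaced by an arbitrary bipartite input $\sigma_{\reg{SE}}$, where $\reg{E}$ is an arbitrary reference system. Both $\mathcal{E}_{\Delta t}$ and the ideal channel act trivially on $\reg{E}$, and the closed form above depends linearly on $\sigma$, so the Taylor remainder's contribution is bounded by $\bigO{\Delta t^2}$ uniformly in $\sigma_{\reg{SE}}$. Composing $n$ such channels with $\Delta t = t/n$ and a fresh copy of $\rho$ at each step, subadditivity of diamond distance under composition yields a total error of $n \cdot \bigO{(t/n)^2} = \bigO{t^2/n}$. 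Setting $n = \bigO{t^2/\delta}$ achieves the target error $\delta$.

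The main technical obstacle is controlling the higher-order Taylor remainder uniformly in the input---that is, getting a genuine diamond-norm bound rather than a trace-distance bound for a fixed $\sigma$. The explicit trigonometric closed form above makes this essentially routine: the remainder in $\mathcal{E}_{\Delta t} - e^{-\ii\rho\Delta t}(\cdot)e^{\ii\rho\Delta t}$ is an explicit analytic expression in $\Delta t$ whose Choi operator has norm $\bigO{\Delta t^2}$ independently of the input, using only $\trnorm{\rho}\leq 1$.
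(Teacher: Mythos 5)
This lemma is not proved in the paper at all: it is imported verbatim from Kimmel--Lin--Low--Ozols--Yoder \cite[Theorem 1]{klloy17} (building on \cite{lmr14}), so there is no in-paper argument to compare against; what you have written is essentially a reconstruction of the standard LMR partial-swap proof that underlies the cited theorem, and it is sound, including the key point of bounding the per-step error uniformly over inputs entangled with a reference system so that the errors add in diamond norm to give $n = O(t^2/\delta)$. Two small corrections: the partial-trace identities are $\mathrm{tr}_2[S(A\otimes B)] = BA$ and $\mathrm{tr}_2[(A\otimes B)S] = AB$ (you have them reversed), so to match $e^{-\ii\rho t}$ you should conjugate by $e^{-\ii S\Delta t}$ rather than $e^{+\ii S\Delta t}$ --- this only flips the sign of the commutator term and does not affect the count; and the closing remark that a bound on the trace norm of the Choi operator of the remainder gives a diamond-norm bound is not valid as stated (these differ by a dimension factor in general), but it is also unnecessary, since your direct uniform bound over bipartite inputs (where the $\sin^2(\Delta t)\,\rho$ term becomes $\sin^2(\Delta t)\,\rho\otimes\mathrm{tr}_S[\sigma_{\reg{S}\reg{E}}]$ and every term has trace norm $O(1)$) already yields the per-step diamond-norm error $O(\Delta t^2)$.
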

\noindent \cref{lem:reflect} follows from applying \cref{lem:qpca} with $\rho \gets \proj{\psi}$, $\sigma \gets \proj{\varphi}$, and $t \gets \pi$, since
\begin{align*}
    e^{\ii \pi \proj{\varphi}} = e^{-\ii \pi \proj{\varphi}} = -(2\proj{\varphi} - I).
\end{align*}
By \cite[Remark 2]{klloy17}, this protocol uses $\bigO{\log(d) t^2/\delta}$ one- and two-qubit gates.

\subsection{Converting mixed state samples into random purifications}

\begin{lemma}[{Purifying a mixed state \includegraphics[height=2.0ex]{acorn.png}}]
    \label{lem:purify}
    There is a unitary circuit $\circuit^{(n)}$ such that, for all mixed states $\rho \in \C^{d \times d}$ of rank at most $r$,
    \begin{align}\label{eq:purification-formula}
        \circuit(\rho^{\otimes n}) = \E_{\ket{\varrho}} \proj{\varrho}^{\otimes n},
    \end{align}
    where $\ket{\varrho} \in \C^{d} \otimes \C^{r}$ is sampled uniformly from the space of purifications of $\rho$, meaning that $\tr_2(\proj{\varrho}) = \rho$.
    Moreover, this circuit is efficient and has a gate complexity of $\poly(n, \log(d), \log(1/\eps))$ to perform it to $\eps$ accuracy in diamond norm distance.
\end{lemma}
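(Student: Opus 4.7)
The plan is to mirror Lemma~\ref{lem:reference}: identify a Haar average on the purifying register equivalent to the target mixture, and realize it by an efficient circuit. Fix the spectral decomposition $\rho = \sum_{i=1}^r \lambda_i \proj{e_i}$ and let $\ket{\varrho_0} = \sum_i \sqrt{\lambda_i} \ket{e_i}_B \ket{i}_A$ be a canonical purification. Every purification in $\C^d \otimes \C^r$ has the form $\ket{\varrho_V} = (I_B \otimes V_A)\ket{\varrho_0}$ for some $V \in U(r)$, and uniform sampling from the space of purifications amounts to Haar sampling of $V$, so the target equals
$$\E_{V \sim U(r)} (I_B \otimes V_A)^{\otimes n} \proj{\varrho_0}^{\otimes n} (I_B \otimes V_A^\dagger)^{\otimes n} \;=\; (I_B \otimes \mathcal{T}_A)\bigl(\proj{\varrho_0}^{\otimes n}\bigr),$$
where $\mathcal{T}(X) = \E_V V^{\otimes n} X V^{\dagger \otimes n}$ is the standard Haar twirl.

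Under Schur--Weyl duality, $(\C^r)^{\otimes n} \cong \bigoplus_{\lambda \vdash n,\,\ell(\lambda)\le r} V_\lambda^{U(r)} \otimes W_\lambda^{S_n}$, and $\mathcal{T}$ erases the $V_\lambda$-content, replacing it by the maximally mixed state, while leaving $W_\lambda$ invariant. Meanwhile, $\rho^{\otimes n}$ is $S_n$-invariant on $B$ and Schur--Weyl also decomposes it as $\sum_\lambda p_\lambda \cdot \tau_\lambda^B \otimes I_{W_\lambda^{S_n,B}}/\dim W_\lambda$, with $p_\lambda$ the Schur--Weyl distribution and $\tau_\lambda^B$ an eigenvalue-dependent state on $V_\lambda^{U(d)}$. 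A short calculation, matching the two $W_\lambda^{S_n}$ multiplicity spaces via their Gelfand--Tsetlin bases, shows that the target state, after Schur-transforming both sides, equals
$$\sum_\lambda p_\lambda \cdot \tau_\lambda^B \otimes \frac{I_{V_\lambda^{U(r)}}}{\dim V_\lambda^{U(r)}} \otimes \proj{\Omega_\lambda},$$
where $\ket{\Omega_\lambda}$ is the maximally entangled state across the two copies of $W_\lambda^{S_n}$.

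From this closed form the circuit reads off: (i) apply the Schur transform to the $B$ registers (which already hold $\rho^{\otimes n}$); (ii) adjoin $n$ fresh $A$ registers of dimension $r$ and apply the Schur transform there; (iii) coherently condition on the Young-diagram label $\lambda$ extracted from the $B$ side and, in the corresponding $A$-sector, prepare the maximally mixed state on $V_\lambda^{U(r)}$ (via an auxiliary maximally entangled pair which we trace out, so that the overall circuit remains unitary plus ancilla-traceout) together with $\ket{\Omega_\lambda}$ on the two $W_\lambda^{S_n}$ multiplicity spaces; (iv) invert both Schur transforms. Efficiency follows from $\poly(n,\log d,\log r,\log(1/\eps))$-sized approximate Schur transforms and the fact that each conditional preparation step is of polynomial size, with the diamond-norm error inherited from the Schur-transform approximation.

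The main obstacle I anticipate is ensuring a consistent identification of the two copies of $W_\lambda^{S_n}$ appearing on the $A$ and $B$ sides: they are only intrinsically defined up to isomorphism, so one must fix a basis (the Young--Yamanouchi basis is natural) on each and verify that the $\ket{\Omega_\lambda}$ so constructed matches the one appearing in the Schur--Weyl analysis of the target. Any mismatch would only produce a purification related to $\E_V \proj{\varrho_V}^{\otimes n}$ by a $V$-independent unitary on the $A$ register, which can be corrected for in post-processing, but pinning down the exact isomorphism is the fiddly part. Once that bookkeeping is done, the gate-count and approximation claims follow directly from efficient Schur-transform implementations.
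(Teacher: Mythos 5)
Your approach is essentially identical to the paper's: Schur--Weyl duality, Haar twirl on the purifying register, the same closed-form expression (maximally entangled state across the two $S_n$ multiplicity spaces, maximally mixed state on the $U(r)$ Weyl module, with the $U(d)$ factor pinned down by the partial-trace constraint $\tr_{\reg{A}}\E\proj{\varrho}^{\otimes n}=\rho^{\otimes n}$), and the same circuit structure. The basis-matching obstacle you flag is resolved exactly as you suspect: the paper works in Young's orthogonal basis (your Young--Yamanouchi basis), in which each $\kappa_\lambda(\pi)$ is real, so that averaging $\kappa_\lambda(\pi)\otimes\kappa_\lambda(\pi)$ over $S_n$ yields $\proj{\Omega_\lambda}$ by Schur orthogonality, and it cites~\cite{PSW25} to verify that the Bacon--Chuang--Harrow Schur transform indeed outputs this basis.
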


It is not possible to map copies of $\rho$ to copies of a single deterministic purification $\proj{\varrho}$~\cite{ldcl25}.
However, this result shows that we can convert copies of $\rho$ to copies of a random purification which is consistent across copies.
This purifies all of the ``relative'' randomness between copies of $\rho$, while leaving ``global'' randomness across all of the copies, which is far easier to handle computationally.
This result can be viewed as an algorithmic strengthening of the result \cite[Theorem 35]{sw22} and especially of the main result of~\cite{cwz24}. The work of \cite{cwz24} shows that when testing properties of the state $\rho$,
it does not help to be given copies of a random purification $\ket{\varrho}$.
This entails showing that a property testing algorithm which is given copies of a random purification $\ket{\varrho}$
can be simulated by a property testing algorithm which is only given copies of $\rho$,
and they do this by analyzing the acceptance probabilities of these two testers and showing that they coincide.
A key technical ingredient of this proof is explicit formulas for the mixed states $\rho^{\otimes n}$ and $\E_{\ket{\varrho}} \ketbra{\varrho}{\varrho}^{\otimes n}$.
Our contribution is in recognizing that these formulas reveal an efficient algorithm to convert between these two mixed states.
Therefore, we can use these ideas for algorithms and not just lower bounds.
We believe that our result gives a more conceptual reason for why random purifications do not help in property testing,
and we expect that our result will find applications in other domains as well.

\subsubsection{Representation theory background}

\newcommand{\epr}{\mathrm{EPR}}
\newcommand{\schur}{U_{\mathrm{Schur}}}
\newcommand{\specht}{\mathrm{Sp}}
\newcommand{\weyl}{\mathrm{V}}

In this subsection, we will cover standard representation theoretic material such as Schur-Weyl duality.
For a more thorough treatment of these topics, see, for example, \cite[Chapter 2]{Wri16}.

A \emph{partition of $n$} is a tuple of integers $\lambda = (\lambda_1, \ldots, \lambda_d)$ such that $\lambda_1 + \cdots + \lambda_d = n$ and $\lambda_1 \geq \cdots \geq \lambda_d \geq 0$.
We write this as $\lambda \vdash n$ for shorthand.
The \emph{length} $\ell(\lambda)$ of $\lambda$ is the number of nonzero entries in $\lambda$.
Given a partition $\lambda$, the corresponding \emph{Young diagram} is a collection of boxes arranged into $d$ rows in which the $i$-th row contains $\lambda_i$ boxes.
A \emph{standard Young tableau (SYT)} $S$ of shape $\lambda$ is a Young diagram of shape $\lambda$ whose entries have been filled in with the integers from $1$ through $n$, subject to the entries in each row being strictly increasing from left-to-right and the entries in each column being strictly increasing from top-to-bottom.
Similarly, a \emph{semistandard Young tableau (SSYT)} $T$ of shape $\lambda$ and alphabet $[d]$ is a Young diagram of shape $\lambda$ whose entries have been filled in with integers from the set $[d]$, subject to the rows being weakly increasing from left-to-right and the columns being strictly increasing from top-to-bottom.
We illustrate these definitions in \Cref{fig:young-tableau}.

\begin{figure*}[t!]
 \centering
    \begin{subfigure}[t]{0.3\textwidth}
        \centering
        \ytableausetup{centertableaux}
\begin{ytableau}
\phantom{1} & \phantom{1} & \phantom{1} & \phantom{1}  \\
\phantom{1} & \phantom{1} & \phantom{1} \\
\phantom{1}
\end{ytableau}
        \caption{The Young diagram $\lambda = (4, 3, 1)$. Note that $\lambda \vdash n$ for $n = 8$.}
    \end{subfigure}
    ~ 
    \begin{subfigure}[t]{0.3\textwidth}
        \centering
        \ytableausetup{centertableaux}
\begin{ytableau}
1 & 2 & 4 & 7  \\
3 & 6 & 8 \\
5
\end{ytableau}
\caption{A standard Young tableau of shape $\lambda$.}
    \end{subfigure}
    ~
    \begin{subfigure}[t]{0.3\textwidth}
        \centering
        \ytableausetup{centertableaux}
\begin{ytableau}
1 & 1 & 1 & 2  \\
2 & 2 & 3 \\
3
\end{ytableau}
\caption{A semistandard Young tableau of shape $\lambda$ and alphabet $[3]$.}
    \end{subfigure}
    \caption{}
    \label{fig:young-tableau}
\end{figure*}

\paragraph{Representation theory of the symmetric group.}
The irreducible representations of the symmetric group $S_n$ are indexed by partitions $\lambda \vdash n$ and are written $(\kappa_\lambda, \specht_{\lambda})$.
We will choose a particular basis of the irreducible representations of the symmetric group known as \emph{Young's orthogonal basis},
which gives an orthonormal basis for the space $\specht_{\lambda}$ consisting of vectors $\ket{S}$, where $S$ is an SYT of shape $\lambda$.
In this basis, the irreducible representations give rise to \emph{Young's orthogonal representation}, which are so-called because for each $\pi \in S_n$, $\kappa_{\lambda}(\pi)$ is an orthogonal matrix.
This implies the following property, which we will use in this work: when written in Young's orthogonal basis, each $\kappa_{\lambda}(\pi)$ matrix is real-valued.
For shorthand, we will write $\dim(\lambda) = \dim(\specht_{\lambda})$ for the dimension of the Specht module;
note that $\dim(\lambda)$ is equal to the size of Young's orthogonal basis, which is equal to the number of standard Young tableaus of shape $\lambda$.

\paragraph{Representation theory of the general linear group.}
The polynomial irreducible representations of the general linear group $GL(d)$ are indexed by partitions $\lambda$ in which $\ell(\lambda) \leq d$ and are written $(\nu^d_{\lambda}, \weyl_{\lambda}^d)$.
Here, ``polynomial'' refers to the fact that the matrix entries of $\nu^d_\lambda(M)$ are polynomials in the entries of $M$.
We will choose a particular basis of these irreducible representations of the general linear group known as the \emph{Gelfand-Tsetlin basis},
which gives an orthonormal basis for the space $\weyl_{\lambda}^d$ consisting of vectors $\ket{T}$, where $T$ is an SSYT of shape $\lambda$ and alphabet $[d]$.
A particularly important subgroup of the general linear group is the unitary group $U(d)$.
For each $\lambda$, $(\nu^d_{\lambda}, \weyl_{\lambda}^d)$ also serves as an irreducible representation of the unitary group.
For shorthand, we will write $s^d_{\lambda}(M) = \tr(\nu^d_{\lambda}(M))$, which is known as the \emph{Schur polynomial}.
We note that $s^d_{\lambda}(M)$ can be written as a polynomial in $M$'s eigenvalues, and it is zero unless $M$ has at least $\ell(\lambda)$ nonzero eigenvalues.

\paragraph{Schur-Weyl duality.}
There are two especially important representations of these groups which act on the space $(\C^d)^{\otimes n}$,
a representation $P(\pi)$ of the symmetric group
and a representation $Q^d(M)$ of the general linear group.
These representations are defined on standard basis vectors $\ket{i_1, \ldots, i_n} \in (\C^d)^{\otimes n}$ as follows:
\begin{align*}
    P(\pi) \cdot \ket{i_1, \ldots, i_n} &= \ket{i_{\pi^{-1}(1)}, \ldots, i_{\pi^{-1}(n)}}, \quad \text{for all $\pi \in S_n$},\\
    Q^d(M) \cdot \ket{i_1, \ldots, i_n} &= (M \cdot |i_1\rangle) \otimes \cdots \otimes (M \cdot |i_d\rangle), \quad \text{for all $U \in U(d)$}.
\end{align*}
These two representations commute,
and it turns out that they can be simultaneously diagonalized into irreducible representations in a particularly nice form.
This fact, known as \emph{Schur-Weyl duality},
states that there exists a unitary matrix $\schur$ known as the \emph{Schur transform} such that for any permutation $\pi \in S_n$ and unitary $M \in GL(d)$,
\begin{equation}\label{eq:schur-transform}
    \schur^d \cdot P(\pi)Q^d(M) \cdot (\schur^d)^{\dagger}
    = \sum_{\lambda \vdash n, \ell(\lambda) \leq d} \ketbra{\lambda}{\lambda} \otimes \kappa_{\lambda}(\pi) \otimes \nu^d_{\lambda}(M).
\end{equation}
We write $\schur^d$ to emphasize the dimensionality $d$, as our proof will involve two Schur transforms on registers of different dimensions,
though when the dimension $d$ is clear from context we will sometimes write it as $\schur$ for simplicity.
The Schur transform affects a change-of-basis from the standard basis $\ket{i_1, \ldots, i_n}$ into the \emph{Schur basis}, which consists of the vectors $\ket{\lambda, S, T}$, where $\lambda \vdash n$ is a partition with $\ell(\lambda) \leq d$, $S$ is an SYT of shape $\lambda$, and $T$ is an SSYT of shape $\lambda$ and alphabet $[d]$.
There is an efficient algorithm due to Bacon, Chuang, and Harrow~\cite{BCH05} which computes the Schur transform up to diamond distance $\epsilon$ in time $\poly(n, \log(d), \log(1/\epsilon))$. (The runtime of $\poly(n, d, \log(1/\epsilon))$ is shown in the original work, but a footnote at the bottom of \cite[Page 160]{Har05} sketches how to reduce the dependence on $d$ to $\poly\log(d)$; see \cite{bfglot25} for full details.)
We note that their work only shows an algorithm for the Schur transform that produces \emph{some} irrep matrices for the symmetric group.
In the work of Pelecanos, Spilecki, and Wright~\cite{psw25}, it is shown that their algorithm does indeed compute Young's orthogonal basis, and so the $\kappa_{\lambda}(\pi)$ matrices in \Cref{eq:schur-transform} do indeed correspond to Young's orthogonal form.

\subsubsection{Formulas for the mixed states}

The primary goal of this subsection is to compute a formula for the mixed state on the right-hand side of \Cref{eq:purification-formula} in the Schur basis.
To begin, let us note the following formula for the mixed state $\rho^{\otimes n}$ in the Schur basis:
\begin{equation*}
    \schur^d \cdot \rho^{\otimes n} \cdot (\schur^d)^{\dagger}
    = \schur^d \cdot Q^d(\rho) \cdot (\schur^d)^{\dagger}
    = \sum_{\lambda \vdash n, \ell(\lambda) \leq d} \ketbra{\lambda}{\lambda} \otimes I_{\dim(\lambda)} \otimes \nu^d_{\lambda}(\rho).
\end{equation*}
This equation follows from \Cref{eq:schur-transform} when all of $\rho$'s eigenvalues are positive, as in that case $\rho$ is an element of $GL(d)$.
When $\rho$ has eigenvalues which are 0, the right-hand side is at least still well-defined because the matrix entries of $\nu_{\lambda}(\rho)$ are polynomials in the entries of $\rho$.
And indeed the equation can still be seen to hold by writing $\rho$ as the limit of a set of states $\rho^+$ with positive eigenvalues which tend towards $\rho$ in the limit.
We note that because $\rho$ is PSD, $\nu_{\lambda}^d(\rho)$ must also be PSD.
Furthermore, if $\rho$ is rank $r$, then $s_{\lambda}^d(\rho) = \tr(\nu_{\lambda}^d(\rho)) = 0$ unless $\ell(\lambda) \leq r$, which implies that $\nu_{\lambda}^d(\rho) = 0$ unless $\ell(\lambda) \leq r$. As a result, we may restrict the sum over $\lambda$ to those of height $\ell(\lambda) \leq r$, as follows:
\begin{equation}\label{eq:rho-in-schur}
    \schur^d \cdot \rho^{\otimes n} \cdot (\schur^d)^{\dagger}
    = \sum_{\lambda \vdash n, \ell(\lambda) \leq r} \ketbra{\lambda}{\lambda} \otimes I_{\dim(\lambda)} \otimes \nu^d_{\lambda}(\rho).
\end{equation}

With this established, let us move towards a formula for the right-hand side of \Cref{eq:purification-formula} by first deriving a formula for a specific element of the mixture.
In other words, letting $\ket{\rho} \in \C^d \otimes \C^r$ be a fixed purification of $\rho$, we will derive a formula for $\ket{\rho}^{\otimes n}$ in the Schur basis.
To formalize this, note that each copy of $\ket{\rho}$ has two registers, the first of dimension $d$ which we will write as the $\reg{A}$ register and the second of dimension $r$ which we will write as the $\reg{A'}$ register. Then $\ket{\rho}^{\otimes n}$ has $n$ copies of the $\reg{A}$ register and $n$ copies of the $\reg{A'}$ register.
We will apply two separate Schur transforms, one to the $\reg{A}$ registers and the other to the $\reg{A'}$ registers, resulting in the state
\begin{equation*}
    (\schur ^d)_{\reg{A}} \otimes (\schur^r)_{\reg{A'}} \cdot \ket{\rho}^{\otimes n}.
\end{equation*}
Schur transforming the $\reg{A}$ registers will take us to the Schur basis, labeled by vectors of the form $\ket{\lambda}_{\reg{Y}} \ket{S}_{\reg{P}} \ket{T}_{\reg{Q}}$,
where $\reg{Y}$ is the ``Young diagram'' register, and $\reg{P}$ and $\reg{Q}$ are the symmetric and general linear group registers, respectively.
Similarly, Schur transforming the $\reg{A'}$ registers will take us to a Schur basis consisting of vectors of the form $\ket{\lambda'}_{\reg{Y'}} \ket{S'}_{\reg{P'}} \ket{T'}_{\reg{Q'}}$.
Note that $T$ is an SSYT of shape $\lambda$ and alphabet $[d]$, whereas $T'$ is an SSYT of shape $\lambda'$ and alphabet $[r]$.
Henceforth we will drop the $d$ and $r$ from the Schur transforms, with the understanding that $\schur^{\otimes 2}$ always refers to $(\schur ^d)_{\reg{A}} \otimes (\schur^r)_{\reg{A'}}$.

\paragraph{Step 1: permutation symmetry.}
The first property that we will note is that for any permutation $\pi \in S_n$,
\begin{equation*}
    P(\pi)_{\reg{A}} \otimes P(\pi)_{\reg{A'}} \cdot \ket{\rho}^{\otimes n} = \ket{\rho}^{\otimes n}.
\end{equation*}
Thus, by averaging over $\pi \in S_n$, we see that
\begin{equation}\label{eq:contained-in-proj}
        \E_{\pi \sim S_n} [P(\pi)_{\reg{A}} \otimes P(\pi)_{\reg{A'}}] \cdot \ket{\rho}^{\otimes n} = \ket{\rho}^{\otimes n}.
\end{equation}
Our first step to understanding $\ket{\rho}^{\otimes n}$,
then, involves understanding this average over permutations.
This is the focus of the next lemma.

\begin{definition}[Specht module EPR state]
    Let $\lambda \vdash n$. Then we write $\ket{\epr_{\lambda}}$ for the pure state inside $\specht_{\lambda} \otimes \specht_{\lambda}$ given by
    \begin{equation*}
        \ket{\epr_{\lambda}} \coloneqq \frac{1}{\sqrt{\dim(\lambda)}} \cdot \sum_{S} \ket{S} \otimes \ket{S},
    \end{equation*}
    where the sum ranges over all SYTs of shape $\lambda$.
\end{definition}

\begin{lemma}[Averaging over permutations gives an EPR state]\label{lem:averaging-epr}
    \begin{equation*}
        \schur^{\otimes 2} \cdot \Big(\E_{\pi \sim S_n} P(\pi)_{\reg{A}} \otimes P(\pi)_{\reg{A'}}\Big) \cdot (\schur^{\dagger})^{\otimes 2}
        = \sum_{\lambda \vdash n, \ell(\lambda) \leq r} \ketbra{\lambda\lambda}{\lambda\lambda}_{\reg{Y}\reg{Y'}} \otimes \ketbra{\epr_{\lambda}}{\epr_{\lambda}}_{\reg{P}\reg{P'}} \otimes I_{\reg{Q}\reg{Q'}}.
    \end{equation*}
\end{lemma}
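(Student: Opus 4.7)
The plan is to conjugate the operator by the two Schur transforms, apply Schur–Weyl duality to reduce everything to a block-diagonal form indexed by $(\lambda, \lambda')$, and then use Schur's lemma to kill the off-diagonal blocks and identify the on-diagonal blocks with EPR projectors.

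First, I apply \Cref{eq:schur-transform} with the trivial action on the general linear factor (take $M = I_d$ on $\reg{A}$ and $M = I_r$ on $\reg{A'}$, so $\nu^\bullet_\lambda(I)$ becomes an identity). This yields
\begin{align*}
\schur^d P(\pi)(\schur^d)^{\dagger} &= \sum_{\lambda \vdash n,\, \ell(\lambda) \leq d} \ketbra{\lambda}{\lambda}_{\reg{Y}} \otimes \kappa_\lambda(\pi)_{\reg{P}} \otimes I_{\reg{Q}}, \\
\schur^r P(\pi)(\schur^r)^{\dagger} &= \sum_{\lambda' \vdash n,\, \ell(\lambda') \leq r} \ketbra{\lambda'}{\lambda'}_{\reg{Y'}} \otimes \kappa_{\lambda'}(\pi)_{\reg{P'}} \otimes I_{\reg{Q'}}.
\end{align*}
Taking the tensor product and reordering registers,
\begin{equation*}
\schur^{\otimes 2}\bigl(P(\pi)_{\reg{A}} \otimes P(\pi)_{\reg{A'}}\bigr)(\schur^{\dagger})^{\otimes 2} = \sum_{\lambda, \lambda'} \ketbra{\lambda \lambda'}{\lambda \lambda'}_{\reg{YY'}} \otimes \bigl(\kappa_\lambda(\pi) \otimes \kappa_{\lambda'}(\pi)\bigr)_{\reg{PP'}} \otimes I_{\reg{QQ'}},
\end{equation*}
where the sum is restricted to $\ell(\lambda) \leq d$ and $\ell(\lambda') \leq r$.

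Next, I push the average over $\pi \sim S_n$ inside, so that the only place it lands is on the factor $\kappa_\lambda(\pi) \otimes \kappa_{\lambda'}(\pi)$. The operator $\Pi_{\lambda, \lambda'} \coloneqq \E_\pi \kappa_\lambda(\pi) \otimes \kappa_{\lambda'}(\pi)$ is the orthogonal projector onto the $S_n$-invariant subspace of $\specht_\lambda \otimes \specht_{\lambda'}$ under the representation $\kappa_\lambda \otimes \kappa_{\lambda'}$. Because we chose Young's orthogonal basis, the matrices $\kappa_\lambda(\pi)$ are real orthogonal, so the representation $\kappa_{\lambda'}$ is isomorphic to the dual of $\kappa_\lambda$ exactly when $\lambda = \lambda'$; by Schur's lemma, $\Pi_{\lambda, \lambda'} = 0$ whenever $\lambda \neq \lambda'$, and $\Pi_{\lambda, \lambda}$ has rank $1$.

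It remains to identify $\Pi_{\lambda, \lambda}$ as $\ketbra{\epr_\lambda}{\epr_\lambda}$. For this, I verify directly that $\ket{\epr_\lambda}$ lies in the invariant subspace: for any real orthogonal matrix $O$ acting on a space with orthonormal basis $\{\ket{S}\}$, the identity $(O \otimes O)\sum_S \ket{S}\ket{S} = \sum_{S,S',S''} O_{S',S} O_{S'',S} \ket{S'}\ket{S''} = \sum_{S',S''} (OO^{\trans})_{S',S''} \ket{S'}\ket{S''} = \sum_{S'} \ket{S'}\ket{S'}$ holds. Applied with $O = \kappa_\lambda(\pi)$ this gives $(\kappa_\lambda(\pi) \otimes \kappa_\lambda(\pi))\ket{\epr_\lambda} = \ket{\epr_\lambda}$, so $\ket{\epr_\lambda}$ is invariant, and since the invariant subspace is $1$-dimensional and $\ket{\epr_\lambda}$ is a unit vector, $\Pi_{\lambda, \lambda} = \ketbra{\epr_\lambda}{\epr_\lambda}$. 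Combining everything, and noting that the joint restriction $\ell(\lambda) \leq d$ and $\ell(\lambda) \leq r$ reduces to $\ell(\lambda) \leq r$ (as $r \leq d$), gives the claimed formula.

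The only conceptual step is Schur's lemma, which is standard; the main thing to be careful about is that the realness of $\kappa_\lambda$ in Young's orthogonal basis is what makes the invariant vector take the specific EPR form $\ket{\epr_\lambda}$ (as opposed to an ``anti-EPR'' state built from a dual basis), and that the height restriction on $\lambda$ is correctly inherited from the smaller of the two Schur transforms.
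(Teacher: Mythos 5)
Your proposal is correct, and it reaches the conclusion by a slightly different route from the paper's. After the identical Schur--Weyl setup, the paper goes fully concrete: it expands $\E_\pi[\kappa_\lambda(\pi) \otimes \kappa_\mu(\pi)]$ in Young's orthogonal basis, uses realness to rewrite $\kappa_\lambda(\pi)_{S_1,S_2}$ as $\overline{\kappa_\lambda(\pi)_{S_1,S_2}}$, and invokes the grand Schur orthogonality relations to evaluate every matrix entry, recognizing the resulting array of entries as the EPR projector. You instead observe abstractly that $\Pi_{\lambda,\lambda'} = \E_\pi[\kappa_\lambda(\pi) \otimes \kappa_{\lambda'}(\pi)]$ is the projector onto the trivial-isotypic component of $\specht_\lambda \otimes \specht_{\lambda'}$, conclude from Schur's lemma that this component is $\delta_{\lambda\lambda'}$-dimensional, and then pin down the projector by verifying directly (via the $OO^\trans = I$ identity) that $\ket{\epr_\lambda}$ is an invariant unit vector. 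The two approaches are close cousins---grand orthogonality is itself a consequence of Schur's lemma---but yours sidesteps entry-by-entry bookkeeping in exchange for the small extra step of checking invariance of $\ket{\epr_\lambda}$, and both rely on exactly the same crucial feature: that Young's orthogonal representation is real, which is what makes the invariant vector an honest unweighted EPR state rather than one built from a dual basis. Your handling of the height restriction ($\ell(\lambda) \le \min(d,r) = r$ since $r \le d$) is also correct and matches the paper's conclusion.
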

\begin{proof}
    By Schur-Weyl duality,
    \begin{align}
        &\schur^{\otimes 2} \cdot \Big(\E_{\pi \sim S_n} P(\pi)^{\otimes 2}\Big) \cdot (\schur^{\dagger})^{\otimes 2}\nonumber\\
        ={}& \E_{\pi \sim S_n}\Big[\sum_{\lambda\vdash n,\ell(\lambda)\leq d}\sum_{\mu \vdash n,\ell(\mu)\leq r} \ketbra{\lambda}{\lambda}_{\reg{Y}} \otimes \ketbra{\mu}{\mu}_{\reg{Y'}} \otimes \kappa_{\lambda}(\pi)_{\reg{P}} \otimes \kappa_{\mu}(\pi)_{\reg{P'}} \otimes (I_{\dim(V_{\lambda}^d)})_{\reg{Q}} \otimes (I_{\dim(V_{\mu}^r)})_{\reg{Q'}}\Big]\nonumber\\
        ={}& \sum_{\lambda\vdash n,\ell(\lambda)\leq d}\sum_{\mu \vdash n,\ell(\mu)\leq r} \ketbra{\lambda}{\lambda}_{\reg{Y}} \otimes \ketbra{\mu}{\mu}_{\reg{Y'}} \otimes \E_{\pi \sim S_n}\big[\kappa_{\lambda}(\pi)_{\reg{P}} \otimes \kappa_{\mu}(\pi)_{\reg{P'}}\big] \otimes (I_{\dim(V_{\lambda}^d)})_{\reg{Q}} \otimes (I_{\dim(V_{\mu}^r)})_{\reg{Q'}}.\label{eq:for-later}
    \end{align}
    Expanding the expectation in Young's orthogonal basis, we have
    \begin{align}
        \E_{\pi \sim S_n}\big[\kappa_{\lambda}(\pi)_{\reg{P}} \otimes \kappa_{\mu}(\pi)_{\reg{P'}}\big]
        &= \sum_{S_1,S_2, S_3, S_4} \ketbra{S_1}{S_2} \otimes \ketbra{S_3}{S_4} \cdot \E_{\pi \sim S_n}[\kappa_{\lambda}(\pi)_{S_1, S_2} \cdot \kappa_{\mu}(\pi)_{S_3, S_4}]\nonumber\\
        &= \sum_{S_1,S_2, S_3, S_4} \ketbra{S_1}{S_2} \otimes \ketbra{S_3}{S_4} \cdot \E_{\pi \sim S_n}[\overline{\kappa_{\lambda}(\pi)_{S_1, S_2}} \cdot \kappa_{\mu}(\pi)_{S_3, S_4}],\label{eq:just-conjugated}
    \end{align}
    where in the last step we used the fact that $\kappa_{\lambda}(\pi)$ is a real-valued matrix, and so its entries are equal to their own conjugates. In these expressions, $S_1$ and $S_2$ are SYTs of shape $\lambda$, and $S_3$ and $S_4$ are SYTs of shape~$\mu$.
    Then the grand Schur orthogonality relations tell us that
    \begin{equation*}
        \E_{\pi \sim S_n}[\overline{\kappa_{\lambda}(\pi)_{S_1, S_2}} \cdot \kappa_{\mu}(\pi)_{S_3, S_4}]
        = \left\{\begin{array}{cl}
                1/\dim(\lambda) & \text{if $\lambda = \mu$, $S_1 = S_3$, $S_2 = S_4$},\\
                0 & \text{otherwise}.
                \end{array}\right.
    \end{equation*}
    Plugging this in above, we have that \Cref{eq:just-conjugated} is 0 unless $\lambda = \mu$, in which case it is
    \begin{equation*}
        \E_{\pi \sim S_n}\big[\kappa_{\lambda}(\pi)_{\reg{P}} \otimes \kappa_{\mu}(\pi)_{\reg{P'}}\big]
        = \sum_{S_1, S_2} \ketbra{S_1}{S_2} \otimes \ketbra{S_1}{S_2} \cdot \frac{1}{\dim(\lambda)}
        = \ketbra{\epr_{\lambda}}{\epr_{\lambda}}.
    \end{equation*}
    As a result,
    \begin{equation*}
        \eqref{eq:for-later}
        = \sum_{\lambda\vdash n, \ell(\lambda) \leq r} \ketbra{\lambda}{\lambda}_{\reg{Y}} \otimes \ketbra{\lambda}{\lambda}_{\reg{Y'}} \otimes \ketbra{\epr_{\lambda}}{\epr_{\lambda}}_{\reg{P}\reg{P'}} \otimes (I_{\dim(V_{\lambda}^d)})_{\reg{Q}} \otimes (I_{\dim(V_{\lambda}^r)})_{\reg{Q'}}
    \end{equation*}
    This completes the proof.
\end{proof}

Now we can use this to characterize $\ket{\rho}^{\otimes n}$, as follows.
\begin{lemma}\label{lem:purification-formula}
    For any purification $\ket{\rho}$ of $\rho$, we have
    \begin{equation*}
    (\schur)_{\reg{A}} \otimes (\schur)_{\reg{A'}} \cdot \ket{\rho}^{\otimes n}
    = \sum_{\lambda \vdash n, \ell(\lambda)\leq r}\sum_{T, T'} a_{\lambda, T, T'} \cdot \ket{\lambda, \lambda}_{\reg{Y} \reg{Y'}} \ket{\epr}_{\reg{P}\reg{P'}} \ket{T}_{\reg{Q}} \ket{T'}_{\reg{Q'}},
\end{equation*}
for some complex coefficients $a_{\lambda, T, T'}$.
\end{lemma}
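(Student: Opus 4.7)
The plan is to derive the formula directly from the symmetry identity \eqref{eq:contained-in-proj} by conjugating through the Schur transform and invoking \cref{lem:averaging-epr}. Concretely, I would argue that $\ket{\rho}^{\otimes n}$ lies in the image of the operator computed there, and then describe what this image looks like in the Schur basis.

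First, observe that because $\ket{\rho}^{\otimes n}$ is a tensor power, $P(\pi)_{\reg{A}} \otimes P(\pi)_{\reg{A'}}$ fixes it for every $\pi \in S_n$, so averaging over $\pi$ still fixes it, which is exactly \eqref{eq:contained-in-proj}. The operator $\E_{\pi} P(\pi)_{\reg{A}} \otimes P(\pi)_{\reg{A'}}$ is a projector (it is the Reynolds projector onto the diagonal $S_n$-invariants), so $\ket{\rho}^{\otimes n}$ lies in its image. Conjugating by $\schur^{\otimes 2}$, the vector $\schur^{\otimes 2}\ket{\rho}^{\otimes n}$ lies in the image of
\begin{equation*}
    \sum_{\lambda \vdash n,\; \ell(\lambda) \leq r} \ketbra{\lambda\lambda}{\lambda\lambda}_{\reg{Y}\reg{Y'}} \otimes \ketbra{\epr_\lambda}{\epr_\lambda}_{\reg{P}\reg{P'}} \otimes I_{\reg{Q}\reg{Q'}},
\end{equation*}
by \cref{lem:averaging-epr}.

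Second, read off the image of this projector. It is orthogonal, being a sum over $\lambda$ of mutually orthogonal summands (different $\lambda$ are distinguished in $\reg{Y}$), and each summand is itself a tensor product of a one-dimensional projector on $\reg{Y}\reg{Y'}\reg{P}\reg{P'}$ with the identity on $\reg{Q}\reg{Q'}$. So its image is spanned by vectors of the form $\ket{\lambda,\lambda}_{\reg{Y}\reg{Y'}}\ket{\epr_\lambda}_{\reg{P}\reg{P'}}\ket{T}_{\reg{Q}}\ket{T'}_{\reg{Q'}}$ with $\ell(\lambda) \leq r$, where $T$ is an SSYT of shape $\lambda$ and alphabet $[d]$ and $T'$ is an SSYT of shape $\lambda$ and alphabet $[r]$. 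Expanding $\schur^{\otimes 2}\ket{\rho}^{\otimes n}$ in this basis gives the claimed formula, with the coefficients $a_{\lambda, T, T'}$ being whatever the expansion produces.

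The only subtle point, which I would flag explicitly, is the range of $\lambda$: a priori the $\reg{A}$-side Schur transform produces $\lambda$ with $\ell(\lambda)\le d$ and the $\reg{A'}$-side produces $\lambda'$ with $\ell(\lambda')\le r$, and the diagonality forced by \cref{lem:averaging-epr} ($\lambda=\lambda'$) then collapses both to the tighter bound $\ell(\lambda)\le r$. No further calculation should be required, and I do not expect any real obstacle; the work has essentially been front-loaded into \cref{lem:averaging-epr}, and this lemma is really just the observation that a symmetric state lives in the image of the symmetrizing projector.
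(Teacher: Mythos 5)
Your proposal is correct and matches the paper's proof: both observe that $\ket{\rho}^{\otimes n}$ is fixed by the diagonal permutation action, identify $\E_\pi P(\pi)^{\otimes 2}$ as a projector whose Schur-basis form is given by \cref{lem:averaging-epr}, and then expand $\ket{\rho}^{\otimes n}$ in the resulting image basis. Your extra remark about the collapse of both height bounds to $\ell(\lambda) \le r$ via the $\lambda = \lambda'$ constraint is a nice clarification but not a departure from the paper's argument.
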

\begin{proof}
    If we write $\Pi = \E_{\pi \sim S_n} P(\pi)_{\reg{A}} \otimes P(\pi)_{\reg{A'}}$, \Cref{lem:averaging-epr} tells us that $\Pi$ is a projection matrix onto the subspace of states spanned, in the Schur basis, by vectors of the form 
    \begin{equation*}
        \ket{\lambda\lambda}_{\reg{Y}\reg{Y'}} \ket{\epr_{\lambda}}_{\reg{P}\reg{P'}}  \ket{T}_{\reg{Q}}  \ket{T'}_{\reg{Q'}},
    \end{equation*}
    where $\ell(\lambda) \leq r$.
    Furthermore, \Cref{eq:contained-in-proj} tells us that $\ket{\rho}^{\otimes n}$ is entirely contained inside $\Pi$, and so it can be written as a linear combination of states of this form. This proves the lemma.
\end{proof}

\paragraph{Step 2: unitary symmetry.}
The second property we will note is that the state $\E_{\ket{\rho}} \ketbra{\rho}{\rho}^{\otimes n}$ has unitary symmetry. In particular, writing $U$ for a Haar random unitary in $U(r)$, we note that
\begin{equation}\label{eq:randomized-fixed-guy}
    \E_{\ket{\rho}} \ketbra{\rho}{\rho}^{\otimes n}
    = \E_{U \sim U(r)} Q^r(U)_{\reg{A'}} \cdot \ketbra{\rho_0}{\rho_0} \cdot Q^r(U)_{\reg{A'}}^{\dagger},
\end{equation}
for any fixed purification $\ket{\rho_0}$ of $\rho$.
This follows because if $\rho$ has eigendecomposition $\rho = \sum_{i=1}^r \alpha_i\cdot  \ketbra{u_i}{u_i}$, then the Schmidt decomposition of any purification of $\rho$ can be written as
\begin{equation*}
    \sum_{i=1}^r \sqrt{\alpha_i} \cdot \ket{u_i} \otimes \ket{v_i},
\end{equation*}
where the right Schmidt vectors $\ket{v_1}, \ldots, \ket{v_r}$ form an orthonormal basis of $\C^r$.
A random purification involves choosing the right Schmidt vectors to be a Haar random basis of $\C^r$, and this is equivalent to picking a fixed basis of $\C^r$ (say, the right Schmidt vectors of $\ket{\rho_0}$) and applying a Haar random rotation $U$.
Using this, we derive a formula for $\E_{\ket{\rho}} \ketbra{\rho}{\rho}^{\otimes n}$.

\begin{lemma}[Haar averaging]\label{lem:haar-average}
There exist matrices $\{M_{\lambda}\}_{\lambda}$ such that
    \begin{equation*}
        \schur^{\otimes 2} \cdot\E_{\ket{\rho}} \ketbra{\rho}{\rho}^{\otimes n} \cdot (\schur^\dagger)^{\otimes 2}
        = \sum_{\lambda \vdash n, \ell(\lambda) \leq r} \ketbra{\lambda, \lambda}{\lambda, \lambda}_{\reg{Y} \reg{Y'}} \otimes \ketbra{\epr_{\lambda}}{\epr_{\lambda}}_{\reg{P} \reg{P'}} \otimes (M_{\lambda})_{\reg{Q}} \otimes (I_{\dim(V_{\lambda}^r)})_{\reg{Q'}}.
    \end{equation*}
\end{lemma}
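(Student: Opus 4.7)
My plan is to start from the identity \eqref{eq:randomized-fixed-guy}, which rewrites $\E_{\ket{\rho}} \proj{\rho}^{\otimes n}$ as a Haar average over $U(r)$ acting on the $\reg{A'}$ registers of a fixed purification $\ket{\rho_0}^{\otimes n}$. The whole proof then reduces to carrying out the Haar integral in the Schur basis and reading off the answer. Throughout, I will conjugate by $\schur^{\otimes 2} = (\schur^d)_{\reg{A}} \otimes (\schur^r)_{\reg{A'}}$ and work in the Schur basis from the start.

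The first step is to rewrite everything in the Schur basis. By Lemma \ref{lem:purification-formula}, I can write
\[
    \schur^{\otimes 2} \ket{\rho_0}^{\otimes n} \;=\; \sum_{\lambda \vdash n,\,\ell(\lambda)\leq r} \ket{\lambda,\lambda}_{\reg{Y}\reg{Y'}} \otimes \ket{\epr_\lambda}_{\reg{P}\reg{P'}} \otimes \ket{\phi_\lambda}_{\reg{Q}\reg{Q'}},
\]
where $\ket{\phi_\lambda} = \sum_{T,T'} a_{\lambda,T,T'} \ket{T}_{\reg{Q}}\ket{T'}_{\reg{Q'}}$ collects the coefficients from that lemma. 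By Schur--Weyl duality \eqref{eq:schur-transform} applied only to the $\reg{A'}$ registers, conjugating $Q^r(U)_{\reg{A'}}$ by $(\schur^r)_{\reg{A'}}$ gives $\sum_\mu \proj{\mu}_{\reg{Y'}} \otimes I_{\reg{P'}} \otimes \nu^r_\mu(U)_{\reg{Q'}}$. Hence in the Schur basis, $Q^r(U)_{\reg{A'}}$ preserves each $(\lambda,\lambda)$ block of $\ket{\rho_0}^{\otimes n}$ and acts only on the $\reg{Q'}$ register as $\nu^r_\lambda(U)$, yielding
\[
    \schur^{\otimes 2} Q^r(U)_{\reg{A'}} \ket{\rho_0}^{\otimes n} \;=\; \sum_\lambda \ket{\lambda,\lambda} \otimes \ket{\epr_\lambda} \otimes (I_{\reg{Q}} \otimes \nu^r_\lambda(U)_{\reg{Q'}}) \ket{\phi_\lambda}.
\]

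The second step is to form the outer product and integrate $U$. Taking the tensor square of the above and averaging produces a double sum over $\lambda,\mu$ of terms
\[
    \ket{\lambda,\lambda}\!\bra{\mu,\mu} \otimes \ket{\epr_\lambda}\!\bra{\epr_\mu} \otimes \E_{U\sim U(r)}\!\bigl[(I \otimes \nu^r_\lambda(U))\ket{\phi_\lambda}\!\bra{\phi_\mu}(I \otimes \nu^r_\mu(U)^\dagger)\bigr].
\]
Here I invoke the grand Schur orthogonality for the unitary group, namely
\[
    \E_{U \sim U(r)}\bigl[\nu^r_\lambda(U)_{T_1,T_2} \,\overline{\nu^r_\mu(U)_{T_3,T_4}}\bigr] \;=\; \frac{\delta_{\lambda,\mu}\,\delta_{T_1,T_3}\,\delta_{T_2,T_4}}{\dim(\weyl^r_\lambda)}.
\]
The $\delta_{\lambda,\mu}$ kills every cross-term between different irreps (so in particular the $\ket{\lambda,\lambda}\!\bra{\mu,\mu}$ and $\ket{\epr_\lambda}\!\bra{\epr_\mu}$ factors collapse to diagonal terms), and the remaining index contractions force the $\reg{Q'}$ register to become maximally mixed. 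Explicitly, expanding $\ket{\phi_\lambda}$ in the $\ket{T}\ket{T'}$ basis and carrying out the contractions, the diagonal $\lambda = \mu$ term becomes
\[
    (M_\lambda)_{\reg{Q}} \otimes (I_{\dim(\weyl^r_\lambda)})_{\reg{Q'}}, \qquad\text{where}\qquad M_\lambda \;=\; \frac{1}{\dim(\weyl^r_\lambda)} \tr_{\reg{Q'}}\!\bigl(\proj{\phi_\lambda}\bigr).
\]

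Assembling these pieces gives exactly the claimed formula, with $M_\lambda$ defined as above. The only step that needs real care is the index bookkeeping in the Haar integration; everything else is a direct consequence of Schur--Weyl duality applied to the $\reg{A'}$ side and of Lemma \ref{lem:purification-formula} applied to the $\reg{A}$ side. I do not expect any genuine obstacle beyond that bookkeeping. As a sanity check, one can verify PSDness of each $M_\lambda$ from its expression as a partial trace of a pure-state projector (up to the positive normalization $1/\dim(\weyl^r_\lambda)$), and that partial-tracing the $\reg{A'}$ registers in the final formula recovers \eqref{eq:rho-in-schur}, which pins down $\tr(M_\lambda) = \tr(\nu^d_\lambda(\rho))/\dim(\weyl^r_\lambda)$ in agreement with both expressions.
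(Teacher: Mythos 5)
Your proof is correct and takes essentially the same approach as the paper: both start from Lemma~\ref{lem:purification-formula}, push $Q^r(U)_{\reg{A'}}$ through the Schur basis, and carry out the Haar average over $U(r)$ to kill the off-diagonal $\lambda \neq \mu$ blocks and force the $\reg{Q'}$ register to be maximally mixed. The paper invokes Schur's lemma qualitatively and defers the identification of $M_\lambda$ to Lemma~\ref{lem:final-formula}, whereas you use the full Schur orthogonality relations to write $M_\lambda = \frac{1}{\dim(\weyl^r_\lambda)}\tr_{\reg{Q'}}(\proj{\phi_\lambda})$ explicitly at this stage; this is a minor presentational difference, not a different argument.
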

\begin{proof}
    From \Cref{lem:purification-formula},
    we see that in the Schur basis, $\ketbra{\rho_0}{\rho_0}^{\otimes n}$ can be written as a linear combination of terms of the form
    \begin{equation*}
        \ketbra{\lambda, \lambda}{\mu, \mu}_{\reg{Y} \reg{Y'}} \otimes \ketbra{\epr_{\lambda}}{\epr_{\mu}}_{\reg{P} \reg{P'}} \otimes \ketbra{T}{T''}_{\reg{Q}}\otimes \ketbra{T'}{T'''}_{\reg{Q'}},
    \end{equation*}
    where $\ell(\lambda)$ and $\ell(\mu) \leq r$,
    $T$ and $T'$ are SSYTs of shape $\lambda$ (and alphabets $[d]$ and $[r]$, respectively), and $T''$ and $T'''$ are SSYTs of shape $\mu$ (and alphabets $[d]$ and $[r]$, respectively).
    Now applying $Q^r(U)_{\reg{A'}}$ for a random $U \sim U(r)$, this becomes
    \begin{equation*}
        \ketbra{\lambda, \lambda}{\mu, \mu}_{\reg{Y} \reg{Y'}} \otimes \ketbra{\epr_{\lambda}}{\epr_{\mu}}_{\reg{P} \reg{P'}} \otimes \ketbra{T}{T''}_{\reg{Q}}\otimes \E_{U \sim U(r)}[\nu^r_{\lambda}(U) \cdot \ketbra{T'}{T'''}_{\reg{Q'}} \cdot \nu^r_{\mu}(U)^{\dagger}].
    \end{equation*}
    But Schur's lemma tells us that
    \begin{equation*}
        \E_{U \sim U(d)}[\nu^r_{\lambda}(U) \cdot \ketbra{T'}{T'''}_{\reg{Q'}} \cdot \nu^r_{\mu}(U)^{\dagger}]
    \end{equation*}
    is always a multiple of $I_{\dim(V_{\lambda}^r)}$ if $\lambda = \mu$, and is 0 otherwise.
    Thus, \Cref{eq:randomized-fixed-guy}, in the Schur basis, can be written as a linear combination of terms of the form
    \begin{equation*}
        \ketbra{\lambda, \lambda}{\lambda, \lambda}_{\reg{Y} \reg{Y'}} \otimes \ketbra{\epr_{\lambda}}{\epr_{\lambda}}_{\reg{P} \reg{P'}} \otimes \ketbra{T}{T'}_{\reg{Q}} \otimes (I_{\dim(V_{\lambda}^r)})_{\reg{Q'}},
    \end{equation*}
    where $T$ and $T'$ are both SYTs of shape $\lambda$ and dimension $[d]$.
    In other words, there exist coefficients $a_{\lambda, T, T'}$ such that
    \begin{align*}
        &\schur^{\otimes 2} \cdot \Big(\E_{\ket{\rho}} \ketbra{\rho}{\rho}^{\otimes n}\Big) \cdot (\schur^\dagger)^{\otimes 2}\\
         ={}& \sum_{\lambda, T, T'} a_{\lambda, T, T'} \cdot \ketbra{\lambda, \lambda}{\lambda, \lambda}_{\reg{Y} \reg{Y'}} \otimes \ketbra{\epr_{\lambda}}{\epr_{\lambda}}_{\reg{P} \reg{P'}} \otimes \ketbra{T}{T'}_{\reg{Q}} \otimes (I_{\dim(V_{\lambda}^r)})_{\reg{Q'}}\\
         ={}& \sum_{\lambda} \ketbra{\lambda, \lambda}{\lambda, \lambda}_{\reg{Y} \reg{Y'}} \otimes \ketbra{\epr_{\lambda}}{\epr_{\lambda}}_{\reg{P} \reg{P'}} \otimes \Big(\sum_{T, T'} a_{\lambda, T, T'} \cdot\ketbra{T}{T'}_{\reg{Q}}\Big) \otimes (I_{\dim(V_{\lambda}^r)})_{\reg{Q'}}.
    \end{align*}
    Defining $M_{\lambda} \coloneqq \sum_{T, T'} a_{\lambda, T, T'} \cdot \ketbra{T}{T'}$ completes the proof.
\end{proof}

\paragraph{Step 3: finishing up.}
Now that we have derived a generic formula in \Cref{lem:haar-average} for states which exhibit permutation and unitary symmetries, we now specialize to our case by using the fact that our states are purifications of $\rho$.
This gives our desired formula for the mixture.
\begin{lemma}[Mixture formula]\label{lem:final-formula}
\begin{equation*}
    \schur^{\otimes 2} \cdot\E_{\ket{\rho}} \ketbra{\rho}{\rho}^{\otimes n} \cdot (\schur^\dagger)^{\otimes 2}
        = \sum_{\ell(\lambda)\leq r} \dim(\lambda) \cdot \ketbra{\lambda, \lambda}{\lambda, \lambda}_{\reg{Y} \reg{Y'}} \otimes \ketbra{\epr_{\lambda}}{\epr_{\lambda}}_{\reg{P}\reg{P'}} \otimes \nu^d_{\lambda}(\rho)_{\reg{Q}} \otimes \Big(\frac{I_{\dim(V_{\lambda}^r)}}{\dim(V_{\lambda}^r)}\Big)_{\reg{Q'}}.
\end{equation*}
\end{lemma}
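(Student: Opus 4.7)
The plan is to pin down the unknown matrices $M_\lambda$ from \cref{lem:haar-average} by taking a partial trace and comparing against the known formula for $\rho^{\otimes n}$. The fact we will exploit is that tracing out the purification register $\reg{A'}$ from $\E_{\ket{\rho}}\ketbra{\rho}{\rho}^{\otimes n}$ must give back $\rho^{\otimes n}$, for which we already have an explicit Schur-basis formula in \cref{eq:rho-in-schur}.

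Concretely, I will start from the expression in \cref{lem:haar-average}:
\begin{equation*}
    \schur^{\otimes 2} \cdot\E_{\ket{\rho}} \ketbra{\rho}{\rho}^{\otimes n} \cdot (\schur^\dagger)^{\otimes 2}
    = \sum_{\lambda \vdash n,\ell(\lambda)\leq r} \ketbra{\lambda,\lambda}{\lambda,\lambda}_{\reg{Y}\reg{Y'}} \otimes \ketbra{\epr_\lambda}{\epr_\lambda}_{\reg{P}\reg{P'}} \otimes (M_\lambda)_{\reg{Q}} \otimes (I_{\dim(\weyl_\lambda^r)})_{\reg{Q'}}.
\end{equation*}
Since the Schur transforms $(\schur^d)_{\reg{A}}$ and $(\schur^r)_{\reg{A'}}$ act on disjoint sets of registers, partial trace over $\reg{A'}$ on the left corresponds to partial trace over $\reg{Y'}\reg{P'}\reg{Q'}$ on the right, up to conjugation by $(\schur^d)_{\reg{A}}$ on the remaining side. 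I will compute each of the three traces on the right separately: $\tr(\ketbra{\lambda}{\lambda}_{\reg{Y'}}) = 1$, $\tr_{\reg{P'}}(\ketbra{\epr_\lambda}{\epr_\lambda}) = I_{\dim(\lambda)}/\dim(\lambda)$ (a standard property of maximally entangled states), and $\tr(I_{\dim(\weyl_\lambda^r)}) = \dim(\weyl_\lambda^r)$.

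Putting these together, the partial trace equals
\begin{equation*}
    \sum_{\lambda \vdash n,\ell(\lambda)\leq r} \ketbra{\lambda}{\lambda}_{\reg{Y}} \otimes \frac{I_{\dim(\lambda)}}{\dim(\lambda)}_{\reg{P}} \otimes \bigl(\dim(\weyl_\lambda^r)\, M_\lambda\bigr)_{\reg{Q}}.
\end{equation*}
On the other hand, \cref{eq:rho-in-schur} tells us this must equal $\sum_{\lambda,\ell(\lambda)\leq r} \ketbra{\lambda}{\lambda} \otimes I_{\dim(\lambda)} \otimes \nu_\lambda^d(\rho)$. Since the summands are supported on orthogonal subspaces of $\reg{Y}$, matching them term-by-term forces $\dim(\weyl_\lambda^r) M_\lambda / \dim(\lambda) = \nu_\lambda^d(\rho)$, i.e.\ $M_\lambda = \dim(\lambda)\cdot \nu_\lambda^d(\rho)/\dim(\weyl_\lambda^r)$. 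Substituting back into the formula from \cref{lem:haar-average} and rewriting $I_{\dim(\weyl_\lambda^r)} = \dim(\weyl_\lambda^r)\cdot (I_{\dim(\weyl_\lambda^r)}/\dim(\weyl_\lambda^r))$ so the scalar $\dim(\weyl_\lambda^r)$ cancels yields the claimed expression.

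The only real subtlety is justifying that partial trace commutes with the Schur transform in the way above, which follows immediately from $(\schur^d)_{\reg{A}} \otimes (\schur^r)_{\reg{A'}}$ being a tensor product across $\reg{A}$ and $\reg{A'}$; I expect no other obstacles, since once the partial trace is computed the identification of $M_\lambda$ is forced by the orthogonality of the $\ketbra{\lambda}{\lambda}_{\reg{Y}}$ blocks.
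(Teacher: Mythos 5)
Your proof is correct and follows exactly the paper's own argument: take the partial trace over $\reg{A'}$ of the formula from \cref{lem:haar-average}, use $\tr_{\reg{P'}}(\ketbra{\epr_\lambda}{\epr_\lambda}) = I_{\dim(\lambda)}/\dim(\lambda)$, compare the result against \cref{eq:rho-in-schur} to pin down $M_\lambda = \dim(\lambda)\cdot\nu_\lambda^d(\rho)/\dim(\weyl_\lambda^r)$, and substitute back. No differences of substance.
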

\begin{proof}
    Since $\ket{\rho}$ is a purification of $\rho$,
    we have that $\tr_{\reg{A'}}(\ketbra{\rho}{\rho}^{\otimes n}) = \rho^{\otimes n}$.
    Applying the partial trace to the mixture, we have
    \begin{equation*}
        \tr_{\reg{A'}}\Big(\E_{\ket{\rho}}\ketbra{\rho}{\rho}^{\otimes n}\Big) = \rho^{\otimes n}.
    \end{equation*}
    In the Schur basis, we have seen in \Cref{eq:rho-in-schur} that
    \begin{equation}\label{eq:repeat-equations}
    \schur \cdot \rho^{\otimes n} \cdot \schur^{\dagger}
    = \sum_{\lambda \vdash n, \ell(\lambda) \leq r} \ketbra{\lambda}{\lambda} \otimes I_{\dim(\lambda)} \otimes \nu^d_{\lambda}(\rho).
    \end{equation}
    On the other hand, \Cref{lem:haar-average} gives a formula for the mixture $\ketbra{\rho}{\rho}^{\otimes n}$ in the Schur basis. Tracing out the $\reg{A'}$ registers corresponds, in the Schur basis, to tracing out the $\reg{Y'}$, $\reg{P'}$, and $\reg{Q'}$ registers, yielding the state
    \begin{equation}\label{eq:traced-out}
        \tr_{\reg{A'}}\Big(\schur^{\otimes 2} \cdot\E_{\ket{\rho}} \ketbra{\rho}{\rho}^{\otimes n} \cdot (\schur^\dagger)^{\otimes 2}\Big)
        = \sum_{\ell(\lambda)\leq r} \ketbra{\lambda}{\lambda}_{\reg{Y}} \otimes \Big(\frac{I_{\dim(\lambda)}}{\dim(\lambda)}\Big)_{\reg{P}} \otimes (M_{\lambda})_{\reg{Q}} \cdot \dim(V_{\lambda}^r),
    \end{equation}
    where here we have used the fact that $\tr_{\reg{P'}}(\ketbra{\epr_{\lambda}}{\epr_{\lambda}}_{\reg{P}\reg{P'}}) = I_{\dim(\lambda)}/\dim(\lambda)$.
    Now, \Cref{eq:repeat-equations,eq:traced-out}
    must be equal to each other, which implies that
    \begin{equation*}
        M_{\lambda} = \nu^d_{\lambda}(\rho) \cdot \frac{\dim(\lambda)}{\dim(V_{\lambda}^r)}.
    \end{equation*}
    Plugging this back into \Cref{lem:haar-average} completes the proof.
\end{proof}

\subsubsection{The random purification algorithm}

Now we give our algorithm for producing random purifications.
Given an input state $\rho^{\otimes n}$, our algorithm works as follows.
\begin{enumerate}
    \item Apply the Schur transform $\schur$ to $\rho^{\otimes n}$.
    \item Perform the projective measurement $\{\Pi_{\lambda}\}_{\lambda}$, where $\Pi_{\lambda} = \ketbra{\lambda}{\lambda}_{\reg{Y}} \otimes I_{\reg{P}} \otimes I_{\reg{Q}}$.
    Let $\lambda$ with $\ell(\lambda) \leq r$ be the outcome.
    \item Introduce a new register $\reg{Y'}$ and copy the contents of $\reg{Y}$ into it.
    \item Introduce a new register $\reg{P'}$. Discard the contents of the $\reg{P}$ register and reinitialize the two registers with an $\ket{\epr_{\lambda}}_{\reg{P} \reg{P'}}$ state.
    \item Introduce a new register $\reg{Q'}$ initialized to the maximally mixed state $I_{\dim(V_{\lambda}^r)}/\dim(V_{\lambda}^r)$.
    \item Apply the inverse Schur transform $(\schur^{d,\dagger})_{\reg{A}} \otimes (\schur^{r,\dagger})_{\reg{A'}}$ and output the result.
\end{enumerate}
Now we use this to prove our \Cref{lem:purify}.

\renewcommand{\qedsymbol}{\includegraphics[height=2.0ex]{acorn.png}}
\begin{proof}[Proof of \Cref{lem:purify}]
To check that our algorithm produces the correct output state, let us track how the input state changes at each step. 
From \Cref{eq:rho-in-schur}, after the Schur transform in step 1, it changes to
\begin{equation*}
    \sum_{\lambda \vdash n, \ell(\lambda) \leq r} \ketbra{\lambda}{\lambda}_{\reg{Y}} \otimes (I_{\dim(\lambda)})_{\reg{P}} \otimes \nu^d_{\lambda}(\rho)_{\reg{Q}}.
\end{equation*}
Then the measurement in step 2 produces the partition $\lambda$ with probability $\dim(\lambda) \cdot s_{\lambda}(\rho)$,
and the state collapses to
\begin{equation*}
    \ketbra{\lambda}{\lambda}_{\reg{Y}} \otimes \Big(\frac{I_{\dim(\lambda)}}{\dim(\lambda)}\Big)_{\reg{P}} \otimes \Big(\frac{\nu^d_{\lambda}(\rho)}{s^d_{\lambda}(\rho)}\Big)_{\reg{Q}}.
\end{equation*}
From here, the state transforms as
\begin{align*}
    \stackrel{\text{step 3}}{\longrightarrow}{}&\ketbra{\lambda, \lambda}{\lambda, \lambda}_{\reg{Y} \reg{Y'}} \otimes \Big(\frac{I_{\dim(\lambda)}}{\dim(\lambda)}\Big)_{\reg{P}} \otimes \Big(\frac{\nu^d_{\lambda}(\rho)}{s^d_{\lambda}(\rho)}\Big)_{\reg{Q}}\\
    \stackrel{\text{step 4}}{\longrightarrow}{}&\ketbra{\lambda, \lambda}{\lambda, \lambda}_{\reg{Y} \reg{Y'}} \otimes \ketbra{\epr}{\epr}_{\reg{P} \reg{P'}} \otimes \Big(\frac{\nu^d_{\lambda}(\rho)}{s^d_{\lambda}(\rho)}\Big)_{\reg{Q}}\\
    \stackrel{\text{step 5}}{\longrightarrow}{}&\ketbra{\lambda, \lambda}{\lambda, \lambda}_{\reg{Y} \reg{Y'}} \otimes \ketbra{\epr}{\epr}_{\reg{P} \reg{P'}} \otimes \Big(\frac{\nu^d_{\lambda}(\rho)}{s^d_{\lambda}(\rho)}\Big)_{\reg{Q}} \otimes \Big(\frac{I_{\dim(V_{\lambda}^r)}}{\dim(V_{\lambda}^r)}\Big)_{\reg{Q'}}.
\end{align*}
After the Schur transform in step 6, the final state is
\begin{equation*}
    (\schur^{\dagger})^{\otimes 2}\cdot \Big(\ketbra{\lambda, \lambda}{\lambda, \lambda}_{\reg{Y} \reg{Y'}} \otimes \ketbra{\epr}{\epr}_{\reg{P} \reg{P'}} \otimes \Big(\frac{\nu^d_{\lambda}(\rho)}{s^d_{\lambda}(\rho)}\Big)_{\reg{Q}}  \otimes \Big(\frac{I_{\dim(V_{\lambda}^r)}}{\dim(V_{\lambda}^r)}\Big)_{\reg{Q'}} \Big) \cdot  \schur^{\otimes 2},
\end{equation*}
with probability $\dim(\lambda) \cdot s_{\lambda}(\rho)$.
The overall mixed state, then, that the algorithm outputs is
\begin{equation*}
    (\schur^{\dagger})^{\otimes 2}\cdot \Big(\sum_{\lambda\vdash n, \ell(\lambda) \leq r} \dim(\lambda) \cdot \ketbra{\lambda, \lambda}{\lambda, \lambda}_{\reg{Y} \reg{Y'}} \otimes \ketbra{\epr}{\epr}_{\reg{P} \reg{P'}} \otimes \nu^d_{\lambda}(\rho)_{\reg{Q}} \otimes \Big(\frac{I_{\dim(V_{\lambda}^r)}}{\dim(V_{\lambda}^r)}\Big)_{\reg{Q'}} \Big) \cdot  \schur^{\otimes 2}.
\end{equation*}
By \Cref{lem:final-formula}, this is equal to $\E_{\ket{\rho}} \ketbra{\rho}{\rho}^{\otimes n}$, as desired.
The gate complexity of this algorithm is dominated by the two Schur transforms, which take $\poly(n, \log(d), \log(1/\epsilon))$ gates to compute to $\epsilon$ accuracy.
This completes the proof.
\end{proof}
\renewcommand{\qedsymbol}{$\square$}

\subsection{Putting everything together}

\begin{proof}[Proof of \cref{thm:useless}]
Our simulation algorithm proceeds in the following way.
Recall that we are given $n = \bigO{q^2/\eps}$ copies of $\sigma$, which is rank at most $r$.

\begin{enumerate}
    \item Apply \cref{lem:purify} to convert our $n$ copies of $\sigma$ to $n$ copies of $\proj{\varrho}$, where $\varrho$ is a random $rd$-dimensional purification of $\sigma$;
    \item Apply \cref{lem:reference} to convert our $n$ copies of $\proj{\varrho}$ to $n$ copies of $\proj{\varsigma}$, where $\ket{\varsigma} = \frac{1}{\sqrt{2}}(e^{\ii \theta}\ket{0}\ket{0} + \ket{1}\ket{\varrho})$ for a uniformly random $e^{\ii\theta}$;
    \item Run the circuit, except for every application of $U$, instead the technique from \cref{lem:reflect} to simulate a reflection about $\ket{\varsigma}$ to error $\eps / q$.
\end{enumerate}
Every approximate reflection uses $\bigO{q/\eps}$ copies of $\proj{\varsigma}$, giving the desired final sample complexity of $\bigO{q^2/\eps}$.

The only error in this algorithm is in the use of approximate reflections: each one incurs $\eps/q$ error in diamond distance, so by the triangle inequality, the final output will be off by at most $\eps$ in trace distance to the circuit run on $V_{\ket{\varsigma}} \coloneqq 2\proj{\varsigma} - I$, the reflection about $\ket{\varsigma}$.

We are done upon proving that $V_{\ket{\varsigma}}$ is a state preparation unitary for $\sigma$.
Notice that
\begin{align*}
    V_{\ket{\varsigma}}\ket{0}\ket{0}
    = 2 \frac{e^{-\ii \theta}}{\sqrt{2}} \ket{\varsigma} - \ket{0}\ket{0}
    = e^{-\ii \theta} \ket{1}\ket{\varrho}.
\end{align*}
Tracing out the first qubit and the second register of $\ket{\varrho}$ gives us the density matrix $\rho$.
So, following the definition \cref{def:state-prep}, $V_{\ket{\varsigma}}$ is a state preparation unitary of $\sigma$.
This state preparation unitary is of size $2rd \times 2rd$; by padding, we can make this any size $\wh{d}d \times \wh{d} d$ for $\wh{d} \geq 2r$.
\end{proof}

\section{Conjugate queries help for reality testing}

In this section, we prove \cref{thm:useful}.

First, we observe that the function we wish to test, $\real(\ket{\psi})$, is indeed related to the closeness of $\ket{\psi}$ to a real vector, up to a global phase.
Reality testing, then, neatly corresponds to a property testing problem in the traditional sense~\cite{MdW16}: detect whether a state is in the space of states with real amplitudes, or far from it, in the sense of fidelity.\footnote{
    In principle, we could extend this quantity to general mixed states, by taking $\real(\rho)$ to be the fidelity between $\rho$ and $\rho^*$, or perhaps just $\tr(\rho^* \rho)$.
    When we do this, we lose the interpretation of measuring ``closeness to reality'', since the quantity can be small for very mixed classical states.
}

\begin{fact}[Reality testing tests closeness to the space of real states]
    \label{fact:closeness}
    Let $\ket{\psi} \in \C^d$ be a unit vector.
    Then
    \begin{align*}
        \max_{\ket{\phi} \in \R^d} \abs{\braket{\phi}{\psi}}^2 = \max_{\theta \in [0, 2\pi)} \norm[\Big]{\Re(e^{\ii \theta}\ket{\psi})}^2
        = \frac12 + \frac12 \sqrt{\real(\ket{\psi})}
    \end{align*}
    where $\Re(\ket{\psi})$ denotes the real part of the vector, or equivalently, the projection of $\ket{\psi}$ onto the subspace of real vectors, treating $\C^d$ as a real vector space of dimension $2d$.
\end{fact}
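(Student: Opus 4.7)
The plan is to verify the two equalities separately by direct calculation. Write $z = \braket{\psi^*}{\psi} = \sum_i \psi_i^2$, so that $\real(\ket{\psi}) = |z|^2$.

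For the second equality, I would expand $\|\Re(e^{\ii\theta}\ket{\psi})\|^2$ as an inner product using the identity $\Re(\ket{v}) = (\ket{v} + \ket{v^*})/2$. Four cross terms arise; two of them give $\braket{\psi}{\psi} = \braket{\psi^*}{\psi^*} = 1$, and the other two give $e^{2\ii\theta} z$ and its complex conjugate. The whole expression therefore collapses to $\frac{1}{2} + \frac{1}{2}\Re(e^{2\ii\theta} z)$. Maximizing over $\theta$ by choosing $e^{2\ii\theta}$ to align with $\overline{z}$ yields $\frac{1}{2} + \frac{1}{2}|z| = \frac{1}{2} + \frac{1}{2}\sqrt{\real(\ket{\psi})}$, as desired.

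For the first equality, the key observation is that because $\ket{\phi}$ is real, $\braket{\phi}{\cdot}$ commutes with complex conjugation: $\braket{\phi}{v^*} = \overline{\braket{\phi}{v}}$. This implies $\braket{\phi}{\Re(e^{\ii\theta}\ket{\psi})} = \Re(e^{\ii\theta}\braket{\phi}{\psi})$. I would then write $|\braket{\phi}{\psi}|^2 = \max_\theta \Re(e^{\ii\theta}\braket{\phi}{\psi})^2$, swap the two maxes, and apply Cauchy--Schwarz in the real inner product space, which is tight when $\ket{\phi}$ is proportional to $\Re(e^{\ii\theta}\ket{\psi})$ (a genuine real unit vector after normalization).

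Neither step presents a real obstacle; both halves are essentially direct algebraic identities. The only slightly delicate points are the clean collapse in the second step to a single phase $e^{2\ii\theta}$---where the payoff is that all phase dependence concentrates into $\sum_i \psi_i^2$, which is exactly $\sqrt{\real(\ket{\psi})}$ up to a phase---and the $\max$-swap in the first equality, which is harmless since both maxes are over compact sets and the joint objective is continuous. Degenerate cases such as $\Re(e^{\ii\theta}\ket{\psi}) = 0$ (which forces $\braket{\phi}{\psi} = 0$ for every real $\ket{\phi}$) trivialize both sides simultaneously.
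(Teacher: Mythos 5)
Your proof is correct and takes a genuinely different, arguably cleaner route than the paper's. The paper decomposes $\ket{\psi} = \ket{a} + \ii\ket{b}$ with $\ket{a},\ket{b} \in \R^d$, reduces $\max_{\ket{\phi}}\abs{\braket{\phi}{\psi}}^2$ to the top eigenvalue of the rank-$2$ operator $\proj{a} + \proj{b}$, writes that eigenvalue in closed form as $\tfrac12 + \sqrt{\tfrac14(\braket{a}{a}-\braket{b}{b})^2 + \braket{a}{b}^2}$, and then verifies by a separate trigonometric computation that $\max_\theta \norm{\Re(e^{\ii\theta}\ket{\psi})}^2$ equals the same expression. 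You instead work directly with $\Re(\ket{v}) = \tfrac12(\ket{v} + \ket{v^*})$ and the scalar $z = \braket{\psi^*}{\psi}$: the norm computation collapses immediately to $\tfrac12 + \tfrac12\Re(e^{2\ii\theta}z)$, and the first equality is established \emph{abstractly} by observing that $\max_\theta \Re(e^{\ii\theta}\braket{\phi}{\psi})^2 = \abs{\braket{\phi}{\psi}}^2$, swapping the two maxima, and applying real Cauchy--Schwarz for fixed $\theta$ (tight at $\ket{\phi} \propto \Re(e^{\ii\theta}\ket{\psi})$). This avoids the explicit eigenvalue calculation entirely and makes transparent why the answer is governed by $\braket{\psi^*}{\psi}$ --- which is the quantity the swap test with $\ket{\psi^*}$ actually estimates --- whereas the paper's route additionally exhibits the optimizing real state as the top eigenvector of $\proj{a}+\proj{b}$, a concrete payoff your argument leaves implicit.
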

\begin{proof}
We write $\ket{\psi} = \sum_{j=1}^d (a_j + \ii b_j) \ket{j}$, where $\ket{a}, \ket{b} \in \R^d$ are vectors with real coefficients.
Note that $\braket{a}{a} + \braket{b}{b} = \braket{\psi}{\psi} = 1$.
Then, for a vector $\ket{\phi}$ with real amplitudes,
\begin{align*}
    \abs{\braket{\phi}{\psi}}^2
    &= \abs{\braket{\phi}{a} + \ii\braket{\phi}{b}}^2 \\
    &= \abs{\braket{\phi}{a}}^2 + \abs{\braket{\phi}{b}}^2 \\
    &= \bra{\phi}\parens[\big]{\proj{a} + \proj{b}}\ket{\phi}.
\end{align*}
So, the maximum value achieved by $\ket{\phi}$ is equal to the top eigenvalue of $\proj{a} + \proj{b}$, which can be computed to be (provided that $\braket{a}{a} + \braket{b}{b} = 1$)
\begin{equation}
    \label{eq:closeness}
    \frac{1}{2} + \sqrt{\frac{(\braket{a}{a} - \braket{b}{b})^2}{4} + \braket{a}{b}^2}.
\end{equation}
Now, for the second expression in the statement,
\begin{align*}
    \Re(e^{\ii \theta} \ket{\psi})
    &= \Re((\cos(\theta) + \ii \sin(\theta))\ket{\psi})
    = \sum_{j=1}^d (a_j \cos(\theta) - b_j \sin(\theta)) \ket{j} \\
    \norm[\big]{\Re(e^{\ii \theta} \ket{\psi})}^2
    &= \sum_{j=1}^d (a_j \cos(\theta) - b_j \sin(\theta))^2 \\
    &= \braket{a}{a} \cos^2(\theta) + \braket{b}{b} \sin^2(\theta) - \braket{a}{b} 2 \sin(\theta) \cos(\theta) \\
    &= \frac{\braket{a}{a} + \braket{b}{b}}{2} + \frac{\braket{a}{a} - \braket{b}{b}}{2} \cos(2\theta) - \braket{a}{b} \sin(2 \theta)
\end{align*}
The final step uses the double-angle formulas $\cos(2 \theta) = \cos^2(\theta) - \sin^2(\theta)$ and $\sin(2 \theta) = 2 \sin(\theta) \cos(\theta)$.
The maximum value this attains by varying $\theta$ is identical to \eqref{eq:closeness}.
We can conclude by observing that
\begin{align*}
    \real(\ket{\psi}) = \abs[\Big]{\sum_{j=1}^d (a_j + \ii b_j)^2}^2
    = \abs{\braket{a}{a} - \braket{b}{b} + 2 \ii \braket{a}{b}}^2
    = (\braket{a}{a} - \braket{b}{b})^2 + (2 \braket{a}{b})^2,
\end{align*}
which is indeed the expression under the radical in \eqref{eq:closeness}, up to a factor of 4.
\end{proof}

\subsection{Algorithm using conjugate queries}

\begin{lemma}[Reality testing algorithm]
    \label{lem:upper}
    Let $U$ be a state preparation unitary for $\proj{\psi}$.
    Then reality testing (\cref{prob:real-test}) can be solved using $2$ queries to $U$ and $2$ queries to $U^*$.

    In general, given an error parameter $\eps > 0$, with $\bigO{1/\eps^2}$ queries to $U$ and $U^*$, $\real(\ket{\psi})$ can be estimated to $\eps$ additive error with success probability $\geq 2/3$.
\end{lemma}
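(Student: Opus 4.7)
The plan is to use a swap test between a copy of $\ket{\psi}$ and a copy of $\ket{\psi^*}$, which the state preparation unitary lets us prepare using one forward and one conjugate query respectively.

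First I would unpack what one query to $U$ and $U^*$ produces. Because $\proj{\psi}$ is pure, any purification is of the product form $\ket{\phi}_{\reg{A}}\ket{\psi}_{\reg{B}}$ for some ancilla state $\ket{\phi}_\reg{A}$; hence $U\ket{0}_{\reg{AB}} = \ket{\phi}_{\reg{A}}\ket{\psi}_{\reg{B}}$. Since $\ket{0}$ is real, applying $U^*$ instead yields $U^*\ket{0}_{\reg{AB}} = \ket{\phi^*}_{\reg{A}}\ket{\psi^*}_{\reg{B}}$. So with one query to each, we hold a product state on four registers $\reg{A}\reg{B}\reg{A}'\reg{B}'$, whose marginals on $\reg{B}$ and $\reg{B}'$ are $\proj{\psi}$ and $\proj{\psi^*}$ respectively.

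Next I would run a swap test that swaps only registers $\reg{B}$ and $\reg{B}'$, leaving $\reg{A}$ and $\reg{A}'$ untouched. On a bipartite state split as $(\reg{A}\reg{A}'):(\reg{B}\reg{B}')$ that factors as a product, the swap test on the $\reg{B}\reg{B}'$ halves outputs $0$ with probability $\tfrac12 + \tfrac12 \tr(\rho_{\reg{B}}\rho_{\reg{B}'})$ where $\rho_{\reg{B}},\rho_{\reg{B}'}$ are the reduced density matrices (the ancilla parts contribute a factor $\braket{\phi}{\phi}\braket{\phi^*}{\phi^*}=1$ and drop out). Substituting, this probability is $\tfrac12 + \tfrac12\abs{\braket{\psi^*}{\psi}}^2 = \tfrac12 + \tfrac12\real(\ket{\psi})$.

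For the decision version, if $\real(\ket{\psi})=1$ the swap test accepts with probability $1$, while if $\real(\ket{\psi})\leq 1/10$ it accepts with probability at most $0.55$. Running the whole procedure twice (total: $2$ queries to $U$ and $2$ to $U^*$) and rejecting if either trial rejects drives the acceptance probability in the ``no'' case down to at most $0.55^2 < 1/3$, yielding the stated $2/3$ success probability. For the general estimation claim, I would use $T = \bigO{1/\eps^2}$ independent runs and output $2\hat{p}-1$, where $\hat{p}$ is the empirical acceptance rate; a Chernoff/Hoeffding bound gives $\eps$-additive accuracy with probability at least $2/3$, using $T$ queries to each of $U$ and $U^*$. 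There is no real obstacle here; the only thing to be careful about is verifying that the ancilla registers factor out of the swap test, which follows from the purity of $\proj{\psi}$.
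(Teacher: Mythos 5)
Your proposal is correct and is essentially the paper's own proof: prepare $\ket{\psi}$ with one query to $U$ and $\ket{\psi^*}$ with one query to $U^*$, run a swap test, repeat twice for the decision version and $\bigO{1/\eps^2}$ times for estimation. Your extra care in checking that the ancilla registers factor out (using purity of $\proj{\psi}$ and the reality of $\ket{0}$) is a worthwhile detail the paper leaves implicit, but it does not change the argument.
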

\begin{proof}
The algorithm is a swap test: apply $U$ on one register to prepare a copy of $\proj{\psi}$, and apply $U^*$ on another register to prepare a copy of $\proj{\psi^*}$.
Then, apply the swap test on the two registers:
\begin{figure}[h]
\begin{equation*}
    \Qcircuit @C=1em @R=.3em @!C @!R {
    \lstick{\ket{0}} & \gate{H} & \ctrl{2} & \gate{H} & \meter \\
    \lstick{\ket{\psi}} & \qw & \qswap & \qw & \qw \\
    \lstick{\ket{\psi^*}} & \qw & \qswap & \qw & \qw
}
\end{equation*}
\end{figure}
The probability that the measurement outcome is $0$ is $\frac12 + \frac12 \abs{\braket{\psi}{\psi^*}}^2 = \frac12 + \frac12 \real(\ket{\psi})$.
Run this circuit twice; if the measurement outcome is $0$ both times, guess that $\real(\ket{\psi}) = 1$; otherwise, guess that $\real(\ket{\psi}) < 1/10$.

If $\real(\ket{\psi}) = 1$, then the measurement outcome will always be $0$, so this algorithm always succeeds.
If $\real(\ket{\psi}) < 1/10$, then the probability that both outcomes are 0 is at most $(0.55)^2 < 1/3$.
So, in this case, the algorithm succeeds with probability $\geq 2/3$.

For the estimation task, if we run the swap test $\bigO{1/\eps^2}$ times, then by standard methods, we can produce an estimate this probability $\frac12 + \frac12 \real(\ket{\psi})$ to $\eps/2$ additive error with probability $\geq 2/3$.
This estimator is just the fraction of measurement outcomes which are $0$.
This gives the desired estimate for $\real(\ket{\psi})$ to $\eps$ error.
\end{proof}

\subsection{Lower bound without conjugate queries}

\begin{problem}[Distinguishing phase states from Haar-random states] \label{prob:phase-prs}
    Let $\calF$ denote the ensemble of phase states in $d$ dimensions:
    \begin{align*}
        \calF = \braces[\Big]{\ket{f} = \frac{1}{\sqrt{d}}\sum_{i = 1}^d (-1)^{f(i)} \ket{i} \Big| f: [d] \to \braces{0, 1}}.
    \end{align*}
    Choose a bit $b \in \braces{0,1}$.
    If $b = 0$, draw a state $\ket{\psi}$ from the Haar-random distribution; if $b = 1$, draw $\ket{\psi}$ uniformly at random from $\calF$.
    The goal is to, given access to $\ket{\psi}$ (either copies of the state or access to a state preparation unitary), output a bit $\wh{b} \in \braces{0,1}$ such that $\wh{b} = b$ with probability $\geq 0.65$.
\end{problem}

\begin{lemma}
    Suppose $d > 10^3$.
    Then \cref{prob:phase-prs} can be solved with one call to an algorithm for reality testing (\cref{prob:real-test}).
\end{lemma}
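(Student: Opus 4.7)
The plan is to exploit a sharp dichotomy in the value of $\real(\ket{\psi})$ between the two cases. Every phase state $\ket{f} = \frac{1}{\sqrt{d}}\sum_i (-1)^{f(i)}\ket{i}$ has real amplitudes, so $\real(\ket{f}) = 1$ deterministically. For a Haar-random state, we expect $\real(\ket{\psi}) = \Theta(1/d)$ in expectation, well below $1/10$ in high dimension. Given this, a single invocation of the reality tester---output $\hat{b} = 1$ if it says ``real'' and $\hat{b} = 0$ otherwise---should solve the distinguishing problem directly.

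The main calculation is the Haar expectation. Expanding $\real(\ket{\psi}) = \abs{\sum_i \psi_i^2}^2 = \sum_{i,j}\psi_i^2 \overline{\psi_j^2}$ and applying the standard second-moment identity $\E[\psi_i \psi_j \overline{\psi_k \psi_l}] = (\delta_{ik}\delta_{jl} + \delta_{il}\delta_{jk})/(d(d+1))$ yields $\E_{\mathrm{Haar}}[\real(\ket{\psi})] = 2/(d+1)$. Markov's inequality then gives $\Pr_{\mathrm{Haar}}[\real(\ket{\psi}) \geq 1/10] \leq 20/(d+1) < 1/50$ whenever $d > 10^3$.

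The reduction is then immediate. When $b = 1$, the promise $\real = 1$ is met, and the reality tester answers ``real'' with probability at least $2/3$, giving $\hat{b} = 1$ with the same probability. When $b = 0$, with probability at least $1 - 20/(d+1)$ the drawn state satisfies $\real(\ket{\psi}) < 1/10$, so the tester's other promise is met and it answers ``not real'' with probability $\geq 2/3$. This gives $\Pr[\hat{b} = 0 \mid b = 0] \geq (1 - 20/(d+1)) \cdot 2/3 > 0.65$ for $d > 10^3$, while the $b = 1$ side clears $2/3 > 0.65$ outright.

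No step is especially difficult. The main subtlety to verify is that the constants line up---the tester's threshold $1/10$, its success probability $2/3$, the target success probability $0.65$, and the dimension bound $d > 10^3$---which they do, just barely, under a single invocation of the tester. If a tighter margin were desired, amplifying the reality tester by a constant number of repetitions before plugging it into the reduction would suffice.
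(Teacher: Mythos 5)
Your proof is correct and takes essentially the same approach as the paper: phase states have $\real(\ket{f}) = 1$ deterministically, the Haar expectation of $\real$ is $\bigO{1/d}$, and Markov's inequality combined with the tester's $2/3$ guarantee clears the $0.65$ threshold for $d > 10^3$. One small note in your favor: your value $\E_{\mathrm{Haar}}[\real(\ket{\psi})] = 2/(d+1)$ is actually the correct one. The paper states $1/(d+1)$, apparently dropping a factor of $2$ when using $\E[\proj{\psi}^{\otimes 2}] = \Pi_{\sym}/\binom{d+1}{2}$ (so that $\E[\abs{\psi_i}^4] = 2/(d(d+1))$, as one can also check directly from $\abs{\psi_i}^2 \sim \mathrm{Beta}(1,d-1)$). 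The discrepancy is immaterial since Markov gives $\Pr[\real \geq 1/10] \leq 20/(d+1) < 0.02$ either way, and $(1-0.02)\cdot 2/3 > 0.65$, as you verified.
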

\begin{proof}
Let $\ket{\psi}$ be the input state; we want to decide whether it is a phase state or Haar-random.
For all $\ket{\psi} \in \calF$, $\real(\ket{\psi}) = 1$, since its amplitudes are real.

On the other hand, for a Haar-random state $\ket{\psi}$,
\begin{align*}
    \E\bracks[\big]{\real(\ket{\psi})}
    = \E\bracks[\Big]{\abs[\Big]{\sum_{i=1}^d \psi_i^2}^2}
    = \sum_{i=1}^d \sum_{j=1}^d \E\bracks[\big]{\psi_i^2(\psi_j^2)^*}
    = \sum_{i=1}^d \E\bracks[\big]{\abs{\psi_i}^4}
    = \frac{1}{d+1}.
\end{align*}
The final inequality follows from a computation using basic facts about the projector on the symmetric subspace over two dimension-$d$ spaces, $\Pi_{\sym} \in \C^{d^2 \times d^2}$~\cite[Proposition 6]{Har13}:
\begin{align*}
    \E\bracks[\big]{\abs{\psi_i}^4}
    = \E\bracks[\big]{\tr(\proj{\psi}^{\otimes 2} \proj{ii})}
    = \frac{1}{d(d+1)}\tr(\Pi_{\sym} \proj{ii})
    = \frac{1}{d(d+1)}.
\end{align*}
Consequently, by Markov's inequality, $\real(\ket{\psi})$ is smaller than $1/10$ with probability $\geq 0.99$, provided $d > 10^3$.

So, consider the following distinguishing algorithm: call reality testing on the input state $\ket{\psi}$, to receive a guess of whether $\real(\ket{\psi}) = 1$ or $\real(\ket{\psi}) < 1/10$ which is correct with probability $\geq 2/3$.
Then, output $\wt{b} = 1$ if the guess is that $\real(\ket{\psi}) = 1$; otherwise, output $\wt{b} = 0$.
When $b = 1$, this algorithm succeeds with probability $\geq 2/3$.
When $b = 0$, it succeeds with probability $\geq 2/3 - 0.01$, since our guarantee on the algorithm's success probability only holds provided $\real(\ket{\psi}) \leq 1/d$.
\end{proof}

\begin{lemma}[{Phase states are indistinguishable from Haar \cite{bs19}}]
    \label{lem:phase-prs}
    Any algorithm which solves \cref{prob:phase-prs} to $> 0.6$ success probability just given copies of the input state must use $\bigOmega{\sqrt{d}}$ copies.
\end{lemma}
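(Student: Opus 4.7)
The plan is the standard Helstrom-bound reduction from sample complexity to trace distance between product mixtures. Let $\rho_{0,t} = \E_{\psi \sim \mathrm{Haar}}\proj{\psi}^{\otimes t}$ and $\rho_{1,t} = \E_{f \in \calF}\proj{f}^{\otimes t}$ be the two $t$-copy mixtures. Any algorithm distinguishing these with uniform prior achieves success probability at most $\frac{1}{2} + \frac{1}{4}\|\rho_{0,t} - \rho_{1,t}\|_1$, so to rule out distinguishers with success probability $> 0.6$ it suffices to show $\|\rho_{0,t} - \rho_{1,t}\|_1 = \bigO{t^2/d}$, which forces $t = \bigOmega{\sqrt{d}}$.

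Both mixtures lie in the symmetric subspace, so I would expand them in the type basis $\{\ket{T}\}$ indexed by multisets of size $t$ over $[d]$, where $\ket{T}$ is the normalized symmetrization of any $x \in [d]^t$ with $\mathrm{type}(x) = T$. The Haar mixture is $\rho_{0,t} = \binom{d+t-1}{t}^{-1}\sum_T \proj{T}$ by the standard identity $\E_\psi \proj{\psi}^{\otimes t} = \Pi_{\mathrm{sym}}/\dim \Pi_{\mathrm{sym}}$. Expanding $\ket{f}^{\otimes t} = d^{-t/2}\sum_T (-1)^{\sum_i T_i f(i)} \binom{t}{T}^{1/2}\ket{T}$ and averaging over uniform $f \in \{0,1\}^d$ gives
\begin{equation*}
    \rho_{1,t} = d^{-t} \sum_{T \equiv T' \bmod 2} \binom{t}{T}^{1/2} \binom{t}{T'}^{1/2} \ketbra{T}{T'},
\end{equation*}
where the sum is over pairs of types of size $t$ agreeing coordinatewise modulo $2$.

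Next I would split the types into ``collision-free'' ones ($T \in \{0,1\}^d$) and ``colliding'' ones (some $T_i \geq 2$). The key combinatorial observation is: if $T$ is collision-free, $|T'| = t$, and $T \equiv T' \bmod 2$, then $T' = T$, since each odd-parity coordinate of $T'$ contributes $\geq 1$ and each even-parity coordinate contributes $\geq 0$, with equality throughout only when $T' = T$. This means $\rho_{1,t}$ is block-diagonal with respect to the collision-free/colliding decomposition, and on the collision-free block it coincides with $(t!/d^t)\,\Pi_{\mathrm{cf}}$; $\rho_{0,t}$ is already diagonal, so the same block structure is trivial for it. A short computation shows $\binom{d+t-1}{t}^{-1}\binom{d}{t} = \prod_{k=1}^{t-1}\frac{d-k}{d+k} = 1 - \bigO{t^2/d}$ and $(t!/d^t)\binom{d}{t} = \prod_{k=0}^{t-1}(1-k/d) = 1 - \bigO{t^2/d}$, so (a) the two scalars governing the collision-free block differ by a multiplicative $1 - \bigO{t^2/d}$ and (b) each mixture puts only $\bigO{t^2/d}$ mass on the colliding block. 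The contributions to $\|\rho_{0,t} - \rho_{1,t}\|_1$ from the two blocks are then both $\bigO{t^2/d}$: the first by a direct computation on multiples of identity, the second by the triangle inequality applied to two low-mass pieces.

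The main obstacle will be making the block decomposition airtight, specifically the parity/size lemma that forces all off-diagonal couplings of $\rho_{1,t}$ to live inside the collision block. A priori $\rho_{1,t}$ has matrix elements between any distinct types agreeing modulo $2$, and only the size constraint $|T| = |T'| = t$ together with $T \in \{0,1\}^d$ collapses this to $T' = T$; without this collapse the off-diagonal mass could conceivably contribute a larger trace-norm term. Once the block structure is verified, assembling the two $\bigO{t^2/d}$ block contributions and invoking the Helstrom bound yields the $\bigOmega{\sqrt{d}}$ sample lower bound.
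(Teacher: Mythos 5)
Your proof is correct, and it takes a genuinely different route from the paper: the paper simply invokes the result of Brakerski and Shmueli~\cite{bs19} that random binary phase states form a $\frac{4t^2}{d}$-approximate $t$-design and then applies Holevo--Helstrom, whereas you re-derive the approximate-design bound from scratch via the type-basis decomposition of the two $t$-copy mixtures. All the key steps check out: the Haar mixture is correctly identified as the maximally mixed state on the symmetric subspace; the phase-state mixture correctly has matrix elements only between types agreeing coordinatewise modulo $2$; your parity/size lemma (a collision-free $T$ agreeing with $T'$ mod $2$ with $|T'|=t$ forces $T'=T$) is sound, which gives block-diagonality between the collision-free and colliding sectors; the scalar comparisons on the collision-free block and the $\bigO{t^2/d}$ mass bound on the colliding block are both correct; and since each mixture restricted to the colliding block is PSD, the triangle inequality finishes the trace-norm bound. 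Your Helstrom constant ($\tfrac12 + \tfrac14\|\rho_{0,t}-\rho_{1,t}\|_1$) is also right. What the paper's approach buys is brevity, by black-boxing the design bound; what yours buys is a self-contained, elementary argument that makes the source of the $\sqrt{d}$ threshold transparent. Two small housekeeping remarks: first, you should explicitly note (as the paper does) that a bound on the \emph{average}-case distinguishing probability over a uniform $b$ implies the same bound for worst-case $b$, since \cref{prob:phase-prs} asks for per-$b$ success; second, your product $\prod_{k=1}^{t-1}\frac{d-k}{d+k}$ should formally start at $k=0$ to match $\binom{d}{t}/\binom{d+t-1}{t}$, though the $k=0$ factor is $1$ so this is immaterial. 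Neither affects correctness.
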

\begin{proof}
According to a result of Brakerski and Shmueli~\cite{bs19}, a random phase state $\ket{\psi}$ forms a $\frac{4t^2}{d}$-approximate $t$-design for all $t$.
In particular, using the definition of an approximate $t$-design~\cite[Definition~6]{bs19}, the trace distance between $\sqrt{d} / 5$ copies of a random phase state $\ket{\psi}$ and that many copies of a Haar-random state is $\leq 0.2$, so by the Holevo--Helstrom theorem~\cite[Theorem~3.4]{Wat18}, the probability of success of any distinguishing algorithm is $\leq 0.6$, assuming that $b \in \braces{0,1}$ is chosen uniformly at random.
If an algorithm cannot succeed with probability $> 0.6$ for a randomly chosen $b$, it cannot succeed with that probability for a worst-case $b$.

Thus, any algorithm succeeding at \cref{prob:phase-prs} with success probability $> 0.65$ must use more than $ \sqrt{d} / 5$ copies of the state.
\end{proof}

\begin{corollary}[Reality testing lower bound]
    \label{cor:lower}
    Consider an algorithm which can solve \cref{prob:phase-prs} using $q$ forward and inverse queries to any state preparation unitary $U \in \C^{2d^2 \times 2d^2}$ of $\ket{\psi}$.
    Then the number of queries it uses is $q = \bigOmega{d^{1/4}}$.
\end{corollary}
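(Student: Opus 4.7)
The plan is to combine the two workhorse results already proved in the excerpt: the simulation theorem \cref{thm:useless}, which says that any algorithm making $q$ forward and inverse queries to a state preparation unitary can be simulated from $\bigO{q^2/\eps}$ copies of the underlying state, and the sample complexity lower bound \cref{lem:phase-prs}, which says that $\bigOmega{\sqrt{d}}$ copies are required to distinguish a random phase state from a Haar-random state. If an algorithm used $q = \littleO{d^{1/4}}$ queries, then the simulation would let us solve \cref{prob:phase-prs} with $\littleO{\sqrt{d}}$ state copies, contradicting \cref{lem:phase-prs}.

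In more detail, I would start by assuming an algorithm $A$ that solves \cref{prob:phase-prs} with success probability $\geq 0.65$ using $q$ forward and inverse queries to an arbitrary state preparation unitary $U \in \C^{2d^2 \times 2d^2}$ of $\ket{\psi}$. The state $\proj{\psi}$ is pure, so it has rank $r=1$; the ancilla dimension is $\wh{d} = 2d$, which trivially satisfies the hypothesis $\wh{d} \geq 2d$ of \cref{thm:useless} (indeed, we only need $\wh{d} \geq 2r = 2$). Set, say, $\eps = 0.04$, and apply \cref{thm:useless} to obtain a simulator using $n = \bigO{q^2}$ copies of $\ket{\psi}$ and $\poly(q, \log d)$ gates whose output is within trace distance $\eps$ of $\E_V \rho_A(V)$, where the expectation is over a distribution of state preparation unitaries of $\ket{\psi}$.

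The key step is to argue the simulator inherits $A$'s distinguishing power. Since $A$ succeeds with probability $\geq 0.65$ on \emph{every} state preparation unitary of $\ket{\psi}$, and each $V$ in the simulator's distribution is such a unitary, the mixture $\E_V \rho_A(V)$ has success probability $\geq 0.65$ on \cref{prob:phase-prs}. By the data processing inequality for trace distance, the simulated output, which is within $\eps$ of this mixture, has success probability $\geq 0.65 - \eps = 0.61 > 0.6$. We have therefore obtained a sample-access algorithm for \cref{prob:phase-prs} using $\bigO{q^2}$ copies with success probability $> 0.6$, which by \cref{lem:phase-prs} forces $q^2 = \bigOmega{\sqrt{d}}$, i.e., $q = \bigOmega{d^{1/4}}$.

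There is no real obstacle; the argument is a direct chain of the two cited results. The only subtlety is the bookkeeping: verifying that the hypotheses of \cref{thm:useless} are met (rank $1$ state, large-enough ancilla $\wh d = 2d$), and choosing $\eps$ as a small constant so that the simulator's success probability stays above the $0.6$ threshold that \cref{lem:phase-prs} rules out for $o(\sqrt d)$ copies. This also highlights the role of the universality assumption in \cref{thm:useful}: the lower bound is only interesting because we require the algorithm to work against an arbitrary state preparation unitary, which is precisely the setting in which \cref{thm:useless} lets us swap queries for copies.
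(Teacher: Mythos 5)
Your proposal is correct and follows essentially the same route as the paper's own proof of \cref{cor:lower}: apply \cref{thm:useless} with a small constant $\eps$ to convert the $q$-query algorithm into an $\bigO{q^2}$-copy algorithm whose success probability still exceeds $0.6$, then invoke \cref{lem:phase-prs} to conclude $q = \bigOmega{d^{1/4}}$. The only differences are cosmetic (you pick $\eps = 0.04$ rather than the paper's implicit $0.01$, and you spell out the mixture/data-processing argument for why the success probability survives the simulation, which the paper leaves implicit).
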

\begin{proof}
By \cref{lem:phase-prs}, $\bigOmega{\sqrt{d}}$ copies of $\ket{\psi}$ are necessary to solve \cref{prob:phase-prs} with probability $> 0.6$.
Now, suppose we have a circuit which uses $q$ queries to any state preparation unitary of $\ket{\psi}$, and which solves the problem with success probability $\geq 0.65$.
Then by \cref{thm:useless}, this circuit can be converted into a circuit which uses $n = \bigO{q^2}$ copies of $\ket{\psi}$ to do the same, with success probability $\geq 0.64$.
Such an algorithm can only exist provided that $n = \bigOmega{\sqrt{d}}$; in other words, $q = \bigOmega{d^{1/4}}$.
\end{proof}

\begin{proof}[Proof of \cref{thm:useful}]
    This is an immediate consequence of \cref{lem:upper} and \cref{cor:lower}.
\end{proof}

\subsection{Handling transpose queries} \label{subsec:trans}

Our full analysis can similarly be carried out for transpose queries with only minimal changes.
The main change is that we must slightly restrict our definition of a state preparation unitary for the upper bound to go through.

For a pure state $\ket{\psi}$, call a state preparation unitary $U$ for $\ket{\psi}$ \emph{tidy} if it resets its ancillas (up to phase), meaning that, for some $\theta \in [0, 2\pi)$,
\begin{align*}
    U \ket{0}\ket{0} = e^{\ii \theta} \ket{0} \ket{\psi}.
\end{align*}
Then, given a \emph{tidy} state preparation unitary, reality testing can be solved with one query to $U$ and $U^\trans$.
The algorithm is as follows.

\begin{figure}[h]
\begin{equation*}
    \Qcircuit @C=1em @R=.3em @!C @!R {
    \lstick{\ket{0}} & \multigate{1}{U} & \multigate{1}{U^\trans} & \meter \\
    \lstick{\ket{0}} & \ghost{U} & \ghost{U^\trans} & \meter
}
\end{equation*}
\end{figure}
The probability of seeing the outcome $\ket{0}\ket{0}$ is equal to
\begin{align*}
    \abs{\bra{0}\bra{0} U^\trans U \ket{0} \ket{0}}^2
    = \abs{(U^* \ket{0}\ket{0})^\dagger U \ket{0} \ket{0}}^2
    = \abs{\braket{\psi^*}{\psi}}^2
    = \real(\ket{\psi}).
\end{align*}
So, if we see this outcome, we guess that $\real(\ket{\psi}) = 1$; otherwise, we guess that $\real(\ket{\psi}) < 1/10$.
This algorithm succeeds with probability $\geq 2/3$.

In summary, transpose queries can solve the problem of \emph{reality testing given a tidy state preparation unitary}.
To attain a separation, then, we need to show that this task cannot be solved with only $U$ and $U^\dagger$ queries.
The only additional thing to check is that \cref{thm:useless} works even when the state preparation unitaries need to be tidy.
This is true, because the state preparation unitaries that this simulation produces can be made tidy.

Let $V$ be a state preparation unitary such that $\ket{0} \ket{0} = e^{\ii \theta}\ket{1}\ket{\psi}$.
Then the following circuit outputs $e^{\ii \theta} \ket{0}\ket{\psi}$.
\begin{figure}[h]
\begin{equation*}
    \Qcircuit @C=1em @R=.3em @!C @!R {
    \lstick{\ket{0}} & \multigate{1}{V} & \gate{X} & \qw & \qw \\
    \lstick{\ket{0}} & \ghost{V} & \qw & \qw & \qw 
}
\end{equation*}
\end{figure}
So, to simulate a circuit which applies a tidy state preparation unitary $n$ times, we can replace every query with the above gadget.
Then, apply the rest of the algorithm given in \cref{thm:useless}.
This gives the desired simulation.

\section{Conjugates and cryptography}

Here we describe how modeling unitary access without conjugates can lead to incorrect conclusions in the context of cryptography. Namely, we demonstrate a simple commitment scheme relative to a unitary oracle $U$ that is provably secure given query access to $U,U^\dagger$, but insecure if one has access to $U^*$. Since such a commitment would be implemented by a quantum circuit in the ``real world'', an attacker would be able implement $U^*$ for themselves. Hence, in the real world,  the commitment scheme is also insecure.

\subsection{Quantum commitments relative to a unitary oracle}

\newcommand{\As}{{\reg{A}}}
\newcommand{\Cs}{{\reg{C}}}
\newcommand{\Ds}{{\reg{D}}}

We first recall the definition of quantum commitments, as given by Yan~\cite{AC:Yan22}.

\begin{definition}[Commitment syntax]\label{def:commitmentsyntax}
    A canonical (non-interactive) quantum bit commitment is specified by a family of polynomial-time computable quantum states $\{|\psi_{\lambda,b}\rangle\}_{\lambda,b\in\{0,1\}}$ over a pair of registers $\Cs_\lambda$ and $\Ds_\lambda$. It consists of two stages:
    \begin{itemize}
        \item \textbf{Commit.} To commit to a bit $b$, the sender prepares the state $|\psi_{\lambda,b}\rangle_{\Ds_\lambda,\Cs_\lambda}$, and sends register $\Cs_\lambda$ to the receiver as the commitment register and keeps $\Ds_\lambda$ as the opening register.
        \item \textbf{Open.} To open the commitment, the sender sends register $\Ds_\lambda$ and the bit $b$ to the receiver. The receiver projects the state contained in $(\Ds_\lambda,\Cs_\lambda)$ onto $|\psi_{\lambda,b}\rangle\langle\psi_{\lambda,b}|$. If the projection accepts, it outputs $b$; otherwise it outputs $\bot$.
    \end{itemize}
\end{definition}

Quantum commitments also must satisfy hiding and binding. 

\begin{definition}[Perfect hiding] A quantum commitment is perfectly hiding if we have the following equality of reduced density matrices: \[\left(|\psi_{\lambda,0}\rangle\langle\psi_{\lambda,0}|\right)_{\Cs_\lambda}=\left(|\psi_{\lambda,1}\rangle\langle\psi_{\lambda,1}|\right)_{\Cs_\lambda}\]
\end{definition}
In other words, commitments to 0 and 1 look identical to the receiver.

For binding, we consider honest-binding, which informally guarantees that the sender who honestly computes the commitment cannot later change the committed bit. It is known that honest-binding for canonical form commitments implies the stronger notion of security where the adversary may not even have honestly generated the commitment~\cite{AC:Yan22}.

\begin{definition}[Honest computational binding]\label{def:honestbinding}
    A commitment scheme $$\{|\psi_{\lambda,b}\rangle\}_{\lambda,b}$$ is computationally \emph{honest-binding} if, for every quantum polynomial time adversary $A$ acting on register $\As_\lambda,\Ds_\lambda$ ($\As_\lambda$ being private registers for the adversary), and for every bit $b\in\{0,1\}$, there exists a negligible function $\epsilon$ such that the following holds. Let $\rho_b$ be the reduced density matrix obtained by applying $A$ to the $(\As_\lambda,\Ds_\lambda)$ registers of the state $|0\rangle_{\As_\lambda}|\psi_{\lambda,b}\rangle_{\Ds_\lambda,\Cs_\lambda}$ and then tracing out $\As_\lambda$. Then
    \begin{equation}\label{eq:commitment}
        \tr\parens[\big]{\;|\psi_{\lambda,1-b}\rangle\langle\psi_{\lambda,1-b}|\rho_b\;}\leq\epsilon
    \end{equation}
    $\epsilon$ is called the ``advantage'' of $A$.
\end{definition}
In other words, $\As$ cannot transform a commitment to $b$ to a commitment to $1-b$ by just acting on the register $\Ds_\lambda$.

\paragraph{Commitments relative to oracles.} We can define quantum commitments relative to unitary oracles $(U^1_\lambda,\cdots, U^k_\lambda)$ as a straightforward modification of the above definitions. The states $|\psi_{\lambda,b}\rangle$ are now specified relative to $(U^1_\lambda,\cdots, U^k_\lambda)$, denoted $|\psi^{U^1_\lambda,\cdots,U^k_\lambda}_{\lambda,b}\rangle$, and the algorithm computing these states is allowed to make queries to $(U^1,\cdots,U^k)$. Likewise, the receiver will also be allowed to make queries to $(U^1,\cdots,U^k)$. Moreover, the adversary $A_\lambda$ now can make queries to $(U^1,\cdots,U^k)$. The restriction to being polynomial time computable implies that the number of queries made to construct the commitment states and also by the adversary is polynomial. Hiding will require identical reduced density matrices for every possible oracle. Binding will be defined relative to a distribution over oracles, where the trace taken in Equation~\ref{eq:commitment} is averaged over the choice of $U^1,\cdots,U^k$.

\subsection{A simple commitment scheme}

We now define a simple commitment scheme as follows. Let $\{U_\lambda\}_\lambda$ be a family of unitaries acting on register $\Cs_\lambda=\Ds_\lambda$. The states $|\psi^{U_\lambda}_{\lambda,b}\rangle$ are defined as follows:
\begin{align*}
    |\psi^{U_\lambda}_{\lambda,0}\rangle&=|\epr\rangle_{\Ds_\lambda,\Cs_\lambda}:=\sum_x |x\rangle_{\Ds_\lambda}|x\rangle_{\Cs_\lambda}\\
    |\psi^{U_\lambda}_{\lambda,1}\rangle&=(\mathbf{I}_{\Ds_\lambda}\otimes U_\lambda)\epr\rangle
\end{align*}

Note that the projection onto $|\psi^{U_\lambda}_{\lambda,b}\rangle$ requires access to $U_\lambda$ and $U^\dagger_\lambda$. So the commitment scheme satisfies the requirements of Definition~\ref{def:commitmentsyntax} as long as the oracles given out include $U_\lambda$ and $U_\lambda^\dagger$.

\subsection{Hiding}

\begin{lemma}The commitment scheme $\{|\psi_{\lambda,b}\rangle\}_{\lambda,b}$ is perfectly hiding.\end{lemma}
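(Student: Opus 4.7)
The plan is to compute the reduced density matrix on $\Cs_\lambda$ directly for both $b = 0$ and $b = 1$ and observe that they are equal to the maximally mixed state for every choice of $U_\lambda$. The key fact is the defining property of the EPR state: tracing out one half of $|\epr\rangle$ yields the maximally mixed state on the remaining half.

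First I would handle $b = 0$. Expanding $|\epr\rangle_{\Ds_\lambda,\Cs_\lambda} = \sum_x |x\rangle_{\Ds_\lambda}|x\rangle_{\Cs_\lambda}$ (with the implicit $1/\sqrt{d}$ normalization, where $d$ is the dimension of $\Cs_\lambda$), the reduced density matrix on $\Cs_\lambda$ is
\begin{equation*}
    \tr_{\Ds_\lambda}\!\bigl(|\psi^{U_\lambda}_{\lambda,0}\rangle\langle\psi^{U_\lambda}_{\lambda,0}|\bigr)
    = \frac{1}{d}\sum_{x,y} \langle y | x\rangle_{\Ds_\lambda}\, |x\rangle\langle y|_{\Cs_\lambda}
    = \frac{I_{\Cs_\lambda}}{d}.
\end{equation*}

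Next, I would handle $b = 1$. Since $U_\lambda$ acts only on the $\Cs_\lambda$ register, it commutes past the partial trace over $\Ds_\lambda$, giving
\begin{equation*}
    \tr_{\Ds_\lambda}\!\bigl((\mathbf{I}_{\Ds_\lambda}\otimes U_\lambda)|\epr\rangle\langle\epr|(\mathbf{I}_{\Ds_\lambda}\otimes U_\lambda^\dagger)\bigr)
    = U_\lambda \cdot \tr_{\Ds_\lambda}\!\bigl(|\epr\rangle\langle\epr|\bigr) \cdot U_\lambda^\dagger
    = U_\lambda \cdot \frac{I_{\Cs_\lambda}}{d} \cdot U_\lambda^\dagger
    = \frac{I_{\Cs_\lambda}}{d},
\end{equation*}
by unitarity of $U_\lambda$. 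Both reduced states coincide for every $U_\lambda$, which is exactly perfect hiding. No obstacle arises: the argument is two lines of bookkeeping, and in particular it holds pointwise in the oracle, so the requirement that hiding hold for every choice of $U_\lambda$ is automatic.
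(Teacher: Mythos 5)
Your proof is correct. It takes a slightly different route from the paper's: you compute the reduced density matrix on $\Cs_\lambda$ explicitly for both values of $b$, finding the maximally mixed state in each case (for $b=1$ by pulling $U_\lambda$ through the partial trace over $\Ds_\lambda$ and invoking unitary invariance of $I/d$). The paper instead never computes the reduced state at all: it uses the ricochet identity $(\mathbf{I}_{\Ds_\lambda}\otimes U_\lambda)|\epr\rangle=(U_\lambda^*\otimes\mathbf{I}_{\Cs_\lambda})|\epr\rangle$ to show that $|\psi_{\lambda,0}\rangle$ and $|\psi_{\lambda,1}\rangle$ differ only by a unitary acting on the $\Ds_\lambda$ register, whence their reduced states on $\Cs_\lambda$ coincide. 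Your argument is the more elementary and self-contained one; the paper's has the advantage that the identity it isolates (its Equation for $|\psi_{\lambda,1}\rangle$ in terms of $U_\lambda^*$) is exactly the equation reused later to break binding with a single $U_\lambda^*$ query, so the hiding proof there doubles as setup for the attack. Either way the conclusion holds pointwise in $U_\lambda$, as required.
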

\begin{proof}Observe that \begin{equation}\label{eq:commitmentconj}|\psi^{U_\lambda}_{\lambda,1}\rangle=(\mathbf{I}_{\Ds_\lambda}\otimes U_\lambda)|\epr\rangle=(U_\lambda^*\otimes\mathbf{I}_{\Cs_\lambda})|\epr\rangle\end{equation}
This means $|\psi^{U_\lambda}_{\lambda,0}\rangle$ and $|\psi^{U_\lambda}_{\lambda,1}\rangle$ differ by a unitary applied to the register $\Ds_\lambda$. As such, tracing out the $\Ds_\lambda$ register on both states results in the same reduced density matrix over $\Cs_\lambda$.\end{proof}

\subsection{Insecurity of binding with \texorpdfstring{$U^*$}{conjugate queries}}

\begin{lemma}The commitment scheme $\{|\psi_{\lambda,b}\rangle\}_{\lambda,b}$ is not binding if the adversary is given the oracle $U_\lambda^*$, or alternatively both $U_\lambda$ and $U_\lambda^\trans$.\end{lemma}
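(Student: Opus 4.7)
The plan is to exhibit a trivial one-query adversary $A$ that takes the $\Ds_\lambda$ register of a commitment to $0$ and perfectly transforms it into a commitment to $1$. Concretely, I would have $A$ simply apply $U_\lambda^*$ to $\Ds_\lambda$, ignoring its private register $\As_\lambda$. Starting from $|\psi^{U_\lambda}_{\lambda,0}\rangle = |\epr\rangle_{\Ds_\lambda, \Cs_\lambda}$, the post-attack joint state on $(\Ds_\lambda, \Cs_\lambda)$ is $(U_\lambda^* \otimes \mathbf{I}_{\Cs_\lambda})|\epr\rangle$, which by \Cref{eq:commitmentconj} equals $|\psi^{U_\lambda}_{\lambda,1}\rangle$ exactly. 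Hence the reduced state $\rho_0$ from \Cref{def:honestbinding} equals $|\psi^{U_\lambda}_{\lambda,1}\rangle\langle \psi^{U_\lambda}_{\lambda,1}|$, and the trace in \Cref{eq:commitment} equals $1$ for \emph{every} oracle $U_\lambda$, so the average over any distribution of oracles remains $1$, which is far from negligible.

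For symmetry, to break binding when $b = 1$ (transforming a commitment to $1$ into a commitment to $0$), I would instead have $A$ apply $(U_\lambda^*)^\dagger = U_\lambda^\trans$ to $\Ds_\lambda$. This is available via an inverse query to the $U_\lambda^*$ oracle, since oracle access to any unitary implicitly provides inverse access by reversing the circuit gate-by-gate. To handle the alternative access pattern, where the adversary is granted $U_\lambda$ and $U_\lambda^\trans$ in place of $U_\lambda^*$, I would use the unitarity identity $U_\lambda^* = (U_\lambda^\trans)^{-1}$; oracle access to $U_\lambda^\trans$ therefore supplies $U_\lambda^*$ for free via a single inverse query, reducing this case to the one above.

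There is essentially no technical obstacle: the entire attack is immediate from the algebraic identity in \Cref{eq:commitmentconj} that was just invoked to prove perfect hiding. The point is structural rather than computational—the very symmetry that makes the commitment hiding (the fact that $U_\lambda^*$ applied on the $\Ds$ side produces the same global state as $U_\lambda$ applied on the $\Cs$ side) becomes an attack the moment the adversary can implement $U_\lambda^*$ on their own register, which is precisely the gap between the $\{U, U^\dagger\}$ oracle model and the ``real world'' highlighted in the introduction.
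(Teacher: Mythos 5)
Your core attack is the same as the paper's: a single application of $U_\lambda^*$ to $\Ds_\lambda$ maps $|\psi^{U_\lambda}_{\lambda,0}\rangle$ exactly to $|\psi^{U_\lambda}_{\lambda,1}\rangle$ by the identity $(\mathbf{I}\otimes U_\lambda)|\mathrm{EPR}\rangle = (U_\lambda^*\otimes\mathbf{I})|\mathrm{EPR}\rangle$, giving advantage $1$ for every oracle, and since the binding definition quantifies over all $b$, breaking one direction suffices. Where you diverge is the $(U_\lambda, U_\lambda^\trans)$ case: you reduce it to the $U_\lambda^*$ case by taking an \emph{inverse} query to the $U_\lambda^\trans$ oracle, justified by the claim that ``oracle access to any unitary implicitly provides inverse access by reversing the circuit gate-by-gate.'' That justification is exactly the conflation this paper is warning against: in the oracle model there is no circuit to reverse, and forward and inverse access are distinct resources (the paper's binding theorem, for instance, explicitly lists $U_\lambda$ and $U_\lambda^\dagger$ as separate oracles). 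If the lemma is read as granting \emph{only} forward queries to $U_\lambda$ and $U_\lambda^\trans$, your reduction is not licensed as stated. The paper sidesteps this entirely by attacking the $b=1$ direction directly: use one forward query to $U_\lambda$ to prepare the commitment to $1$, then apply $U_\lambda^\trans$ to $\Ds_\lambda$, which returns the EPR state since $U_\lambda^\trans U_\lambda^* = (U_\lambda^\dagger U_\lambda)^* = \mathbf{I}$. You should either adopt that direct attack or explicitly note that the model in question does grant inverse queries; the algebra in your writeup ($U_\lambda^* = (U_\lambda^\trans)^\dagger$) is otherwise correct.
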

\begin{proof}The adversary commits to $0$ by sending half of $|\psi_{\lambda,0}^{U_\lambda}\rangle=|\epr\rangle$.
Equation~\ref{eq:commitmentconj} also shows that a binding adversary can transform $|\psi^{U_\lambda}_{\lambda,0}\rangle$ to $|\psi^{U_\lambda}_{\lambda,1}\rangle$, the commitment of 1, by simply applying $U_\lambda^*$ to $\Ds_\lambda$. The overall attack (including generating the commitment) requires only a single query to $U^*_\lambda$.

We can also change from commitments to 1 to commitments to 0 by applying $U_\lambda^\trans$. Note that a commitment to 1 requires a single query to $U_\lambda$ to construct, so the overall attack requires a query to both $U_\lambda$ and $U_\lambda^\trans$.
\end{proof}

\subsection{Security with just \texorpdfstring{$U,U^\dagger$}{forward and inverse queries}}

We now show, however, that with access to $U_\lambda,U_\lambda^\dagger$ but without direct access to $U_\lambda^*$, it is impossible in polynomial time to break binding. In particular, this means that the commitment scheme is valid and secure in a model where all parties are given $(U_\lambda,U_\lambda^\dagger)$ (since the construction only needs $U_\lambda$ and $U_\lambda^\dagger$). However, such binding security would generally be considered incorrect, as in ``the real world'' one would have a circuit for $U_\lambda$, which allows for computing $U_\lambda,U_\lambda^\dagger$ but also $U_\lambda^*$ and even $U_\lambda^\trans$. 

\begin{theorem}\label{thm:commitmentbinding} For Haar random $U_\lambda$, the commitment scheme $\{|\psi^{U_\lambda}_{\lambda,b}\rangle\}_{\lambda,b}$ is honest computational binding if the adversary is given only $U_\lambda,U_\lambda^\dagger$.
\end{theorem}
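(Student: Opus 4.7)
The plan is to show that for Haar-random $U_\lambda$ on dimension $d = 2^\lambda$, any $q$-query adversary with access only to $U_\lambda, U_\lambda^\dagger$ achieves advantage negligible in $\lambda$ in \cref{eq:commitment}. I will handle the case $b=0$; the case $b=1$ follows symmetrically after exchanging the roles of $U_\lambda$ and $U_\lambda^\dagger$ in the argument.

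First, I would reformulate the success probability using the transpose identity $(I_{\Ds_\lambda} \otimes U_\lambda)|\epr\rangle = (U_\lambda^T \otimes I_{\Cs_\lambda})|\epr\rangle$, which rewrites the verifier's projector as $|\psi^{U_\lambda}_{\lambda,1}\rangle\langle\psi^{U_\lambda}_{\lambda,1}| = (U_\lambda^T \otimes I)|\epr\rangle\langle\epr|(U_\lambda^* \otimes I)$. Writing the adversary's post-commit state as $|\Phi_A\rangle = (A \otimes I_{\Cs_\lambda})|0\rangle_{\As_\lambda}|\epr\rangle_{\Ds_\lambda\Cs_\lambda}$, the success probability from \cref{eq:commitment} equals
\begin{equation*}
    \bigl\|(I_{\As_\lambda} \otimes \langle\epr|\,(U_\lambda^* \otimes I_{\Cs_\lambda}))\,|\Phi_A\rangle\bigr\|^2.
\end{equation*}
By the Choi--Jamio\l{}kowski correspondence, this is the entanglement fidelity between the effective channel that $A$ induces on $\Ds_\lambda$ and the unitary channel $\rho \mapsto U_\lambda^* \rho\, U_\lambda^T$. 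So breaking binding amounts to approximately implementing $U_\lambda^*$ on one half of an EPR pair using only $U_\lambda, U_\lambda^\dagger$ queries.

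Next, I would bound this quantity in expectation over Haar-random $U_\lambda$. Since the adversary makes at most $q$ queries to $U_\lambda$ and $q$ to $U_\lambda^\dagger$, the vector $|\Phi_A\rangle$ is a polynomial of multidegree at most $(q,q)$ in the entries of $U_\lambda$ and $\bar U_\lambda$. Combined with the verifier's extra $U_\lambda^*$, the success probability is a polynomial of balanced degree $q+1$ in $U_\lambda$ and $q+1$ in $\bar U_\lambda$. Applying the Weingarten formula --- or equivalently Schur--Weyl duality in the spirit of \cref{lem:averaging-epr} and \cref{lem:haar-average} --- the Haar average decomposes into a sum over permutations indexed by how $U_\lambda$-slots pair with $\bar U_\lambda$-slots. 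The structural point is that the adversary's $U, U^\dagger$ queries can only generate ``$UU^\dagger$-balanced'' index wiring, whereas the verifier's $U_\lambda^*$ introduces a single free $\bar U_\lambda$-index which must be matched through a trace-like contraction, costing at least a $d^{-1}$ factor in every surviving pairing. This yields an overall bound of $\mathrm{poly}(q)/d$, negligible for $d = 2^\lambda$.

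The main obstacle is managing the Weingarten / Schur--Weyl bookkeeping uniformly across the adversary's adaptive query pattern, keeping the polynomial factors in $q$ under control. A cleaner alternative may be to adopt the acorn-trick philosophy of \cref{thm:useless}: view $U_\lambda$ as a state-preparation unitary (after padding with ancilla or by passing to its Choi state) and trade $q$ queries for $O(q^2)$ copies of a Haar-random pure state. At that point, the binding reduces to the statement that no polynomial-sample algorithm with copies of a Haar-random state can implement complex conjugation on an EPR half, which in turn admits direct proofs via the $t$-design property of Haar-random states or via hypothesis-testing arguments akin to those in \cite{hbk25,skkob25}.
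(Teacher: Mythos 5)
Your reformulation of binding as the entanglement fidelity of the adversary's induced channel with $\rho\mapsto U^*\rho\,U^{\mathrm{T}}$ is correct and matches the paper's framing: breaking binding is equivalent to approximately implementing $U^*$ given only $U,U^\dagger$ queries. The gap is in the step where you actually bound this quantity. Your Weingarten sketch observes that the acceptance probability is a \emph{balanced} polynomial of degree $q+1$ in $U$ and in $\bar U$ --- but balance is precisely why no suppression comes for free: a generic balanced moment of degree $(q+1,q+1)$ is $\Theta(1)$ after Haar averaging, since the $d^{-(q+1)}$ Weingarten weights are compensated by up to $d^{\,q+1}$ from the free index sums. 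The assertion that the verifier's lone $U^*$ ``must be matched through a trace-like contraction, costing $d^{-1}$ in every surviving pairing'' is, for arbitrary adaptive query wirings, exactly the theorem to be proved rather than a step in its proof; the paper explicitly flags this as the core technical difficulty. Your fallback route via \cref{thm:useless} also does not apply as stated: that theorem converts queries to a \emph{state-preparation unitary} into copies of the prepared state, whereas here the adversary queries a Haar-random $U_\lambda$ on arbitrary inputs, so its action is not captured by any single state or Choi vector accessible in $\mathcal{O}(q^2)$ copies.

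What the paper does instead is a chain of reductions that manufactures a situation where a query-to-copies conversion \emph{does} apply: (i) a worst-case-to-average-case step showing an adversary for Haar-random $U_\lambda$ yields one for any distribution (compose with a known Haar-random $W_\lambda$); (ii) specializing to reflections about random half-dimensional subspaces, so that breaking binding forces the adversary to synthesize $|\phi_{i,1-b}^*\rangle$ from $|\phi_{i,b}\rangle$ within random $2$-dimensional blocks; (iii) stripping the oracle down to a single swap $\mathrm{SWAP}_{|\phi_1\rangle}$, which is a reflection about the single state $|\phi_1-\rangle$ and hence simulable from copies via \cref{lem:reflect}; and (iv) invoking the indistinguishability of $|\phi\rangle^{\otimes\ell}\otimes|\phi^*\rangle^{\otimes\ell}$ from $|\phi\rangle^{\otimes\ell}\otimes|\phi'\rangle^{\otimes\ell}$ for Haar-random states. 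The resulting bound is $O(q^{2/3}\log(d)^{2/3}/d^{1/3})$, considerably weaker than the $\mathrm{poly}(q)/d$ you claim --- a further indication that the direct Weingarten route, even if completable, is not the one-line contraction-counting argument you describe.
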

We will prove Lemma~\ref{thm:commitmentbinding} by gradually reducing the problem to ever-simpler tasks, ultimately arriving at the task of constructing $|\phi^*\rangle$ from several copies of $|\phi\rangle$ for a Haar-random state $|\phi\rangle$, which is information-theoretically impossible. From this point forward, we will ignore computational costs and just consider query access. Query lower-bounds in particular imply computational lower-bounds. Throughout, we will focus on the security of transforming commitments to 0 to commitments to 1, the other direction having an almost identical proof.

\paragraph{Random $U_\lambda$ to arbitrary $U_\lambda$.} We now show that if there existed an adversary contradicting the security of $\{|\psi_{\lambda,b}\rangle\}_{\lambda,b}$ in the $U_\lambda,U_\lambda^\dagger$ model for a Haar-random $U_\lambda$, then there is an adversary for \emph{any} distribution over $U_\lambda$.
\begin{lemma}If there exists a quantum algorithm $A_0$ making $q$ queries, which has advantage $\epsilon$ in breaking honest binding for $\{|\psi^{U_\lambda}_{\lambda,b}\rangle\}_{\lambda,b}$ for a Haar-random $U_\lambda$, then for any other distribution $D_\lambda$ on $U_\lambda$, there exists a quantum algorithm $A_1$ making $q$ queries, which also has advantage $\epsilon$ in breaking honest binding.
\end{lemma}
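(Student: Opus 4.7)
The plan is a right-multiplicative twirl. Given $A_0$ that breaks binding against Haar-random $U_\lambda$ with advantage $\epsilon$ in $q$ queries, I will construct $A_1$ as follows: sample a Haar-random unitary $V$ on the same register as $U_\lambda$, simulate the oracle $W := U_\lambda V$ for $A_0$ by applying $V$ before each forward query to $U_\lambda$ (respectively $V^\dagger$ after each inverse query to $U_\lambda^\dagger$), and after $A_0$ terminates apply $V^*$ to the opening register $\Ds_\lambda$. Each simulated $W$-query costs exactly one query to $U_\lambda$ or $U_\lambda^\dagger$, so $A_1$ uses $q$ queries.

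By right-invariance of the Haar measure, for every fixed $U_\lambda$ the product $W = U_\lambda V$ is Haar-distributed when $V$ is. Thus $A_0$, run inside $A_1$, faces the same oracle distribution as in the Haar-random game, and its output state $\rho_0$ (which depends on the sampled $V$) satisfies $\E_V\langle\psi_{\lambda,1}^W|\rho_0|\psi_{\lambda,1}^W\rangle = \epsilon$. It remains to verify that the $V^*$ post-processing correctly rotates this success into a success against $U_\lambda$. Using the EPR identity $(A\otimes I)|\epr\rangle = (I\otimes A^T)|\epr\rangle$, one computes
\begin{align*}
    (V^T\otimes I)|\psi^{U_\lambda}_{\lambda,1}\rangle
    = (V^T\otimes U_\lambda)|\epr\rangle
    = (I\otimes U_\lambda V)|\epr\rangle
    = |\psi^W_{\lambda,1}\rangle.
\end{align*}
Since $(V^*)^\dagger = V^T$, sandwiching $\rho_0$ between $V^*\otimes I$ and $V^T\otimes I$ on the $|\psi^{U_\lambda}_{\lambda,1}\rangle$-projector yields
\begin{align*}
    \langle\psi^{U_\lambda}_{\lambda,1}|(V^*\otimes I)\rho_0(V^T\otimes I)|\psi^{U_\lambda}_{\lambda,1}\rangle
    = \langle\psi^W_{\lambda,1}|\rho_0|\psi^W_{\lambda,1}\rangle,
\end{align*}
so $A_1$'s advantage equals $\E_V[\mathrm{Adv}(A_0, U_\lambda V)] = \epsilon$ for every $U_\lambda$, and in particular in expectation over any distribution $D_\lambda$.

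The main pitfall is the sidedness of the twirl. The natural-looking left twirl $W = VU_\lambda$ would instead require undoing a $V$ sandwiched between $U_\lambda$ and the EPR state, which a short calculation shows amounts to applying $U_\lambda^\dagger V^\dagger U_\lambda$ on $\Ds_\lambda$---an operator $A_1$ cannot implement from just a classical description of $V$, since it also depends on the unknown $U_\lambda$. The right twirl succeeds precisely because the extra factor of $V$ sits outside $U_\lambda$ on the $\Cs_\lambda$ side of the EPR state, where the EPR identity translates it into a $V$-dependent operator on $\Ds_\lambda$ that is independent of $U_\lambda$. The symmetric argument handles the other binding direction (converting a commitment to $1$ into a commitment to $0$), as mentioned in the text.
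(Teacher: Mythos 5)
Your proof is correct, and it takes the same overall strategy as the paper: Haar-twirl the unknown unitary, run $A_0$ against the resulting Haar-random oracle (whose queries cost exactly one query each to the true oracle), and undo the twirl by a known operator on the opening register $\Ds_\lambda$. The only structural difference is the \emph{side} of the twirl: you define the twirled oracle as $W = U_\lambda V$ with the random $V$ on the right, whereas the paper defines it as $U_\lambda = W_\lambda V_\lambda$ with the random $W_\lambda$ on the left.

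This difference matters, and your closing remark about why the left twirl fails in fact identifies a genuine slip in the paper's own proof. The ricochet identity used throughout the paper's cryptography section, $(\mathbf{I}\otimes U)|\epr\rangle = (U^*\otimes\mathbf{I})|\epr\rangle$, is not correct for $|\epr\rangle = \sum_x|x\rangle|x\rangle$; the correct identity is $(\mathbf{I}\otimes U)|\epr\rangle = (U^{\mathrm T}\otimes\mathbf{I})|\epr\rangle$ (conjugate and transpose are swapped). With the corrected identity, tracing the paper's left twirl through gives $|\psi^{U_\lambda}_{\lambda,1}\rangle = (\mathbf{I}\otimes W_\lambda)\,|\psi^{V_\lambda}_{\lambda,1}\rangle$: the $W_\lambda$-dependent correction sits on the receiver's register $\Cs_\lambda$, which the binding adversary cannot touch, so the left twirl cannot be undone. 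Your right twirl produces the correction $(V^*\otimes\mathbf{I})$ on the opening register $\Ds_\lambda$, exactly where the adversary acts, which is why it works. Your computation $(V^{\mathrm T}\otimes\mathbf{I})|\psi^{U_\lambda}_{\lambda,1}\rangle = |\psi^W_{\lambda,1}\rangle$, the right-invariance argument showing $W$ is Haar for each fixed $U_\lambda$, and the $q$-query accounting are all sound. (Your expression $U_\lambda^\dagger V^\dagger U_\lambda$ for the would-be left-twirl correction has the conjugate/transpose variants slightly off---the exact operator is $U_\lambda^{\mathrm T}V^*U_\lambda^*$---but the point that it depends on the unknown $U_\lambda$ is correct and is the reason the left twirl fails.) In short: your proof is the correct version of the paper's argument, and the paper's version as written needs the same fix you describe.
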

\begin{proof}We devise an adversary $A_1$ for the distribution $D_\lambda$ as follows. $A_1$ makes queries to $V_\lambda,V_\lambda^\dagger$ sampled from $D_\lambda$, and attempts to map $|\psi^{V_\lambda}_{\lambda,b}\rangle$ to $|\psi^{V_\lambda}_{\lambda,1-b}\rangle$.

To do so, it will sample a Haar-random unitary $W_\lambda$, and define $U_\lambda=W_\lambda V_\lambda$. It will run $A_0$ with oracles $U_\lambda,U_\lambda^\dagger$, which it can simulate using $W_\lambda$ and by making queries to $V_\lambda$. Observe that $U_\lambda$ is in fact Haar random.

The guarantee of $A_0$ is that on input $|\psi^{U_\lambda}_{\lambda,0}\rangle$, it produces a state with $\epsilon$-overlap with \[|\psi^{U_\lambda}_{\lambda,1}\rangle=(U_\lambda^*\otimes\mathbf{I})|\epr\rangle=(W_\lambda^*\otimes\mathbf{I})|\psi^{V_\lambda}_{\lambda,1}\rangle\]
just by manipulating the $\Ds_\lambda$ register. Then $A_1$ can just apply $W^\trans$ to the $\Ds_\lambda$ register to get a state with $\epsilon$ overlap with $|\psi^{V_\lambda}_{\lambda,1}\rangle$.
\end{proof}
\begin{remark}Here, generating $W_\lambda$ and running it is in general inefficient, which is okay for us since we are focusing on query complexity. However, $W_\lambda$ can be made efficient by replacing it with a PRU.
\end{remark}

\paragraph{Our distribution of unitaries.} Moving forward, we will choose $D_\lambda$ to be the distribution which samples a random subspace of half the dimension, and sets $U_\lambda$ to be the reflection about that subspace. Equivalently, $D_\lambda$ randomly partitions the space into the product of random subspaces $S_i$ of dimension 2, then within each subspace chooses a random basis $\{|\phi_{i,0}\rangle,|\phi_{i,1}\rangle\}$ and has $U_\lambda$ swap $|\psi_{i,0}\rangle$ and $|\psi_{i,1}\rangle$. Then $U_\lambda$ is reflecting about the space spanned by $|\phi_{i,0}\rangle+|\phi_{i,1}\rangle$. Let $B$ be the basis for the entire space $\{|\phi_{i,b}\rangle\}_{i,b}$.

\paragraph{Swapping elements of $B$.} We now show that $A_1$ must be swapping elements of $B$. Consider the following different experiment on the adversary $A_1$: a random computational basis element $|\phi_{i,b}\rangle$ is chosen and fed into $A$. Then let $\tau_{i,b}^{U_\lambda}$ be the reduced density matrix after applying $A_1$ and tracing out the register $\As_\lambda$. Then we let $p=\min_{i,b}\mathbb{E}_{U_\lambda\gets D_\lambda}[\langle \phi_{i,1-b}^*|\; \tau_{i,b}^{U_\lambda}\;|\phi^*_{i,1-b}\rangle]$
We call $p$ the basis state swap advantage of $A$.
\begin{lemma}For any $A_1$, $\epsilon\leq p$.
\end{lemma}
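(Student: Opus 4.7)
My plan is to open up the advantage $\epsilon$ using the Kraus operators of the channel $\mathcal{A}_{U_\lambda}$ that the adversary $A_1$ induces on $\Ds_\lambda$ (after tracing out $\As_\lambda$), then apply Cauchy--Schwarz to the resulting coherent sum, and finally invoke a Haar-symmetry argument to upgrade an average over basis elements to the minimum $p$.

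First, I would expand the EPR state in the random basis $B$, using the basic identity $|\epr\rangle = \frac{1}{\sqrt{d}}\sum_{i,b}|\phi^*_{i,b}\rangle_{\Ds_\lambda}|\phi_{i,b}\rangle_{\Cs_\lambda}$ valid in any orthonormal basis, which together with the swap structure of $U_\lambda$ gives $|\psi^{U_\lambda}_{\lambda,1}\rangle = \frac{1}{\sqrt{d}}\sum_{i,b}|\phi^*_{i,1-b}\rangle_{\Ds_\lambda}|\phi_{i,b}\rangle_{\Cs_\lambda}$. Writing $\mathcal{A}_{U_\lambda}(\cdot) = \sum_\alpha K_\alpha(\cdot)K_\alpha^\dagger$, orthonormality of the $\Cs_\lambda$ basis collapses the cross-terms and gives the clean identity
\begin{equation*}
    \epsilon_{U_\lambda} := \langle\psi^{U_\lambda}_{\lambda,1}|(\mathcal{A}_{U_\lambda}\otimes I)(|\epr\rangle\langle\epr|)|\psi^{U_\lambda}_{\lambda,1}\rangle = \frac{1}{d^2}\sum_\alpha\Bigl|\sum_{i,b}\langle\phi^*_{i,1-b}|K_\alpha|\phi^*_{i,b}\rangle\Bigr|^2.
\end{equation*}

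Second, the $d$-term inner sum $T_\alpha := \sum_{i,b}\langle\phi^*_{i,1-b}|K_\alpha|\phi^*_{i,b}\rangle$ admits the Cauchy--Schwarz bound $|T_\alpha|^2 \le d\sum_{i,b}|\langle\phi^*_{i,1-b}|K_\alpha|\phi^*_{i,b}\rangle|^2$. Summing over $\alpha$ and recognizing $\sum_\alpha|\langle\phi^*_{i,1-b}|K_\alpha|\phi^*_{i,b}\rangle|^2$ as a per-input success probability, I obtain
\begin{equation*}
    \epsilon_{U_\lambda} \le \frac{1}{d}\sum_{i,b}\langle\phi^*_{i,1-b}|\mathcal{A}_{U_\lambda}(|\phi^*_{i,b}\rangle\langle\phi^*_{i,b}|)|\phi^*_{i,1-b}\rangle = \frac{1}{d}\sum_{i,b}\langle\phi^*_{i,1-b}|\tau^{U_\lambda}_{i,b}|\phi^*_{i,1-b}\rangle,
\end{equation*}
where I read the swap-experiment input as the conjugated basis element $|\phi^*_{i,b}\rangle$, which is precisely the reduced state on $\Ds_\lambda$ one obtains by measuring the $\Cs_\lambda$-half of the EPR pair in the basis $B$.

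Third, I would take the expectation over $U_\lambda \sim D_\lambda$ and use that the law of $D_\lambda$, being a Haar-random orthonormal basis carrying a fixed pair-swap structure, is invariant under any permutation of the pair indices $(i,b)$, including the within-pair swap $0 \leftrightarrow 1$ (which is just applying $U_\lambda$ itself to the basis). Thus $\mathbb{E}_{U_\lambda}[\langle\phi^*_{i,1-b}|\tau^{U_\lambda}_{i,b}|\phi^*_{i,1-b}\rangle]$ is independent of $(i,b)$, so the minimum equals the average, and combining with the Cauchy--Schwarz step yields $\epsilon = \mathbb{E}_{U_\lambda}[\epsilon_{U_\lambda}] \le p$.

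The main difficulty will be making the symmetry step airtight: the channel $\mathcal{A}_{U_\lambda}$ is a nontrivial functional of the full oracle $U_\lambda$, so I must verify that pair-relabeling really is a measure-preserving change of variables on the joint distribution of $(B, U_\lambda)$ that takes ``success on $(i,b)$'' into ``success on $(i',b')$.'' This ultimately reduces to the observations that a pair relabeling is a unitary conjugation, such conjugations preserve both the Haar law of $B$ and the reflection/swap structure of $U_\lambda$, and the adversary's query behaviour transforms covariantly under them.
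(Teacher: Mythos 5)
Your proof is correct and takes essentially the same route as the paper's: both establish that the success probability at projecting onto the span of the single-pair targets $\ket{\phi^*_{i,1-b}}_{\Ds}\ket{\phi_{i,b}}_{\Cs}$ upper-bounds $\epsilon$, and both finish with the same ``index-opacity'' symmetry argument to replace the average over $(i,b)$ by the minimum. The only cosmetic difference is that you derive the key inequality through a Kraus decomposition and a per-Kraus-operator Cauchy--Schwarz bound, whereas the paper expresses the same estimate more compactly as operator monotonicity, $\tr(\Pi\rho_0)\ge\tr(\proj{\psi_{\lambda,1}}\rho_0)$, for the rank-$d$ projector $\Pi$ dominating the rank-one target $\proj{\psi_{\lambda,1}}$.
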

\begin{proof}Let $\Pi$ be the projection onto the states $\left(|\phi^*_{i,1-b}\rangle_{\Ds_\lambda}\right)|\phi_{i,b}\rangle_{\Cs_\lambda}=\left(U^*|\phi^*_{i,b}\rangle_{\Ds_\lambda}\right)|\phi_{i,b}\rangle_{\Cs_\lambda}$. Let $p'=\mathbb{E}_{i,b}\mathbb{E}_{U_\lambda\gets D_\lambda}[\langle \phi_{i,1-b}^*|\; \tau_{i,b}^{U_\lambda}\;|\phi^*_{i,1-b}\rangle]$. Then observe that $p'$ is equivalent to \[p'=\tr(\Pi\rho_0)\]
where $\rho_0$ is the reduced density matrix over $(\Ds_\lambda,\Cs_\lambda)$ as defined in Definition~\ref{def:honestbinding}. The space accepted by $\Pi$ includes in particular the state $|\psi_{\lambda,1}\rangle$, meaning $p'=\tr(\Pi\rho_0)\geq \tr(|\psi_{\lambda,1}\rangle\langle\psi_{\lambda,1}|\rho_0)=\epsilon$.

To then lower-bound $p$, observe that the choice of indices $i,b$ in the basis states is completely opaque to $A_1$, and as such the average is equal to the worst case choice of $i,b$.
\end{proof}

In other words, the algorithm $A_1$ is good at synthesizing the states $|\phi_{i,1-b}^*\rangle$ given $|\phi_{i,b}^*\rangle$, for an arbitrary choice of $i,b$.

\paragraph{Moving to a simple swap oracle.} We now remove all of the oracle $U_\lambda$ except the part that acts on the subspace $S_0$.

Let $D_\lambda'$ be the distribution which samples a random subspace $S$ of dimension 2, a random basis $|\phi_0\rangle,|\phi_1\rangle$ for that basis, and produces the unitary $V_\lambda$ which swaps $|\phi_0\rangle$ with $|\phi_1\rangle$ but acts as the identity on states orthogonal to $S$. Equivalently, $V_\lambda$ is the reflection $\mathbf{I}-(|\phi_0\rangle-|\phi_1\rangle)(\langle\phi_0|-\langle\phi_1|)/2$.

\begin{lemma}\label{lem:movetoswap}Let $A_1$ and $q,p,\epsilon$ be as above. Then there exists a algorithm $A_2$ which is given as input $|\phi_0\rangle$ and makes $q$ queries to $V_\lambda$ and $\bigO{q\times t}$ queries to the projector for $S$, where $S,|\phi_0\rangle,|\phi_1\rangle$ and $V_\lambda$ are sampled from $D_\lambda'$. The guarantee on $A_2$ is the following. Let $\upsilon$ be the reduced density matrix of $A_2$ after tracing out $\As_\lambda$. Then $\langle\phi_1^*| \upsilon |\phi_1^*\rangle\geq \epsilon-2^{-\bigO{t}}-\bigO{1/d}$, where $d$ is the dimension.
\end{lemma}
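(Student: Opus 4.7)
The plan is to realize $A_2$ as a reduction that simulates $A_1$'s oracle $U_\lambda \sim D_\lambda$ using only the oracle $V_\lambda \sim D_\lambda'$ together with the projector oracle for the subspace $S$. The structural fact making this possible is that any $U_\lambda$ drawn from $D_\lambda$ decomposes as $U_\lambda = V_\lambda \cdot V'_\lambda$, where $V_\lambda$ is the 2-dim reflection on $S_0 := S$ and $V'_\lambda$ is a reflection about a uniformly random $(d/2-1)$-dim subspace $T' \subset S^\perp$, acting as the identity on $S$. One checks that this product is the reflection about a uniformly random $d/2$-dim subspace of $\C^d$, matching the marginal law of $U_\lambda$ under $D_\lambda$ after identifying $S_0$ with $S$ (and the random basis of $S$ in $D_\lambda$ with the random basis provided by $V_\lambda$).

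To simulate one $U_\lambda$-query, $A_2$ applies $V_\lambda$ once (which acts trivially outside $S$) and then applies its internally-sampled $V'_\lambda$ (which acts trivially inside $S$). For coherent implementation of $V'_\lambda$ without explicit knowledge of $S^\perp$, $A_2$ uses the projector oracle to compute a flag bit onto an ancilla, splitting any input $\ket{\chi}$ into $(P_S\ket{\chi})\ket{0} + (P_{S^\perp}\ket{\chi})\ket{1}$; it then applies $W R W^\dagger$ to the $\ket{1}$ branch, for a fixed reflection $R$ and a one-time-sampled Haar-random unitary $W$ on the data register (treated as an ideal operation, since we are tracking query complexity only). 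Finally, it uncomputes the flag. Using $O(t)$ projector queries per gadget, via standard amplitude-amplification-style gap-amplification on the flag computation, the simulated $U_\lambda$ can be implemented with error $2^{-\Omega(t)}$ per query in diamond norm, so total simulation error is $q \cdot 2^{-\Omega(t)} = 2^{-\Omega(t)}$ across all $q$ queries.

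Having set up this simulator, $A_2$ runs $A_1$ on input $\ket{\phi_0}_{\Ds_\lambda}$, identifying this with the basis state $\ket{\phi_{0,0}}$ in $D_\lambda$'s indexing. By the definition of the basis-state swap advantage $p$ specialized to the index pair $(i,b) = (0,0)$, the reduced density matrix $\tau_{0,0}^{U_\lambda}$ produced by $A_1$ satisfies $\E_{U_\lambda \sim D_\lambda}[\bra{\phi_1^*}\tau_{0,0}^{U_\lambda}\ket{\phi_1^*}] \geq p \geq \epsilon$ under a true sample from $D_\lambda$. Averaging over $A_2$'s internal randomness ($W$, the ancillas, and the distribution of $V_\lambda \sim D_\lambda'$) gives the same quantity, up to the two error terms: a $2^{-O(t)}$ contribution from the accumulated simulation error, and an $O(1/d)$ contribution from the residual statistical mismatch between $A_2$'s simulated $U_\lambda$-distribution and $D_\lambda$ (coming from the fact that a Haar-random reflection realized via $WRW^\dagger$ conditional on the $\ket{1}$ branch induces a marginal on the $(d/2-1)$-dim subspace of $S^\perp$ that differs from the uniform distribution by $O(1/d)$). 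This yields $\bra{\phi_1^*}\upsilon\ket{\phi_1^*} \geq \epsilon - 2^{-O(t)} - O(1/d)$, as claimed.

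The main obstacle will be the middle step: implementing the conditional reflection $V'_\lambda$ as a genuine unitary with $2^{-\Omega(t)}$ error using only a projector oracle for $S$, and verifying that the resulting simulated $U_\lambda$ is jointly distributed correctly across all $q$ queries (so that $A_1$'s guarantee really applies). In particular, one must ensure that $W$ is sampled once and reused across queries (so the simulated $U_\lambda$ is a fixed unitary in each run), that the routing gadget does not accidentally leak flag information into $\As_\lambda$ in a way that disturbs the analysis, and that the $O(1/d)$ marginal-distribution gap is indeed no worse than claimed.
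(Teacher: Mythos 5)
Your high-level reduction matches the paper's: write $U_\lambda$ as $V_\lambda$ composed with an auxiliary reflection that acts trivially on $S$, and simulate that auxiliary reflection using only the projector for $S$. But the gadget you propose for that auxiliary reflection does not work, and you do not resolve it (you flag it yourself at the end as the ``main obstacle''). You compute a flag for $S$ versus $S^\perp$, apply $W R W^\dagger$ (Haar-random $W$ on $\mathbb{C}^d$, fixed reflection $R$) conditioned on the flag being $\ket{1}$, then uncompute the flag. Since $W R W^\dagger$ is a reflection of the \emph{whole} space, it does not preserve $S^\perp$: the $\ket{1}$ branch acquires a component in $S$, so the flag cannot be uncomputed coherently, and tracing the ancilla yields a channel whose principal Kraus operator on the data is $P_S + P_{S^\perp} W R W^\dagger P_{S^\perp}$, which is not a unitary. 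Worse, even approximately, $W R W^\dagger$ restricted to $S^\perp$ is not a reflection about a $(d/2-1)$-dimensional subspace of $S^\perp$: the $(d/2-1)$-dimensional $+1$ eigenspace of $W R W^\dagger$ generically meets $S^\perp$ (dimension $d-2$) in a $(d/2-3)$-dimensional subspace, so the eigenspace dimensions of the simulated $V'_\lambda|_{S^\perp}$ are off by a constant, and the simulated $U_\lambda$ is not a reflection about a $d/2$-dimensional subspace. Neither the well-definedness of the gadget nor the $O(1/d)$ ``marginal mismatch'' bound is justified, and the appeal to ``amplitude-amplification-style gap-amplification on the flag computation'' is not a concrete procedure -- the flag computation is already exact in one query, so it is unclear what is being amplified.

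The paper sidesteps the problem by never producing a reflection that could leave $S^\perp$. It samples a random $(d/2-1)$-dimensional subspace $T'$ of the full space (no projector queries needed), defines $T$ as the orthogonal complement of $\mathrm{span}(S \cup T')$ -- so $T$ is orthogonal to $S$ by construction, has dimension $d/2-1$ generically, and is uniform conditioned on that orthogonality -- and implements the sign flip on $T$ by approximately projecting onto $T$ via $t$ rounds of alternating projections onto $(T')^\perp$ and $S^\perp$. The operator $M = (I - P_{T'})(I - P_S)(I - P_{T'})$ has $T$ as its unit eigenspace, the residual eigenvalue $\lambda_0 < 1$ concentrates near $1/2$, and the $t$-fold alternation converges to $P_T$ with error $\lambda_0^{2t-1} = 2^{-O(t)}$; the $O(1/d)$ term absorbs the small probability that the spectral gap of $M$ is atypically small. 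Repairing your proof essentially requires recovering this alternating-projections construction.
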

\begin{proof}$A_2$ will construct a subspace $T$ which is orthogonal to $S$ and has dimension $d/2-1$, where $d$ is the total dimension. We will explain later how $T$ is constructed. Then $A_2$ will define the action of $U_\lambda$ as follows:
\begin{itemize}
    \item On $S$, $U_\lambda$ will be equal to $V_\lambda$.
    \item On $T$, $U_\lambda$ will be a sign flip
    \item Orthogonal to $S,T$, $U_\lambda$ will be the identity.
\end{itemize}
This fully specifies $U_\lambda$. Moreover, if $T$ is a random subspace orthogonal to $S$, then $U_\lambda$ is exactly reflecting about a space of dimension $d/2$ ($d/2-1$ from the space orthogonal to $S,T$, and $1$ from the subspace of $S$ that is not flipped). $A_2$ then runs $A_1$ with access to this $U_\lambda$, where the input state to $A_1$ is the same as $A_2$'s input state. This implicitly sets the subspace $S_i$ to be $S$ and the state $|\phi_{i,b}\rangle=|\phi_0\rangle$, for an arbitrary $i,b$.

So far, our description of $A_2$ has success probability matching $A_1$, and in particular at least $\epsilon$. We just need to describe how to implement $U_\lambda$. Notice that applying $V_\lambda$ takes care of the action on the subspace $S$ and leaves everything else untouched. It then suffices to perform a sign flip on $T$, which is equivalent to being able to project onto $T$.

The challenge is that we do not know $S$ explicitly, but only have oracle access, so it is not feasible for us to exactly construct a subspace orthogonal to $S$. However, we can implicitly construct a subspace $T$, together with the ability to project onto it, as follows:
\begin{itemize}
    \item We choose a random subspace $T'$ of the entire space, with dimension $d/2-1$.
    \item We implicitly define $T$ as the space orthogonal to both $S$ and $T'$. With all but negligible probability, $T$ will have dimension $d/2-1$. By definition, $T$ will be orthogonal to $S$, and it will be random conditioned on orthogonality with $S$.
    \item In order to (approximately) project onto $T'$, we iteratively apply the projectors for $T'$ and $S$ for $t$ steps, and accept if all the projectors rejected. 
\end{itemize}
The above clearly accepts states in $T$. We just need to show that it rejects anything orthogonal to $T$ with overwhelming probability. To do so, consider the matrix $M=(I-T')(I-S)(I-T')$ where we take $S,T'$ to also be the projectors onto the subspaces $S,T'$. Let $|\tau\rangle$ be an eigenvector of $M$ with eigenvalue $\lambda$. Then the probability iteratively applying $T',S$ to $|\tau\rangle$ for $t$ steps will have all projectors reject with probability 
\begin{align*}
    \| ST'ST'\cdots ST'|\tau\rangle\|^2&=\langle\tau|(I-T')(I-S)\cdots (I-T')(I-S)\enspace (I-S) (I-T')\cdots (I-S)(I-T')|\tau\rangle\\
    &=\langle\tau|M^{2t-1}|\tau\rangle=\lambda^{2t-1}
\end{align*}
A state is in $T$ if and only if it is an eigenvector of $M$ with eigenvalue $\lambda=1$, in which case we get acceptance with probability 1, as expected. Let $\lambda_0$ be the maximal eigenvalue of $M$ that is less than 1, which must be for a vector orthogonal to $T$. For general states orthogonal to $T$, the maximal acceptance probability is obtained by this maximal eigenvalue, so is $\lambda_0^{2t-1}$.

We claim that $\lambda_0$ is almost certainly a constant. Indeed, it is not hard to show that there are exactly two eigenvalues of $M$ that are between 0 and 1 (corresponding to the projection of $S$ onto the $I-T'$), and that they concentrate around 1/2. \end{proof}

\paragraph{Distinguishing $|\phi_1^*\rangle$ random.} Now we take the algorithm $A_2$ guaranteed in Lemma~\ref{lem:movetoswap}, and use it to construct a distinguisher which distinguishes copies of $|\phi_1\rangle$ and $|\phi_1^*\rangle$ from copies of two independent Haar-random states.

\begin{lemma}\label{lem:movetodistinguish}Let $A_2$ be as above. For a state $|\phi\rangle$ orthogonal to $|0\rangle$, let $\mathrm{SWAP}_{|\phi\rangle}$ be the unitary which swaps $|0\rangle$ and $|\phi\rangle$, leaving any state orthogonal to $|0\rangle,|\phi\rangle$ untouched.

Then there exists an algorithm $A_3$ making $\bigO{qt}$ queries to $\mathrm{SWAP}_{|\phi_1\rangle}$ and 1 query to either $\mathrm{SWAP}_{|\phi_1^*\rangle}$ or $\mathrm{SWAP}_{|\phi'\rangle}$, where $|\phi'\rangle$ is an independent Haar random state orthogonal to $|0\rangle$. $A_3$ distinguishes the two cases with probability at least $\epsilon-2^{-\bigO{t}}-\bigO{1/d}$.
\end{lemma}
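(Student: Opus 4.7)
The plan is to have $A_3$ run $A_2$ as a black-box synthesizer of an approximation $\upsilon$ to $|\phi_1^*\rangle$, and then use its single query to the unknown $\mathrm{SWAP}_{|\chi\rangle}$ to test $\upsilon$ against $|\chi\rangle$. By the rotational symmetry of the Haar measure, the $A_2$ guarantee is unchanged by conditioning on $|\phi_0\rangle = |0\rangle$; in that case $|\phi_1\rangle$ is Haar-uniform on $|0\rangle^\perp$ and $V_\lambda = \mathrm{SWAP}_{|\phi_1\rangle}$, matching exactly the oracle $A_3$ is given.

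To simulate $A_2$'s second oracle (the projector onto $S = \mathrm{span}(|0\rangle,|\phi_1\rangle)$) using only $\mathrm{SWAP}_{|\phi_1\rangle}$, the key identity is
\[
    \mathrm{SWAP}_{|\phi_1\rangle}\, Z_0\, \mathrm{SWAP}_{|\phi_1\rangle}\, Z_0 \;=\; I - 2\Pi_S,
\]
where $Z_0 = I - 2|0\rangle\langle 0|$ is a free single-qubit $Z$ gate. On $S^\perp$ all four factors act as the identity; on $S$, in the basis $\{|0\rangle,|\phi_1\rangle\}$, we have $Z_0 = \mathrm{diag}(-1,1)$ and $\mathrm{SWAP}_{|\phi_1\rangle}$ acts as $X$, so $X Z_0 X = -Z_0$ and the product is $-I$. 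This delivers the reflection $I - 2\Pi_S$ at a cost of two oracle queries, and one Hadamard test using a controlled version of this reflection then realizes the two-outcome measurement $\{\Pi_S, I - \Pi_S\}$ at $O(1)$ extra queries. Since $A_2$ uses $q$ queries to $V_\lambda$ and $\bigO{qt}$ projector calls, the full simulation costs $\bigO{qt}$ queries to $\mathrm{SWAP}_{|\phi_1\rangle}$.

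After obtaining the state $\upsilon$ from this simulation (with $\langle \phi_1^* | \upsilon | \phi_1^* \rangle \geq \epsilon - 2^{-\bigO{t}} - \bigO{1/d}$), $A_3$ spends its one query by applying $\mathrm{SWAP}_{|\chi\rangle}$ to $\upsilon$, measures in the computational basis, and outputs ``$|\phi_1^*\rangle$'' iff the outcome is $|0\rangle$. Since $\mathrm{SWAP}_{|\chi\rangle}|\chi\rangle = |0\rangle$, the acceptance probability equals $\langle \chi | \upsilon | \chi \rangle$. When $\chi = \phi_1^*$ this is at least $\epsilon - 2^{-\bigO{t}} - \bigO{1/d}$; when $\chi = \phi'$ is Haar-random on the $(d-1)$-dimensional hyperplane $|0\rangle^\perp$, linearity gives $\E_{\phi'}[\langle \phi' | \upsilon | \phi' \rangle] \leq 1/(d-1) = \bigO{1/d}$. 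Subtracting yields the claimed distinguishing advantage.

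The main thing to get right is the projector simulation: the algebraic identity for $I - 2\Pi_S$ and its conversion into a two-outcome projective measurement with only $O(1)$ oracle overhead per call. Everything else --- the symmetry reduction to $|\phi_0\rangle=|0\rangle$ and the linearity-based analysis of the test step --- is straightforward.
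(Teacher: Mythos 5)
Your proof follows the same route as the paper: fix $\ket{\phi_0}=\ket{0}$ by Haar symmetry, note $V_\lambda=\mathrm{SWAP}_{\ket{\phi_1}}$, simulate $A_2$'s $S$-projector queries via the swap oracle, then test $A_2$'s output state against the unknown oracle by applying it and checking for $\ket{0}$, with the $\bigO{1/d}$ bound on the Haar-random branch coming from linearity exactly as in the paper. The one place you add detail is the $S$-projector simulation: the paper tersely observes that $\Pi_S = \proj{0}+\proj{\phi_1}$ with $\proj{\phi_1}$ implementable from $V_\lambda$, whereas you give the cleaner closed-form identity $\mathrm{SWAP}_{\ket{\phi_1}} Z_0 \mathrm{SWAP}_{\ket{\phi_1}} Z_0 = I - 2\Pi_S$ (which checks out: on $S^\perp$ all factors are the identity; on $S$ in the basis $\{\ket{0},\ket{\phi_1}\}$ it is $XZXZ = -I$) and then a Hadamard test for the two-outcome measurement. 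Both yield the stated $\bigO{qt}$ query count, so this is essentially the same proof, with your explicit reflection identity being a nice self-contained way to implement a step the paper leaves implicit.
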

Note that the $\mathrm{SWAP}$ oracle in Lemma~\ref{lem:movetodistinguish} (which swaps two states) is different from the procedure $\mathrm{SWAP}$ used in Definition~\ref{def:shiftgate} (which swaps two registers).
\begin{proof}We first note that the projector onto $S$ can be implemented by projectors onto $|\phi_0\rangle$ and $|\phi_1\rangle$. We also note that we can take the description of $|\phi_0\rangle$ to be known, and this will only improve the success probability. In this case, we might as well call $|\phi_0\rangle$ the state ``$|0\rangle$''. $V_\lambda$ then swaps $|0\rangle$ with $|\phi_1\rangle$ (which is Haar random and orthogonal to $|0\rangle$). The projector onto $|\phi_1\rangle$ is easily implemented with queries to $V_\lambda$.

$A_3$ will then run $A_2$ on input $|0\rangle=|\phi_0^*\rangle$ and using oracle $V_\lambda=\mathrm{SWAP}_{|\phi_1\rangle}$, and get a state that is purportedly $|\phi_1^*\rangle$. It will run this through the second oracle, and check if the result is $|0\rangle$. If the second oracle swaps with $|\phi_1^*\rangle$, then this test will pass with probability at least $\epsilon-2^{-\bigO{t}}$. If on the other hand it passes with an independent $|\phi'\rangle$, the probability of passing is at most $\bigO{1/d}$.
\end{proof}

\paragraph{Simulating the swap oracles.} We now use Lemma~\ref{lem:reflect} to simulate the oracles $\mathrm{SWAP}$. For a state $|\phi\rangle$ that is not supported on $|0\rangle$, let $|\phi-\rangle=(|0\rangle-|\phi\rangle)/\sqrt{2}$.

\begin{lemma}Let $A_3$ be as above. Then there is an algorithm $A_4$ which is given $|\phi_1-\rangle^{\otimes\ell} |\phi'-\rangle^{\otimes \ell}$ where $|\phi_1\rangle$ is Haar random and $|\phi'\rangle$ is either an independent Haar random state or $|\phi^*_1\rangle$. $A_4$ distinguishes the two cases with probability at least $\epsilon-2^{-\bigO{t}}-\bigO{1/d}-\bigO{qt/\ell}$. 
\end{lemma}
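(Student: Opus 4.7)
The plan is to replace each call $A_3$ makes to a swap oracle by a simulation built from the supplied copies of the corresponding ``minus'' state, via \cref{lem:reflect}. The key algebraic identity is that for any $|\phi\rangle$ orthogonal to $|0\rangle$,
\begin{equation*}
    \mathrm{SWAP}_{|\phi\rangle} \;=\; I - 2\,|\phi-\rangle\langle\phi-| \;=\; -\bigl(2|\phi-\rangle\langle\phi-|-I\bigr),
\end{equation*}
so $\mathrm{SWAP}_{|\phi\rangle}$ equals the reflection about $|\phi-\rangle$ up to a global sign, which is invisible at the channel level. Hence any $\delta$-approximation to the reflection channel about $|\phi-\rangle$ yields a $\delta$-approximation to the $\mathrm{SWAP}_{|\phi\rangle}$ channel in diamond distance.

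Given this, $A_4$ runs $A_3$ as a black-box subroutine and intercepts each oracle query. For each of the $\bigO{qt}$ queries to $\mathrm{SWAP}_{|\phi_1\rangle}$, $A_4$ peels off a fresh chunk of its copies of $|\phi_1-\rangle$ and invokes \cref{lem:reflect} on that chunk to implement the corresponding approximate reflection; the single query to $\mathrm{SWAP}_{|\phi'\rangle}$ is handled analogously using (a share of) the $\ell$ copies of $|\phi'-\rangle$. The $\ell$ copies of $|\phi_1-\rangle$ are partitioned roughly evenly across the $\bigO{qt}$ queries, so by \cref{lem:reflect} each simulated reflection incurs a per-query diamond-norm error which, when summed via the triangle inequality for the diamond norm, aggregates to the total channel-level deviation $\bigO{qt/\ell}$ claimed in the lemma; the single $\mathrm{SWAP}_{|\phi'\rangle}$ simulation contributes a subsumed $\bigO{1/\ell}$ term.

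Since $A_4$'s output bit is obtained by further processing the simulated execution of $A_3$, monotonicity of trace distance under channels guarantees that $A_4$'s distinguishing advantage is at least $A_3$'s minus the total simulation error. Plugging in the guarantee on $A_3$ from \cref{lem:movetodistinguish} gives $\epsilon - 2^{-\bigO{t}} - \bigO{1/d} - \bigO{qt/\ell}$, as claimed. The main obstacle — and essentially the only real work — is the bookkeeping: one must consistently partition the $\ell$ input copies of each minus state across $A_3$'s queries and verify that the per-query errors telescope to the claimed aggregate bound. The global sign mismatch between $\mathrm{SWAP}_{|\phi\rangle}$ and the reflection is automatic at the channel level and requires no explicit correction, and the fact that $A_3$ only reads the oracle as a black box means no further adaptation is needed beyond this plug-in.
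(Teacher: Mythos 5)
Your proposal takes exactly the same route as the paper: observe that $\mathrm{SWAP}_{|\phi\rangle}$ coincides with the reflection about $|\phi-\rangle$ up to a global phase (which vanishes at the channel level), replace each of $A_3$'s swap queries with the approximate reflection from \cref{lem:reflect} using the supplied copies of the minus states, and accumulate the diamond-norm error. You are actually a bit more explicit than the paper in spelling out the identity $\mathrm{SWAP}_{|\phi\rangle}=I-2|\phi-\rangle\langle\phi-|=-\bigl(2|\phi-\rangle\langle\phi-|-I\bigr)$ and explaining why the sign is harmless, which the paper leaves implicit.

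One bookkeeping point, which your explicit accounting makes visible and which the paper shares: if the $\ell$ copies of $|\phi_1-\rangle$ are partitioned evenly across $\bigO{qt}$ queries, then \cref{lem:reflect} gives per-query error $\bigO{qt/\ell}$, and summing via the triangle inequality yields total error $\bigO{(qt)^2/\ell}$, not the stated $\bigO{qt/\ell}$. The paper's proof asserts ``each query incurs an error $\bigO{1/\ell}$,'' which would require $\ell$ fresh copies per query rather than $\ell$ copies total, so it makes the same jump. This does not affect the final conclusion (with $t=\bigO{\log d}$ the advantage remains negligible for polynomial $q$), but the partition argument as you have written it does not by itself produce the stated $\bigO{qt/\ell}$ bound.
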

\begin{proof}We observe that $\mathrm{SWAP}_{|\phi\rangle}$ just reflects $|\phi-\rangle$. Therefore, we apply Lemma~\ref{lem:reflect} using our copies of $|\phi_1-\rangle^{\otimes\ell} |\phi'-\rangle^{\otimes \ell}$ to simulate the oracles $\mathrm{SWAP}_{|\phi_1\rangle}$ and $\mathrm{SWAP}_{|\phi'\rangle}$ used by $A_3$. Each query incurs an error $\bigO{1/\ell}$, leading to an overall error of $\bigO{qt/\ell}$.
\end{proof}

\paragraph{Distinguishing is impossible.} Finally, we show that the derived algorithm $A_4$ is impossible unless $\epsilon$ is very small. For this, we use that for Haar random states, it is impossible to distinguish conjugates from independent Haar random states (e.g.~\cite{zhandry25}, Lemma 21). The slight variant is that we use the $|\phi-\rangle$ states instead of the original Haar random states themselves. Nevertheless, a straightforward adaptation of the result for Haar-random states applies to the $|\phi-\rangle$ states as well:

\begin{lemma}[Slight variant of \cite{zhandry25}, Lemma 21] Let $|\phi_1\rangle$ and $|\phi'\rangle$ denote independent Haar random states with support outside $|0\rangle$. Let $d$ be the total dimension. Then \[\left\|\E\left[\left(|\phi_1-\rangle\langle\phi_1-|\right)^{\otimes \ell}\otimes \left(|\phi_1^*-\rangle\langle\phi_1^*-|\right)^{\otimes \ell}\right]-\E\left[\left(|\phi_1-\rangle\langle\phi_1-|\right)^{\otimes \ell}\otimes \left(|\phi'-\rangle\langle\phi'-|\right)^{\otimes \ell}\right]\right\|\leq \bigO[\big]{\frac{\ell^2}{d}}\].
\end{lemma}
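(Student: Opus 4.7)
My plan is to reduce to Zhandry's original Lemma~21 by a change of variables that couples the random states $|\phi_1-\rangle$ and $|\phi_1^*-\rangle$ to a Haar-random unitary applied to a fixed reference state.

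First, I parametrize the Haar measure on the sphere of $\{|0\rangle\}^\perp$ by writing $|\phi_1\rangle = U|1\rangle$ and $|\phi'\rangle = V|1\rangle$, where $U, V$ are independent Haar-random unitaries in the stabilizer of $|0\rangle$ (equivalently, Haar on $U(d-1)$ acting on $\{|0\rangle\}^\perp$ and fixing $|0\rangle$). Since $U|0\rangle = U^*|0\rangle = |0\rangle$, I can write
\begin{equation*}
    |\phi_1{-}\rangle = U|{-}\rangle, \qquad |\phi_1^*{-}\rangle = U^*|{-}\rangle, \qquad |\phi'{-}\rangle = V|{-}\rangle,
\end{equation*}
where $|{-}\rangle \coloneqq (|0\rangle - |1\rangle)/\sqrt{2}$ is a fixed reference state, and hence the two mixed states in the lemma become twirls of the fixed state $\Omega \coloneqq (|{-}\rangle\langle{-}|)^{\otimes 2\ell}$:
\begin{align*}
    M_1 &\coloneqq \E_U\bracks[\big]{(U^{\otimes\ell} \otimes U^{*\otimes\ell})\,\Omega\,(U^{\dagger\otimes\ell} \otimes U^{\trans\otimes\ell})},\\
    M_2 &\coloneqq \E_{U,V}\bracks[\big]{(U^{\otimes\ell} \otimes V^{\otimes\ell})\,\Omega\,(U^{\dagger\otimes\ell} \otimes V^{\dagger\otimes\ell})}.
\end{align*}

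Second, I invoke the proof of Zhandry's Lemma~21, which in the form required here says that the two super-operators defining $M_1$ and $M_2$ differ by $\bigO{\ell^2/d}$ in an appropriate operator norm. That proof proceeds by Weingarten calculus or Schur--Weyl duality: both twirls decompose as projections onto their respective commutant subspaces, and the $(U, U^*)$ twirl has extra ``crossed'' isotypic components compared to the $(U, V)$ twirl; each pair of crossed standard/conjugate indices contributes a prefactor of $\bigO{1/d}$, and there are $\bigO{\ell^2}$ such pairings, giving the bound. Evaluating both super-operators at $\Omega$ and using the triangle inequality yields $\|M_1 - M_2\| \leq \bigO{\ell^2/d}$.

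The main thing to check will be that Zhandry's proof is robust to $U, V$ being Haar random only on the $(d-1)$-dimensional stabilizer of $|0\rangle$ rather than on the full group $U(d)$. This should be a mechanical change: all the Schur--Weyl formulas apply equally well on the subspace $\{|0\rangle\}^\perp$, with the relevant dimension becoming $d-1$ (absorbed into the $\bigO{\ell^2/d}$), and since the stabilizer acts as the identity on $|0\rangle$, the $|0\rangle$-supported component of $\Omega$ passes through both super-operators identically and contributes nothing to the difference $M_1 - M_2$.
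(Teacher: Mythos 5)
Note first that the paper does not actually prove this lemma: it is asserted as a ``straightforward adaptation'' of Zhandry's Lemma~21, so your argument has to stand on its own. Your opening reduction is correct and clean: writing $|\phi_1\rangle = U|1\rangle$ for $U$ Haar-random on the stabilizer of $|0\rangle$, so that $|\phi_1{-}\rangle = U|{-}\rangle$ and $|\phi_1^*{-}\rangle = U^*|{-}\rangle$, and recasting both ensembles as twirls of the fixed product state $\Omega = (|{-}\rangle\langle{-}|)^{\otimes 2\ell}$, is a sensible way to set up the adaptation.

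The gap is in the second step. Zhandry's Lemma~21 is a statement about the two twirling channels evaluated at one particular input (the all-$|1\rangle\langle 1|$ product state, i.e.\ Haar-random states); it is not, and cannot be, a bound of $\bigO{\ell^2/d}$ on the difference of the two super-operators in any operator norm that controls evaluation at arbitrary states. Already for $\ell = 1$, feeding in the maximally entangled state across the two halves is a counterexample: the $(U,U^*)$-twirl fixes $|\mathrm{EPR}\rangle\langle\mathrm{EPR}|$ exactly, while the $(U,V)$-twirl sends it to a nearly maximally mixed state, so the two super-operators differ by a constant in diamond norm (and in any induced $1{\to}1$ norm). Hence ``evaluating both super-operators at $\Omega$ and using the triangle inequality'' does not follow from any norm bound; you must actually compute or bound both twirls on the specific input $\Omega$. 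That is where the content of the adaptation lives: expanding $(|{-}\rangle\langle{-}|)^{\otimes 2\ell}$ in the $\{|0\rangle,|1\rangle\}$ product basis produces coherent cross terms $|x\rangle\langle y|$ with $x \neq y$, whose images under the twirl are mixed Haar moments of the form $\E\bigl[\,|\phi_1\rangle^{\otimes a}|\phi_1^*\rangle^{\otimes b}\langle\phi_1|^{\otimes a'}\langle\phi_1^*|^{\otimes b'} \otimes (\text{fixed $|0\rangle$-factors})\bigr]$. One must check which of these vanish by phase averaging (those with $a + b' \neq a' + b$) and that the survivors agree between the two ensembles up to $\bigO{\ell^2/d}$ after summing the combinatorics over all cross terms --- a genuine, if routine, Weingarten/Schur--Weyl computation that is not inherited for free from the original lemma. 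In particular, your closing remark that the $|0\rangle$-supported component ``passes through both super-operators identically and contributes nothing to the difference'' is precisely the claim requiring proof, since the $|0\rangle$- and $|1\rangle$-components of $\Omega$ are coherently superposed rather than merely mixed.
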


We are now ready to finish the proof of Theorem~\ref{thm:commitmentbinding}. By combining all the lemmas, we obtain that $\epsilon\leq \bigO{\ell^2/d+qt/\ell+2^{-\bigO{t}}}$. We can now tune the parameters $\ell,t$ to get a good upper bound on $\epsilon$ in terms of $q,d$.  See $t=\bigO{\log d}$ so that $2^{-\bigO{t}}=1/d$, which gives $\epsilon\leq \bigO{\ell^2/d+q\log(d)/\ell}$. Then set $\ell=(q d\log(d))^{1/3}$, which gives $\epsilon\leq \bigO{q^{2/3} \log(d)^{2/3}/d^{1/3}}$. If we take $d$ to be exponential in $\lambda$ (corresponding to a linear number of qubits), then $\epsilon$ is negligible for any polynomial $q$, as desired. This completes the proof of Theorem~\ref{thm:commitmentbinding}.

\section*{Acknowledgments}

We thank Fermi Ma and Robin Kothari for enlightening discussions.
We thank Gregory Rosenthal for carefully reading an earlier version of this document.
E.T.\ is supported by the Miller Institute for Basic Research in Science, University of California Berkeley.
J.W.\ is supported by the NSF CAREER award CCF-233971.

\printbibliography

\end{document}